\numberwithin{equation}{section}
\theoremstyle{plain}
\newtheorem{theorem}{Theorem}
\newtheorem{lemma}[theorem]{Lemma}
\newtheorem{proposition}[theorem]{Proposition}
\theoremstyle{definition}
\theoremstyle{remark}
\newtheorem{remark}{Remark}
\renewcommand{\div}{\operatorname{div}}
\newcommand{\R}{\mathbb{R}}
\newcommand{\D}{\mathcal{D}}
\renewcommand{\vector}[2]{\begin{pmatrix} #1 \\ #2 \end{pmatrix}}
\newcommand{\dist}{\operatorname{dist}}
\begin{document}

\title{Stationary Vacuum Black Holes in 5 Dimensions}

\author[Khuri]{Marcus Khuri}
\address{Department of Mathematics\\
Stony Brook University\\
Stony Brook, NY 11794, USA}
\email{khuri@math.sunysb.edu}

\author[Weinstein]{Gilbert Weinstein}
\address{Physics Department and Department of Mathematics\\
Ariel University\\
Ariel, 40700, Israel}
\email{gilbertw@ariel.ac.il}

\author[Yamada]{Sumio Yamada}
\address{Department of Mathematics\\
Gakushuin University\\
Tokyo 171-8588, Japan}
\email{yamada@math.gakushuin.ac.jp}

\thanks{M. Khuri acknowledges the support of
NSF Grants DMS-1308753 and DMS-1708798. S. Yamada acknowledges the support of JSPS grants KAKENHI 24340009 and 17H01091.}

\begin{abstract}
We study the problem of asymptotically flat bi-axially symmetric stationary solutions of the vacuum Einstein equations in $5$-dimensional spacetime. In this setting, the cross section of any connected component of the event horizon is a prime $3$-manifold of positive Yamabe type, namely the $3$-sphere $S^3$, the ring $S^1\times S^2$, or the lens space $L(p,q)$. The Einstein vacuum equations reduce to an axially symmetric harmonic map with prescribed singularities from $\R^3$ into the symmetric space $SL(3,\R)/SO(3)$. In this paper, we solve the problem for all possible topologies, and in particular the first candidates for smooth vacuum non-degenerate black lenses are produced. In addition, a generalization of this result is given in which the spacetime is allowed to have orbifold singularities. We also formulate conditions for the absence of conical singularities which guarantee a physically relevant solution.
\end{abstract}
\maketitle

\section{Introduction}
\label{sec1} \setcounter{equation}{0}
\setcounter{section}{1}

A result of Hawking \cite{Hawking} shows that a cross section of any connected component of the event horizon in a $4$-dimensional asymptotically flat stationary spacetime satisfying the dominated energy condition, has positive Euler characteristic, and hence must be topologically a $2$-sphere. The conclusion also holds without the stationarity condition provided one replaces a cross section of the event horizon with a stable apparent horizon. These results were generalized by Galloway and Schoen \cite{GallowaySchoen} to show that a cross section of any connected component of the event horizon in an $n$-dimensional asymptotically flat stationary spacetime is an $(n-2)$-dimensional Riemannian manifold with positive Yamabe invariant. In dimension $5$ the additional hypothesis of bi-axial symmetry restricts the possible topologies further, so that the only admissible topologies are $S^3$, $S^1\times S^2$, and $L(p,q)$ \cite{HollandsYazadjiev}. Explicit examples of stationary vacuum bi-axisymmetric solutions
with horizon topology $S^3$ and $S^1\times S^2$ have been constructed by Myers-Perry (sphere) \cite{MyersPerry},
Emparan-Reall (singly spinning ring) \cite{EmparanReall}, and Pomeransky-Sen'kov (doubly spinning ring) \cite{PomeranskySenkov}.
In particular, stationary vacuum black holes are not determined solely by their mass and angular momenta in higher dimensions. That is, the no-hair conjecture fails, as
there exist black ring solutions having the same
mass and angular momenta as a Myers-Perry black hole.
Nonetheless the underlying result supporting the validity of the no-hair theorem in 4-dimensions, a uniqueness theorem for harmonic maps with prescribed singularities into a nonpositively curved target, still holds in higher dimensions. In particular any bi-axially symmetric stationary vacuum solution is determined by a finite set of parameters. It is the purpose of this paper to establish a partial converse: given any admissible set of parameters, there is a unique solution of the reduced equations. Whether this solution of the reduced equations then generates a physical spacetime solution then depends on the absence of conical singularities on the axes.

The axes correspond to the locus where a closed-orbit Killing field degenerates, and in the domain $\R^3$ of the harmonic map these are identified by a number of intervals on the $z$-axis called \emph{axis rods}. The axis rods are separated by intervals corresponding to horizons, and by points which are referred to as \emph{corners}. Note that this precludes the case of degenerate horizons, in which horizons are represented by points instead of intervals. In addition, the end points of the horizon rods are named \emph{poles}. Denote by $\Gamma$ the $z$-axis with the interior of all the horizon rods removed, and let $\{p_l\}$ represent the corners and poles. Note that there are always two semi-infinite axes, labeled north and south. We assign a pair of relatively prime integers $(m_l,n_l)$ called the \emph{rod structure} to each axis rod $\Gamma_l$, such that the north and south semi-infinite axes are assigned the rod structures $(1,0)$ and $(0,1)$, respectively. This pair of numbers indicates which linear combination of rotational Killing fields vanishes on the associated rod.
If $(m_l,n_l)$ and $(m_{l+1},n_{l+1})$ are the rod structures assigned to two consecutive axis rods separated by a corner, then the \emph{admissibility condition} \cite{HollandsYazadjiev} is
\begin{equation} \label{admissibility0}
\det\begin{pmatrix} m_l & n_l \\ m_{l+1} & n_{l+1} \end{pmatrix} = \pm 1.
\end{equation}
This condition is to prevent orbifold singularities at the corners \cite{Evslin}.
Horizon rods are assigned the rod structure $(0,0)$. Finally, assign to each axis rod $\Gamma_l$ a constant $\mathbf{c}_l\in\R^2$, the \emph{potential constant}. The difference between the values of these constants on two axes adjoining a horizon rod is proportional to the angular momenta of this horizon component, as calculated by Komar integrals. A \textit{rod data set} $\D$ consists of the corners and poles $\{p_l\}$, the rod structures $\{(m_l,n_l)\}$, and the potential constants $\{\mathbf{c}_l\}$ which are assumed not to vary between two consecutive rods separated by a corner. This data determines uniquely the prescribed singularities of the harmonic map $\varphi\colon\R^3\setminus\Gamma\to SL(3,\R)/SO(3)$ as described more precisely in the Section \ref{sec4}, and will be referred to as admissible if it satisfies \eqref{admissibility0} at each corner. For technical reasons an additional \textit{compatibility condition} will be imposed to aid the existence result. This condition only applies when two consecutive corners are present. Let $p_{l-1}$ and $p_{l}$ be two consecutive corners with axis rods $\Gamma_{l-1}$ above $p_{l-1}$, $\Gamma_{l}$ between $p_{l-1}$ and $p_{l}$, and $\Gamma_{l+1}$ below $p_{l}$.
Then the compatibility condition states that the first component of the rod structures for $\Gamma_{l-1}$ and $\Gamma_{l+1}$ have opposite sign if both are nonzero
\begin{equation}\label{compatibilitycondition}
m_{l-1}m_{l+1}\leq 0,
\end{equation}
whenever the determinants \eqref{admissibility0} for the two corners $p_{l-1}$ and $p_{l}$ are both $+1$. Note that this latter requirement on the determinants may always be achieved by multiplying each component of the rod structures for $\Gamma_{l-1}$ and $\Gamma_{l}$ by $-1$ if necessary; this is an operation which does not change the properties of a rod.

In order to determine the physical relevance of a solution,
define on each bounded axis rod $\Gamma_l$ a function
$b_l$ to be the logarithm of the limiting ratio between the length of the closed orbit of the Killing field degenerating on $\Gamma_l$, and $2\pi$ times the radius from $\Gamma_l$ to this orbit. It turns out that $b_l$ is constant on $\Gamma_l$. The absence of a conical singularity on $\Gamma_l$ is the \emph{balancing condition} $b_l=0$.

An asymptotically flat stationary vacuum spacetime will be referred to as \textit{well-behaved} if the
orbits of the stationary Killing field are complete, the
domain of outer communication (DOC) is globally hyperbolic, and the DOC
contains an acausal spacelike connected hypersurface which is asymptotic to the canonical slice in the asymptotic end and whose boundary
is a compact cross section of the horizon. These assumptions are consistent with those of \cite{ChruscielCosta}, and are used for the reduction of the stationary vacuum equations. The main result may now be stated as follows.

\begin{theorem} \label{main}\par
\noindent
\begin{enumerate}[(i)]
\item
A well-behaved 5-dimensional asymptotically flat, stationary, bi-axially symmetric solution of the vacuum Einstein equations without degenerate horizons gives rise to a harmonic map $\varphi\colon\R^3\setminus\Gamma\to SL(3,\R)/SO(3)$ with prescribed singularities associated with an admissible rod data set $\D$, and satisfying $b_l=0$ on all bounded axis rods.
\item
Conversely,
given an admissible rod data set $\D$ satisfying the compatibility condition \eqref{compatibilitycondition}, there is a unique harmonic map $\varphi\colon\R^3\setminus\Gamma\to SL(3,\R)/SO(3)$ with prescribed singularities on $\Gamma$ corresponding to $\D$.
\item
A well-behaved 5-dimensional asymptotically flat, stationary, bi-axially symmetric solution of the vacuum Einstein equations without degenerate horizons can be constructed from $\varphi$ if and only if the resulting metric coefficients are sufficiently smooth across $\Gamma$ and $b_l=0$ on any bounded axis rod.
\end{enumerate}
\end{theorem}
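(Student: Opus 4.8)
The argument splits along the three assertions. \emph{Part (i).} The plan is to invoke the orbit-space reduction under the well-behavedness hypotheses, which match those of \cite{ChruscielCosta}: the quotient of the domain of outer communication by the stationary bi-axisymmetric $\R\times U(1)\times U(1)$ action is a simply connected manifold with boundary and corners which, in Weyl--Papapetrou coordinates, is the closed half-plane, and which we regard as axisymmetric $\R^3$ with the $z$-axis as the distinguished boundary. On that axis the Gram matrix of the three Killing fields degenerates; smoothness of the spacetime metric at an interior point of an axis rod forces this degeneration to be modeled on the vanishing of a single integral rotational generator, giving a relatively prime pair $(m_l,n_l)$, while smoothness at a corner forces \eqref{admissibility0} and its failure produces an orbifold point as in \cite{Evslin}. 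The reduced vacuum equations are then exactly the axially symmetric harmonic map system for $\varphi\colon\R^3\setminus\Gamma\to SL(3,\R)/SO(3)$ with prescribed singularities encoded by the rod data $\D$, together with a decoupled first-order system for the conformal factor; and $b_l=0$ on each bounded axis rod is precisely the absence of a conical defect along the regular axis, which is built into being well-behaved.

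\emph{Part (ii).} This is the analytic heart. For uniqueness the plan is to use that $SL(3,\R)/SO(3)$ is a Cartan--Hadamard manifold: if $\varphi_1,\varphi_2$ are two solutions with the same prescribed singularities, then $\dist(\varphi_1,\varphi_2)$ is nonnegative and subharmonic on $\R^3\setminus\Gamma$, cancellation of the leading singular behavior along $\Gamma$ together with asymptotic flatness shows it is bounded and tends to $0$ at infinity and at $\Gamma$, and a maximum principle adapted to the prescribed singularities then forces $\varphi_1=\varphi_2$; this is the uniqueness statement alluded to in the Introduction. For existence the plan is: (a) construct an explicit model map $\varphi_0$ carrying exactly the prescribed asymptotics along every rod, near every pole, and near every corner, built from the $(m_l,n_l)$ and potential constants $\mathbf{c}_l$; (b) solve the Dirichlet problem for harmonic maps into the nonpositively curved target on an exhaustion of $\R^3\setminus\Gamma$ by bounded axisymmetric domains with boundary data $\varphi_0$, using the classical existence theory for harmonic maps into nonpositively curved manifolds; (c) obtain uniform interior estimates and, crucially, a uniform bound on $\dist(\varphi_k,\varphi_0)$ along the exhaustion via subharmonicity and barrier arguments, yielding a convergent subsequence; and (d) verify that the limit retains the prescribed singularities, so it is the desired $\varphi$.

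The main obstacle, and the role of the compatibility condition \eqref{compatibilitycondition}, lies in step (c) at a pair of consecutive corners $p_{l-1},p_l$. There the naive model map has infinite renormalized energy and $\dist(\varphi_k,\varphi_0)$ admits no global superharmonic majorant, so the exhaustion could fail to converge. The plan is to show that when $m_{l-1}m_{l+1}\le 0$ one may redefine the model map on the short axis rod $\Gamma_l$ between the two corners so that $\dist(\varphi_k,\varphi_0)$ is dominated by a bounded superharmonic function supported near $\Gamma_l$; heuristically, the opposite signs of $m_{l-1}$ and $m_{l+1}$ make the geodesic rays in the target determined by the two corner singularities point consistently, so this correction can be carried out with finite energy. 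This localization is exactly what makes the limit argument go through, and is the technical reason \eqref{compatibilitycondition} is imposed.

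\emph{Part (iii).} Given $\varphi$ from Part (ii), the plan is to run the Weyl--Papapetrou reconstruction: the $3\times3$ block of metric coefficients is read directly off $\varphi$, and the remaining conformal factor is recovered by line integration of a one-form whose closedness is equivalent to the harmonic map equation, with the constant of integration fixed by asymptotic flatness. One then checks that this produces a smooth Lorentzian Ricci-flat metric on the domain of outer communication: extension across the interior of each horizon rod is automatic, extension across $\Gamma$ holds precisely when the metric coefficients are sufficiently smooth there (the analytic input that cannot be dispensed with, since $\varphi$ itself is singular on $\Gamma$), and regularity of the axis away from conical defects is exactly $b_l=0$ on every bounded axis rod, with $b_l$ as defined just before the statement. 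Completeness of the stationary orbits, global hyperbolicity of the DOC, existence of the required acausal slice, and the asymptotics all follow from the construction, so the reconstructed spacetime is well-behaved; this yields the asserted equivalence and, combined with Parts (i) and (ii), Theorem \ref{main}.
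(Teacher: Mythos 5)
Your overall architecture coincides with the paper's: reduction to the harmonic map, a model map plus exhaustion by Dirichlet problems with maximum-principle control of $d_{\mathbf{X}}(\varphi_\varepsilon,\varphi_0)$ via a barrier built from the decay of the model map's tension, compactness, uniqueness from subharmonicity of the distance across the codimension-2 set $\Gamma$, and reconstruction by quadrature. The genuine gap is at the step you compress into ``obtain uniform interior estimates.'' For a rank-1 target such as $\H^2$ the strictly negative curvature supplies these, but $SL(3,\R)/SO(3)$ has rank 2 and only nonpositive curvature, and the paper's Section \ref{sec6} exists precisely to fill this hole: a horospherical coordinate system $(r_1,r_2,\theta)$ is built from the Iwasawa decomposition so that the flat coordinates $r_i$ composed with a harmonic map are subharmonic (Lemma \ref{lemmaQ}, proved via growth estimates for stable Jacobi fields orthogonal to a maximal flat, along which the relevant sectional curvatures \emph{are} uniformly negative). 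This converts the $L^\infty$ bound $\sqrt{1+d_{\mathbf{X}}(\varphi_\varepsilon,\varphi_0)^2}\leq 1+w$ into a local energy bound (Theorem \ref{energybound}), after which the Bochner formula and Moser iteration give pointwise gradient bounds and the bootstrap closes. Without this, or an explicit substitute such as a Cheng-type interior gradient estimate for harmonic maps into Cartan--Hadamard targets with bounded image, your step (c) does not go through.

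You also mislocate the compatibility condition. It is not used in the convergence of the exhaustion, and the paper never works with renormalized energy; it enters only in your step (a), the construction of the model map with \emph{bounded tension} (Theorem \ref{model}). Near two consecutive corners the model matrix must interpolate between $h_NF_0h_N^t$ and $h_SF_0h_S^t$; writing the interpolation as $\mathbf{k}(z)F_0\mathbf{k}(z)^t$ with $\mathbf{k}=h_S^{-1}h_N$ at one end, one needs the lower-right entry $-r(mq-np)/m(ps-qr)$ to be positive so that the cut-off $\lambda(z)$ interpolating it to $1$ can be chosen strictly positive, keeping $F$ positive definite and $\operatorname{div}(F^{-1}\nabla F)$ bounded. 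That sign condition is exactly \eqref{gcompatibilitycondition0}, which reduces to \eqref{compatibilitycondition} under admissibility; your heuristic about geodesic rays ``pointing consistently'' does not capture this. Two smaller points: in the uniqueness argument you assert $d_{\mathbf{X}}(\varphi_1,\varphi_2)\to 0$ at $\Gamma$, which is neither established nor needed --- boundedness plus the codimension-2 removability of $\Gamma$ suffices for the weak maximum principle; and your step (d), that the limit retains the prescribed rod data, is itself a nontrivial argument carried out in the paper via the Mazur identity (Theorem \ref{rodstructharmonic}).
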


The reduction of the Einstein vacuum equations to a harmonic map is well known \cites{Harmark,Maison} and follows closely the 4-dimensional case. However, there are several new difficulties associated with the analysis of the resulting problem. First, even without angular momenta the problem is nonlinear, in contrast to the linear structure present in the static 4D setting. This makes the construction of a \emph{model map} prescribing the singular behavior near $\Gamma$
much more delicate, whereas in the 4D case the superposition of Schwarzschild solutions is sufficient. Next, the target $SL(3,\R)/SO(3)$ is a rank 2 symmetric space with nonpositive sectional curvature, rather than rank 1 with negative sectional curvature in 4D.  We recall that the theory of harmonic maps into rank 1 symmetric spaces, in particular real hyperbolic space $\mathbb{H}^n$, has been extensively investigated e.g. \cite{Schoen, LiTam}, yet comparatively little is known for the cases of higher rank targets.  These properties of the target hyperbolic space $\mathbb{H}^2= SL(2,\R)/SO(2)$ in dimension four played a central role in obtaining a priori estimates to prove existence, and without these properties in the 5D case new techniques must be developed. Furthermore, in higher dimensions there is an abundance of possible rod structures, and they must obey an admissibility condition \eqref{admissibility0} not present in four dimensions. Finally, the study of conical singularities and their formulation as the balancing condition $b_i=0$, while similar to the 4D case, requires a more precise analysis.

Several explicit solutions of these equations and related ones have previously been found. As mentioned above, the Myers-Perry black hole \cite{MyersPerry} generalizes the Kerr black hole to 5-dimensions, and is a 3-parameter family of solutions with spherical $S^{3}$ horizon topology. Emparan and Reall \cite{EmparanReall} found the first example with nontrivial topology, namely a family of black ring solutions with an $S^1\times S^2$ horizon and one angular momentum. These were later generalized by Pomeransky-Sen'kov \cite{PomeranskySenkov} to a full 3-parameter family with two angular momenta. A multiple horizon solution with two components consisting of an $S^3$ surrounded by an $S^1\times S^2$, referred to as black saturn, was constructed by Elvang and Figueras \cite{ElvangFigueras}. In this solution both the sphere and ring rotate only in one plane which is associated with the $S^1$ direction of the ring.
Further multiple horizon solutions include the dipole black rings (or di-rings) \cites{EvslinKrishnan,IguchiMishima} consisting of two concentric singly spinning rings rotating in the same plane, and the bicycling black rings (or bi-rings) \cites{ElvangRodriguez,Izumi} consisting of two singly spinning rings rotating in orthogonal planes. In the minimal supergravity setting, Kunduri and Lucietti \cite{LuciettiKunduri} found the first examples of regular black holes having a lens space topology $\mathbb{RP}^3=L(2,1)$. These were generalized by Tomizawa and Nozawa to more general lens topology $L(p,1)$ in \cite{TomizawaNozawa}. Both of these black lens solutions are supersymmetric and hence extremal. It is an important open problem to
find regular vacuum black holes with lens topology. In this direction Chen and Teo \cite{ChenTeo} found vacuum black lenses via the inverse scattering method, however their solutions either possess conical singularities or have a naked singularity.
A disadvantage of the methods used to construct the above examples
is that they cannot produce all possible regular solutions. In contrast, the PDE approach used here generates all candidates with an admissible/compatible rod structure, where the only obstruction is the possibility of conical singularities on the bounded components of the axes. Furthermore, the variety of black holes that may be constructed from admissible rod data which also satisfy the compatibility condition is vast. In particular, multiple and single component black lenses $L(p,q)$ are possible, for arbitrary relatively prime $p$ and $q$, as is shown in Proposition \ref{lensrod} of Section \ref{sec4}.

The existence portion of Theorem \ref{main} may be generalized by forgoing the admissibility condition \eqref{admissibility0}. This requires instead of \eqref{compatibilitycondition} a \textit{generalized compatibility condition}
\begin{equation}\label{gcompatibilitycondition0}
m_{l-1}m_{l+1} \det\begin{pmatrix} m_{l-1} & n_{l-1} \\ m_{l} & n_{l} \end{pmatrix}
\det\begin{pmatrix} m_l & n_l \\ m_{l+1} & n_{l+1} \end{pmatrix} \leq 0,
\end{equation}
which is used in the construction of a model map. Note that if \eqref{admissibility0} is satisfied then \eqref{gcompatibilitycondition0} reduces to \eqref{compatibilitycondition}. However, without the admissibility condition orbifold singularities at corner points will be present.

\begin{theorem}\label{main2}
Given a rod data set $\D$ satisfying the generalized compatibility condition \eqref{gcompatibilitycondition0}, there is a unique harmonic map $\varphi\colon\R^3\setminus\Gamma\to SL(3,\R)/SO(3)$ with prescribed singularities on $\Gamma$ corresponding to $\D$. From this map a well-behaved 5-dimensional asymptotically flat, stationary, bi-axially symmetric solution of the vacuum Einstein equations without degenerate horizons can be constructed having orbifold singularities at the corners if and only if the resulting metric coefficients are sufficiently smooth across $\Gamma$ and $b_l=0$ on any bounded axis rod.
\end{theorem}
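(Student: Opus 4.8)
The plan is to follow the scheme used to establish parts (ii) and (iii) of Theorem \ref{main}, isolating the single place where the admissibility hypothesis \eqref{admissibility0} was used and replacing it with an argument relying only on the generalized compatibility condition \eqref{gcompatibilitycondition0}. The uniqueness assertion is immediate: the uniqueness theorem for harmonic maps with prescribed singularities into the nonpositively curved target $SL(3,\R)/SO(3)$, referred to in the introduction, makes no use of admissibility, depending only on the geometry of the target and on the fact that two maps carrying the same rod data set $\D$ have the same prescribed singular behavior along $\Gamma$ and the same asymptotics. Hence the substance is the existence of $\varphi$, followed by the conditional reconstruction of the spacetime.

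For existence I would construct a \emph{model map} $\varphi_{\D}\colon\R^3\setminus\Gamma\to SL(3,\R)/SO(3)$ adapted to $\D$, identical to the admissible-case construction away from the corners, where each axis rod $\Gamma_l$ contributes an explicit singular building block encoding the vanishing of the Killing combination prescribed by $(m_l,n_l)$ and the potential constant $\mathbf{c}_l$. The only change is near a corner $p_l$ between $\Gamma_l$ and $\Gamma_{l+1}$: an integral change of basis of the torus fibre taking $(m_l,n_l)$ to $(1,0)$ sends $(m_{l+1},n_{l+1})$ to a pair whose second entry has absolute value $|d_l|$, where $d_l$ is the corner determinant in \eqref{admissibility0}; when $|d_l|=1$ the local picture is the smooth corner of the admissible case, while for $|d_l|>1$ it is an orbifold corner, consistent with \cite{Evslin}. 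The role of \eqref{gcompatibilitycondition0} is that, at a pair of consecutive corners $p_{l-1},p_l$, it forces the suitably normalized first components $m_{l-1}$ and $m_{l+1}$ of the outer rods to have opposite sign, which is exactly the compatibility needed to glue the local corner models onto the intermediate-rod block so that $\dist(\varphi_{\D},\varphi)$ stays bounded with the correct decay at infinity and toward $\Gamma$. With $\varphi_{\D}$ so constructed, the a priori $C^0$ estimate on $\dist(\varphi_{\D},\varphi)$ and the existence argument of Theorem \ref{main}(ii) apply without change.

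For the spacetime, from $\varphi$ one reconstructs the metric by inverting the standard reduction \cites{Harmark,Maison}: the entries of $\varphi$ supply the Gram matrix of the two rotational Killing fields together with the twist one-forms, the twist potentials are recovered by integrating closed one-forms whose closedness is the harmonic map system, and the remaining conformal factor is obtained by quadrature from a first-order system whose integrability again follows from harmonicity. Asymptotic flatness follows from the $\D$-dictated asymptotics of $\varphi$; away from the corners the prescribed behavior on each rod is precisely that of a metric closing up smoothly across the axis, which is why the smoothness of the metric coefficients across $\Gamma$ is kept as a hypothesis; at each corner the local model is the orbifold $\R^{1,1}\times(\R^4/\mathbb{Z}_{|d_l|})$, which reduces to the genuinely smooth situation of Theorem \ref{main} exactly when every $|d_l|=1$; and the codimension-two conical singularities along the bounded rods $\Gamma_l$ are absent if and only if the constant $b_l$ vanishes. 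Conversely, any well-behaved bi-axisymmetric vacuum solution with at worst orbifold singularities at the corners reduces, by the argument behind Theorem \ref{main}(i), to a harmonic map with prescribed singularities carried by a rod data set; uniqueness identifies that map with $\varphi$, and smoothness of the metric coefficients across $\Gamma$ together with $b_l=0$ on the bounded rods are then necessary.

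The main obstacle is the construction of the model map and the accompanying a priori estimate at the orbifold corners. Since $SL(3,\R)/SO(3)$ has rank $2$ and admits no convenient closed-form parametrization — unlike $\H^2$ in the four-dimensional case — the model map cannot be written down globally but must be patched from the rod blocks and the local corner models, after which one must verify that $\dist(\varphi_{\D},\varphi)$ is subharmonic off $\Gamma$, or admits global sub- and supersolution barriers, with the right boundary decay. The sign condition \eqref{gcompatibilitycondition0} is precisely what makes the patching across two consecutive corners consistent, and checking this, rather than any later step, is where I expect the essential difficulty to lie.
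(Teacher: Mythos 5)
Your proposal follows the paper's own route: the model map of Theorem \ref{model} is already constructed assuming only the generalized compatibility condition \eqref{gcompatibilitycondition0} rather than admissibility, and the energy estimates of Section \ref{sec6}, the existence/uniqueness argument of Section \ref{sec7}, and the quadrature reconstruction of Section \ref{sec8} then apply verbatim, so Theorem \ref{main2} differs from Theorem \ref{main} only in the topological interpretation of the corners (orbifold points when a determinant exceeds one in absolute value). The single step you flag as the essential difficulty --- patching across consecutive corners --- is handled in the paper not by integral changes of basis but by the non-integral, non-unimodular matrices $h_N$, $h_S$ of \eqref{hnhs} (legitimate because $\omega$ is constant there; cf.\ Remark \ref{integer}), with \eqref{gcompatibilitycondition0} entering precisely to make the two endpoint values of the interpolating entry $\lambda(z)$ in the transition matrix $\mathbf{k}(z)$ strictly positive, so that $\mathbf{k}(z)$ remains invertible and the tension stays bounded throughout the transition.
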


This result has been generalized in \cite{KhuriWeinsteinYamada}  to include the asymptotically Kaluza-Klein and asymptotically locally Euclidean cases, in which cross sections at infinity are $S^1 \times S^2$ and quotients of $S^3$ respectively.

The organization of this paper is as follows.
In Section \ref{sec2} we review the reduction of the Einstein vacuum equations, in the bi-axially symmetric stationary setting, to a harmonic map having the symmetric space $SL(3,\R)/SO(3)$ as target. Relevant aspects of the geometry of this symmetric space are then discussed in Section \ref{sec3}.
In Section \ref{sec4} a detailed analysis of rod structures and the hypotheses associated with them is given. The model map is constructed in Section \ref{sec5}, and existence and uniqueness for the harmonic map problem is proven in Section \ref{sec7} using energy estimates established in Section \ref{sec6}. Finally in Section \ref{sec8} it is shown how the desired spacetime is produced from the harmonic map, and regularity issues are discussed. An appendix is included in order to give a topological characterization of corners.

\medskip

\textbf{Acknowledgements.}
The authors thank the Erwin Schr\"odinger International Institute for Mathematics and Physics and the organizers of its ``Geometry and Relativity'' program, where portions of this paper were written. The third author also thanks Koichi Kaizuka for useful conversations concerning the geometry of symmetric spaces.

\section{Dimensional Reduction of the Vacuum Einstein Equations} \label{setup}
\label{sec2} \setcounter{equation}{0}
\setcounter{section}{2}

Let $\mathcal{M}^5$ be the domain of outer communication for a well-behaved asymptotically flat, stationary vacuum, bi-axisymmetric spacetime. In particular its isometry group admits
$\mathbb{R}\times U(1)^2$ as a subgroup in which the $\mathbb{R}$-generator $\xi$ (time translation) is timelike in the asymptotic region, and the $U(1)^2$-generators $\eta^{(i)}$, $i=1,2$ yield spatial rotation.
Since the three generators for the isometry subgroup commute, they may be expressed as coordinate vector fields $\xi=\partial_{t}$ and $\eta^{(i)}=\partial_{\phi^{i}}$.
Moreover by abusing notation so that the same symbols denote dual covectors it holds that
\begin{equation}\label{aaa}
\star\left(\xi\wedge\eta^{(1)}\wedge\eta^{(2)}\wedge d\xi\right)
=\star\left(\xi\wedge\eta^{(1)}\wedge\eta^{(2)}\wedge d\eta^{(1)}\right)
=\star\left(\xi\wedge\eta^{(1)}\wedge\eta^{(2)}\wedge d\eta^{(2)}\right)
=0,
\end{equation}
where $\star$ denotes the Hodge star operation. To see this, observe that the vacuum
equations imply that the exterior derivative of the three quantities in \eqref{aaa} vanishes, and since these functions vanish on the axis in the asymptotically flat end they must vanish everywhere. Therefore the Frobenius theorem applies to show that the 2-plane distribution orthogonal to the three Killing vectors is integrable. We may then take coordinates on one of these resulting 2-dimensional orbit manifolds, and Lie drag them to get a system of coordinates such that the spacetime metric decomposes in the following way
\begin{equation}
g=\sum_{a,b=1}^{3}q_{ab}(x)dy^{a}dy^{b}+\sum_{c,d=4}^{5}h_{cd}(x)dx^{c}dx^{d},
\end{equation}
where $y=(\phi^1,\phi^2,t)$. The fiber metric may be expressed by
\begin{equation}\label{fibermetric}
q=f_{ij}(d\phi^{i}+v^{i}dt)(d\phi^{j}+v^{j}dt)-f^{-1}\rho^2 dt^{2},
\end{equation}
for some functions $v^i$ where $f=\det f_{ij}$ and $\rho^2=-\det q_{ab}$. It is shown in \cites{Chrusciel,ChruscielCosta} that the determinant of the fiber metric is nonpositive, and
the vacuum equations imply that $\rho$ is harmonic with respect to the metric $fh$, since
\begin{equation}
\Delta_{fh}\rho=\rho^{-1}R_{tt}-\rho f^{-1}f^{ij} R_{ij}=0.
\end{equation}
From this it can be shown \cites{Chrusciel,ChruscielCosta} that $\rho$ is a well-defined coordinate function on the quotient
$\mathcal{M}^{5}/\left[\mathbb{R}\times U(1)^2\right]$ away from the poles, that is $|\nabla\rho|\neq 0$. Since the orbit space is simply connected \cite{HollandsYazadjiev1} there is a globally defined harmonic conjugate function $z$, which together with $\rho$ yields an isothermal coordinate system so that
\begin{equation}
fh=e^{2\sigma}(d\rho^2 +dz^2),
\end{equation}
for some function $\sigma=\sigma(\rho,z)$. We now have the canonical Weyl-Papapetrou expression for the spacetime metric
\begin{equation} \label{metric}
g=f^{-1}e^{2\sigma}(d\rho^2+dz^2)-f^{-1}\rho^2 dt^2
+f_{ij}(d\phi^{i}+v^{i}dt)(d\phi^{j}+v^{j}dt).
\end{equation}

Let
\begin{equation}
g_{3}=e^{2\sigma}(d\rho^2+dz^2)-\rho^2 dt^2,\quad\quad\quad
A^{(i)}=v^{i}dt,
\end{equation}
then
\begin{equation}
g=f^{-1}g_{3}+f_{ij}(d\phi^{i}+A^{(i)})(d\phi^{j}+A^{(j)}).
\end{equation}
This represents a Kaluza-Klein reduction with 2-torus fibers. In this setting the vacuum Einstein equations yield a 3-dimensional version of Einstein-Maxwell theory, with the `Maxwell equations' given by
\begin{equation}\label{maxwell}
d(ff_{ij}\star_{3} dA^{(j)})=0,
\end{equation}
where $\star_{3}$ is the Hodge star operation with respect to $g_{3}$. It follows that there exist globally defined (due to simple connectivity) twist potentials satisfying
\begin{equation}\label{chi}
d\omega_{i}=2f f_{ij}\star_{3}dA^{(j)}.
\end{equation}
In particular if $v^i$ are constant then the potentials $\omega_i$ are constant, and vice versa. To explain the geometric meaning of the forms appearing on the right-hand side of \eqref{chi}
observe that $\eta^{(i)}=f_{ij}\left(d\phi^j+v^j dt\right)$ is the dual 1-form to $\partial_{\phi^{i}}$, and according to Frobenius' theorem
the forms $\eta^{(1)}\wedge\eta^{(2)}\wedge d\eta^{(i)}$
measure the lack of integrability of the orthogonal complement distribution to the axisymmetric Killing fields. Moreover, it turns out that these forms are directly related to \eqref{chi}. Indeed let $\epsilon$, $\epsilon_{3}$, and $\star_{3}$ denote the volume forms with respect to $g$ and $g_{3}$, and the Hodge star operator with respect to $g_{3}$, respectively, then since
\begin{equation}
d\eta^{(i)}=f_{ij}d A^{(j)}+df_{ij}\wedge\left(f^{ja}\eta^{(a)}\right)
\end{equation}
we have
\begin{align}\label{komar}
\begin{split}
\star(\eta^{(1)}\wedge\eta^{(2)}\wedge d\eta^{(i)})=&
f_{ij}\star(\eta^{(1)}\wedge\eta^{(2)}\wedge dA^{(j)})\\
=&f_{ij}\epsilon(\text{ }\!\cdot\text{ }\!,\partial_{\phi^1},\partial_{\phi^2},\partial_{l},\partial_{k})
\left(dA^{(j)}\right)^{lk}\\
=&f^{-1}f_{ij}\epsilon_{3}(\text{ }\!\cdot\text{ }\!,\partial_{l},\partial_{k})\left(dA^{(j)}\right)^{lk}\\
=&ff_{ij}\star_{3} dA^{(j)}.
\end{split}
\end{align}
Note also that since the spacetime is vacuum and $\eta^{(i)}$ are dual to Killing fields,
standard computations along with Cartan's `magic' formula show that the 1-forms
\begin{equation}\label{komar1}
\star(\eta^{(1)}\wedge\eta^{(2)}\wedge d\eta^{(i)})=\iota_{\eta^{(1)}}
\iota_{\eta^{(2)}}\star d\eta^{(i)}
\end{equation}
are closed, where $\iota$ denotes interior product. This yields an alternate proof of \eqref{maxwell}, and confirms that the twist potentials $\omega_i$ agree with those associated with the Komar expression for angular momentum.

Next, following Maison \cite{Maison} define the following $3\times 3$ matrix
\begin{equation}\label{bigmatrix}
\Phi=
\left(
    \begin{array}{ccc}
      f^{-1} & - f^{-1} \omega_1 & - f^{-1} \omega_2 \\
      -f^{-1} \omega_1 & f_{11} + f^{-1} \omega_1^2  & f_{12} + f^{-1} \omega_1 \omega_2  \\
      -f^{-1} \omega_2 & f_{12} + f^{-1} \omega_1 \omega_2  & f_{22} + f^{-1} \omega_2^2
    \end{array}
  \right)
\end{equation}
which is symmetric, positive definite, and has $\det\Phi=1$.
The inverse matrix is
\begin{equation}
\Phi^{-1}=
\left(
    \begin{array}{ccc}
      f + f^{11} \omega_1^2 + f^{22} \omega_2^2 + 2 f^{12} \omega_1 \omega_2 & f^{11} \omega_1 + f^{21} \omega_2  & f^{12} \omega_1 + f^{22} \omega_2 \\
      f^{11} \omega_1 + f^{12} \omega_2 & f^{11}  & f^{12}   \\
      f^{21} \omega_1 + f^{22} \omega_2   &f^{21} & f^{22}    \end{array}
  \right)
 .
\end{equation}
This allows for a simplified expression of the 3-dimensional reduced Einstein-Hilbert action
\begin{equation}\label{action}
\mathcal{S}=\int_{\mathbb{R}\times \left(\mathcal{M}^{5}/[\mathbb{R}\times U(1)^2]\right)}
R^{(3)}\star_{3}1+\frac{1}{4}\mathrm{Tr}\left(\Phi^{-1}d\Phi\wedge\star_{3}\Phi^{-1}d\Phi\right).
\end{equation}
The Einstein-harmonic map system arising from this action is
\begin{equation}\label{einstein}
R^{(3)}_{kl}-\frac{1}{2}R^{(3)}(g_3)_{kl}=T_{kl},\quad\quad \operatorname{div}_{\mathbb{R}^3}\left(\Phi^{-1}
\nabla\Phi\right)=0,
\end{equation}
where the stress-energy tensor for the harmonic map is
\begin{equation}
T_{kl}=\mathrm{Tr}\left(J_{k}J_{l}\right)
-\frac{1}{2}g_{3}^{mn}\mathrm{Tr}\left(J_{m}J_{n}\right)
(g_{3})_{kl}
\end{equation}
with the current
\begin{equation}
J_{l}=\Phi^{-1}\partial_{l}\Phi.
\end{equation}
Note that by taking a trace the Einstein equations may be reexpressed as
\begin{equation}
R^{(3)}_{kl}=\mathrm{Tr}\left(J_{k}J_{l}\right).
\end{equation}
Furthermore, in the $\Phi$ portion of the action cancelations occur so that $e^{2\sigma}$ does not appear, and this results in the divergence of \eqref{einstein} with respect to
the Euclidean metric
\begin{equation}\label{flatmetric}
\delta=d\rho^2+dz^2+\rho^2 d\phi^2,
\end{equation}
where $\phi$ is an auxiliary coordinate.
This also implies that the stress-energy tensor is divergence free with respect to the Euclidean metric
\begin{equation}\label{divergence}
0=\left(\operatorname{div}_{\mathbb{R}^3}T\right)(\partial_{\rho})
=\partial_{\rho}(\rho T_{\rho\rho})+\partial_{z}(\rho T_{\rho z}),\quad\quad
0=\left(\operatorname{div}_{\mathbb{R}^3}T\right)(\partial_{z})
=\partial_{\rho}(\rho T_{\rho z})+\partial_{z}(\rho T_{z z}).
\end{equation}

The divergence free property of $T$ follows from the harmonic map equations. To see this in a more general harmonic map setting, consider maps $\varphi: (M,\mathrm{g})\rightarrow (N,\mathrm{h})$ with harmonic energy
\begin{equation}
E=\frac{1}{2}\int_{M}|d\varphi|^2 dx_{\mathrm{g}}=\frac{1}{2}\int_{M}\mathrm{g}^{\mathrm{ij}}
\mathrm{h}_{\mathrm{lk}}\partial_{\mathrm{i}}\varphi^{\mathrm{l}}
\partial_{\mathrm{j}}\varphi^{\mathrm{k}} dx_{\mathrm{g}}.
\end{equation}
The first variation is given by
\begin{equation}
\frac{\delta E}{\delta \mathrm{g}}=\frac{1}{2}\int_{M}\delta \mathrm{g}^{\mathrm{ij}}\left(\mathrm{h}_{\mathrm{lk}}
\partial_{\mathrm{i}}\varphi^{\mathrm{l}}\partial_{\mathrm{j}}\varphi^{\mathrm{k}}
-\frac{1}{2}|d\varphi|^2 \mathrm{g}_{\mathrm{ij}}\right) dx_{\mathrm{g}},
\end{equation}
and the stress-energy tensor is
\begin{equation}
T_{\mathrm{ij}}=\langle\partial_{\mathrm{i}}\varphi,\partial_{\mathrm{j}}
\varphi\rangle_{\mathrm{h}}
-\frac{1}{2}|d\varphi|^2 \mathrm{g}_{\mathrm{ij}}.
\end{equation}
The harmonic map equations
\begin{equation}\label{tensiondef}
\tau(\varphi)=\hat{\nabla}^{\mathrm{i}}\partial_{\mathrm{i}}\varphi=0
\end{equation}
then imply that the stress-energy tensor is divergence free
\begin{align}
\begin{split}
\nabla^{\mathrm{i}}T_{\mathrm{ij}}=\langle\hat{\nabla}^{\mathrm{i}}
\partial_{\mathrm{i}}\varphi,\partial_{\mathrm{j}}\varphi\rangle_{\mathrm{h}}
+\langle\partial_{\mathrm{i}}\varphi,\hat{\nabla}^{\mathrm{i}}
\partial_{\mathrm{j}}\varphi\rangle_{\mathrm{h}}
-\mathrm{g}^{\mathrm{lm}}\langle\hat{\nabla}_{\mathrm{j}}
\partial_{\mathrm{l}}\varphi,\partial_{\mathrm{m}}\varphi\rangle_{\mathrm{h}}=0.
\end{split}
\end{align}
Here $\hat{\nabla}$ is the induced connection on the bundle $T^{*}M\otimes \varphi^{-1} TN$, and $\tau(\varphi)$ denotes the tension field which is a section of the pullback bundle $\varphi^{-1}TN$.

The Einstein equations of \eqref{einstein} may be solved via quadrature. This may be shown by computing each equation in terms of metric components. Recall that
\begin{equation}
R_{kl}^{(3)}=\partial_{m}\Gamma_{kl}^{m}-\partial_{l}\Gamma_{km}^{m}
+\Gamma_{kl}^{m}\Gamma_{nm}^{n}-\Gamma_{kn}^{m}\Gamma_{lm}^{n},
\end{equation}
and
\begin{equation}
R^{(3)}=g_3^{kl}R_{kl}^{(3)}
=-\rho^{-2}R_{tt}^{(3)}+e^{-2\sigma}\left(R_{\rho\rho}^{(3)}
+R_{zz}^{(3)}\right).
\end{equation}
The Christoffel symbols are
\begin{equation}
\Gamma_{tt}^{l}=\delta^{l\rho}e^{-2\sigma}\rho,\quad\quad
\Gamma_{ti}^{l}=\delta_{t}^{l}\delta_{i}^{\rho}\rho^{-1},\quad\quad
\Gamma_{ij}^{l}=\delta_{j}^{l}\partial_{i}\sigma+\delta_{i}^{l}\partial_{j}\sigma
-\delta_{ij}\delta^{lm}\partial_{m}\sigma\text{ }\text{ }\text{ for }\text{ }\text{ }i,j\neq t.
\end{equation}
It follows that
\begin{equation}
R_{tt}^{(3)}=R_{ti}^{(3)}=0, \quad i\neq t,\quad
R_{\rho\rho}^{(3)}=-\Delta_{\mathbb{R}^2}\sigma+\frac{1}{\rho}\partial_{\rho}\sigma,\quad
R_{zz}^{(3)}=-\Delta_{\mathbb{R}^2}\sigma-\frac{1}{\rho}\partial_{\rho}\sigma,\quad
R_{\rho z}^{(3)}=\frac{1}{\rho}\partial_{z}\sigma.
\end{equation}
From this the quadrature equations for $\sigma$ are found to be
\begin{equation}
\partial_{\rho}\sigma=\frac{\rho}{2}\left(R_{\rho\rho}^{(3)}-R_{zz}^{(3)}\right)
=\frac{\rho}{2}\left(\mathrm{Tr}(J_{\rho}J_{\rho})-\mathrm{Tr}(J_{z}J_{z})\right)
=\rho T_{\rho\rho}=-\rho T_{zz},
\end{equation}
\begin{equation}
\partial_{z}\sigma=\rho R_{\rho z}^{(3)}=\rho T_{\rho z},
\end{equation}
which may be rewritten more conveniently as
\begin{equation}\label{sigma}
d\sigma=-\iota_{\eta}\ast\iota_{\partial_{z}}T
\end{equation}
where $\ast$ is the Hodge star operation with respect to the metric $\delta$ on $\mathbb{R}^3$, and $\eta=\partial_{\phi}$. To see this let $\varepsilon$ denote the volume form for $\delta$, then
\begin{equation}
(\ast\iota_{\partial_{z}}T)_{ij}=\varepsilon_{ijl}T^{lz}
\end{equation}
and hence
\begin{equation}
(\iota_{\eta}\ast\iota_{\partial_{z}}T)_{j}=\varepsilon_{ijl}\eta^{i}T^{lz}
=\varepsilon(\partial_{\phi},\partial_{j},\partial_{\rho})T^{\rho z}
+\varepsilon(\partial_{\phi},\partial_{j},\partial_{z})T^{zz}.
\end{equation}
We then have
\begin{equation}\label{333}
\iota_{\eta}\ast\iota_{\partial_{z}}T
=\rho T_{zz} d\rho -\rho T_{\rho z} dz,
\end{equation}
which confirms \eqref{sigma}. Moreover, for later use observe that this form is closed in light of the harmonic map equations
\begin{equation}
d\left(\iota_{\eta}\ast\iota_{\partial_{z}}T\right)
=-\left(\operatorname{div}_{\mathbb{R}^3}T\right)(\partial_{z})d\rho\wedge dz=0.
\end{equation}
Note that we also have to show that $\sigma$ obtained from quadrature is bi-axisymmetric. However this follows easily from \eqref{333}, since
\begin{equation}
\iota_{\eta^{(i)}}d\sigma=\iota_{\eta^{(i)}}\iota_{\eta}\ast\iota_{\partial_{z}}T=0.
\end{equation}

\section{The Riemannian Geometry of  $SL(3,\R)/SO(3)$}
\label{sec3} \setcounter{equation}{0}
\setcounter{section}{3}




The harmonic map arising from the dimensional reduction of the bi-axisymmetric stationary vacuum Einstein equations has as target space $SL(3,\mathbb{R})/SO(3)$. The geometry of this symmetric space plays an important role in the analysis of the harmonic map, and in this section the relevant aspects will be described.

Let $G=SL(3,\mathbb{R})$ then $K=SO(3)$ is a maximal compact subgroup. The quotient $\mathbf{X}=G/K$ is the space of equivalence classes $[A]$ in which
\begin{equation}
A \in SL(3, {\R}) \mbox{ and  } A \sim A' \Leftrightarrow A' = A B  \mbox{ for some } B \in SO(3).
\end{equation}
In other words $\mathbf{X}$ is the space of left cosets of $K$ in $G$ and $G$ acts transitively on $\mathbf{X}$ by
\begin{equation}
A'K\mapsto AA' K\quad\text{ }\text{ for }\text{ }\quad A\in G,
\end{equation}
so that $K$ is the isotropy subgroup at $x_0=[\mathrm{Id}]$.
Recall now the construction of the canonical $G$-invariant Riemannian metric on the homogeneous space $\mathbf{X}$, which yields a Riemannian symmetric space structure.
The Lie algebras will be denoted by
\begin{equation}
\mathfrak{g} = sl(3) = \{Y \in gl (3) \,\, | \,\, \mathrm{Tr} Y = 0\},
\end{equation}
and
\begin{equation}
\mathfrak{k} = so(3) = \{Y \in gl(3) \,\, | \,\, Y^t = - Y\}.
\end{equation}
Note that $\mathfrak{g}$ is semisimple since the Killing form $\mathbf{B}:\mathfrak{g}\times\mathfrak{g}\rightarrow\mathbb{R}$ given by
\begin{equation}
\mathbf{B}(Y, Z) = \mathrm{Tr} (\mathrm{ad} Y \circ \mathrm{ad} Z) = 6\mathrm{Tr}(YZ)
\end{equation}
is nondegenerate.  Let $\mathfrak{p}$ be the orthogonal complement of $\mathfrak{k}$ with respect to $\mathbf{B}$, so that we have the Cartan decomposition
\begin{equation}
\mathfrak{g} = \mathfrak{k} \oplus \mathfrak{p}
\end{equation}
with
\begin{equation}
\mathfrak{p} = \{Y \in gl (3) \,\, | \,\, Y^t = Y, \text{ }\mathrm{Tr} Y = 0\},
\end{equation}
and satisfying the Cartan relations
\begin{equation}
[\mathfrak{k}, \mathfrak{k}] \subset \mathfrak{k}, \quad\quad
[\mathfrak{p}, \mathfrak{p} ] \subset \mathfrak{k}, \quad\quad
[\mathfrak{k},\mathfrak{p}]\subset\mathfrak{p}.
\end{equation}
The Killing form $\mathbf{B}$ is negative definite on $\mathfrak{k}$ and positive definite on $\mathfrak{p}$, in particular $\mathbf{X}$ is of noncompact type.

Consider the Cartan involution $\theta: \mathfrak{g} \rightarrow \mathfrak{g}$ with $\theta|_{\mathfrak{k}} = \mathrm{id}, \theta|_{\mathfrak{p}} = - \mathrm{id}$, where in our context $\theta(Y) = - Y^t$. Then the quadratic form
\begin{equation}
\langle Y, Z\rangle_{\mathfrak{g}}=
\begin{cases}
-\frac{2}{3}\mathbf{B}(Y,Z) & \text{ if } Y,Z\in\mathfrak{k},\\
- \frac{2}{3}\mathbf{B}(Y, \theta(Z))& \text{ if } Y,Z\in\mathfrak{p},\\
0& \text{ if } Y\in\mathfrak{k},\text{ } Z\in\mathfrak{p},
\end{cases}
\end{equation}
is positive definite and Ad $K$-invariant. From this the desired Riemannian metric at $x_0$ is obtained by restricting the quadratic form to $\mathfrak{p}$ which is identified with $T_{x_0} \mathbf{X}$, namely
\begin{equation}
 \mathbf{g}_{x_0}(Y, Z) = 4 \mathrm{Tr} (YZ^t) \quad \text{ for }\quad Y, Z \in \mathfrak{p}.
\end{equation}
This in turn gives rise to the metric globally on $\mathbf{X}$ via left translation. Let $L_B: \mathbf{X} \rightarrow \mathbf{X}$ denote the left translation operator
\begin{equation}
L_B(x) = L_B([A]) = [B A],
\end{equation}
where $A,B\in SL(3,\R)$ and $x=[A]$.
Since $SL(3, {\mathbb R})$ acts transitively on $\mathbf{X}$, given $x \in \mathbf{X}$ there is a $B\in SL(3, {\mathbb R})$ such
that $L_B(x_0) = x$, and thus the $G$-invariant Riemannian metric at $x$ may be defined
by pulling back the quadratic form at the identity
\begin{equation}\label{symmetricspacemetric}
\mathbf{g}_x = L_{B^{-1}}^* \mathbf{g}_{x_0}.
\end{equation}
With this metric $SL(3,\R)/SO(3)$ becomes a symmetric space of noncompact type having rank 2 (see \cite{BallmanGromovSchroeder}). In particular it has nonpositive curvature, with the sectional curvature of the plane spanned by orthonormal vectors $ Y, Z \in \mathfrak{p}$ given by $-\parallel[Y,Z]\parallel_{\mathfrak{g}}^2$.

In order to connect the metric \eqref{symmetricspacemetric} with the target space geometry associated to the harmonic map of the previous section, the following characterization of $\mathbf{X}=SL(3,\R)/SO(3)$ will be needed. Recall the polar decomposition for matrices, namely any $A\in SL(3,\R)$ may be written uniquely as $A=PO$ where $O\in SO(3)$ and $P\in \tilde{\mathbf{X}}$ with
\begin{equation}
\tilde{\mathbf{X}} = \{A \in SL(3, {\mathbb R}) \mid A \text{ is symmetric and positive definite}\}.
\end{equation}
This indicates that $\mathbf{X}$ may be identified with $\tilde{\mathbf{X}}$, and in fact this is accomplished with the map $\mathcal{I} : \tilde{\mathbf{X}} \rightarrow \mathbf{X}$ given by
\begin{equation}
\mathcal{I}(A) = [A^{1/2}],\quad\quad\quad \mathcal{I}^{-1}([B])=BB^t.
\end{equation}
Observe that $\tilde{\mathbf{X}}$ can be interpreted as the set of all ellipsoids in ${\mathbb R}^3$ centered at the origin with unit volume, and is diffeomorphic to $\mathbb{R}^5$ (hence the same is true of $\mathbf{X}$). Moreover
$SL(3,{\R})$ acts transitively on $\tilde{\mathbf{X}}$ by the analogue of left translation $\tilde{L}_B = \mathcal{I}^{-1} \circ L_B \circ \mathcal{I}$, that is
\begin{equation}
\tilde{L}_B (A) = B A B^t.
\end{equation}

The identification above naturally induces a pull-back metric $\tilde{\mathbf{g}}:= \mathcal{I}^* \mathbf{g}$ on $\tilde{\mathbf{X}}$. At the identity this is
\begin{equation}\label{identitymetric}
\tilde{\mathbf{g}}_{\mathrm{Id}} (V, V) = \mathbf{g}_{x_0} \left(\frac{V}{2}, \frac{V}{2}\right) =   \mathrm{Tr}(VV^t),
\end{equation}
for
\begin{equation}
V\in T_{\mathrm{Id}} \tilde{\mathbf{X}} = \{W \in Mat_{3 \times 3} (\mathbb{R}) \,\, | \,\, W^t = W, \quad \mathrm{Tr} W = 0 \}.
\end{equation}
As for an arbitrary point $A\in\tilde{\mathbf{X}}$ and $V\in T_{A}\tilde{\mathbf{X}}$,
\begin{align}
\begin{split}
\tilde{\mathbf{g}}_{A}(V,V)=&\mathbf{g}_{\mathcal{I}(A)}\left(d\mathcal{I}_{A}(V),
d\mathcal{I}_{A}(V)\right)\\
=&L_{A^{-1/2}}^{*}\mathbf{g}_{x_0}\left(d\mathcal{I}_{A}(V),d\mathcal{I}_{A}(V)\right)\\
=&\mathbf{g}_{x_0}\left(d(L_{A^{-1/2}}\circ\mathcal{I})_{A}(V),
d(L_{A^{-1/2}}\circ\mathcal{I})_{A}(V)\right)\\
=&\mathrm{Tr}\left([(d\tilde{L}_{A^{-1/2}})_{A}(V)][(d\tilde{L}_{A^{-1/2}})_{A}(V)]^{t}\right).
\end{split}
\end{align}
Since
\begin{equation}
(d\tilde{L}_{A^{-1/2}})_{A}(V)=A^{-1/2}V(A^{-1/2})^t,
\end{equation}
it follows that
\begin{align}\label{targetmetric}
\begin{split}
\tilde{\mathbf{g}}_{A}(V,V)=&\mathrm{Tr}\left(A^{-1/2}V(A^{-1/2})^t A^{-1/2} V(A^{-1/2})^t\right)\\
=&\mathrm{Tr}\left(A^{-1/2}VA^{-1}V(A^{-1/2})^t\right)\\
=&\mathrm{Tr}\left(A^{-1}V A^{-1}V\right).
\end{split}
\end{align}

Recall from the previous section that a given 5-dimensional bi-axisymmetric stationary vacuum spacetime yields a map $\Phi:\mathbb{R}^3\setminus\Gamma\rightarrow\tilde{\mathbf{X}}$, where $\mathbb{R}^3$ is parameterized by the Weyl-Papapetrou coordinates $(\rho,z,\phi)$, $\Gamma$ denotes the $z$-axis, and $\tilde{\mathbf{X}}$ is parameterized by $(f_{ij},\omega_{i})$. According to \eqref{targetmetric} the pull-back metric is then given by
\begin{equation}
\Phi^* \tilde{\mathbf{g}} =  \mathrm{Tr} ( \Phi^{-1}d \Phi \, \Phi^{-1} d \Phi).
\end{equation}
Since this agrees with the expression appearing in the reduced action \eqref{action}, it follows that the bi-axisymmetric stationary vacuum Einstein equations reduce to a harmonic map problem with target space $SL(3,\R)/SO(3)$.

\section{The Rod Structure}
\label{sec4} \setcounter{equation}{0}
\setcounter{section}{4}

A well-behaved asymptotically flat, stationary vacuum, bi-axisymmetric spacetime admits a global system of Weyl-Papapetrou coordinates in its domain of outer communication $\mathcal{M}^5$, as described in Section \ref{sec2}, in which the metric takes the form
\begin{equation}\label{spacetimemetric}
g=f^{-1}e^{2\sigma}(d\rho^2+dz^2)-f^{-1}\rho^2 dt^2
+f_{ij}(d\phi^{i}+v^{i}dt)(d\phi^{j}+v^{j}dt).
\end{equation}
The orbit space $\mathcal{M}^{5}/[\mathbb{R}\times U(1)^2]$ is diffeomorphic to the right-half plane $\{(\rho,z)\mid \rho>0\}$ (see \cite{HollandsYazadjiev1}), and
its boundary $\rho=0$ encodes nontrivial aspects of the topology. Let $q$ be the fiber metric \eqref{fibermetric} consisting of the last two terms in \eqref{spacetimemetric}. In order to avoid curvature singularities $\mathrm{dim} \left(\mathrm{ker}\text{ }\! q(0,z)\right)=1$ except at isolated points $p_{l}$, $l=1,\ldots,L$ where the dimension of the kernel is 2 \cites{Harmark,HollandsYazadjiev}. It follows that the $z$-axis is broken into $L+1$ intervals called rods
\begin{equation}
\Gamma_{1}=[z_{1},\infty),\text{ }\Gamma_{2}=[z_2,z_1],\text{ }\ldots,\text{ }
\Gamma_{L}=[z_{L},z_{L-1}],\text{ }\Gamma_{L+1}=(-\infty,z_{L}],
\end{equation}
on which either $|\partial_{t}+\Omega_{1}\partial_{\phi^1}+\Omega_{2}\partial_{\phi^2 }|$ vanishes (horizon rod) or $(f_{ij})$ fails to be of full rank (axis rod). Here $\Omega_i$ denotes the angular velocity of the horizon and is given by $-v^i$ restricted to the rod. This must be a constant, and can be seen by solving for $dv^i$ from \eqref{chi} and showing that it vanishes on the rod. The condition for an axis rod implies \cite{HollandsYazadjiev} that for each such $\Gamma_{l}$ there is a pair of relatively prime integers $(m_{l},n_{l})$ so that the Killing field
\begin{equation}
m_{l}\partial_{\phi^1}+n_{l}\partial_{\phi^2}
\end{equation}
vanishes on $\Gamma_{l}$. Observe that $m_{l}$ and $n_{l}$ must be integers since elements of the isotropy subgroup at the axis are of the form $(e^{im_{l}\phi},e^{in_{l}\phi})$, $0\leq\phi<2\pi$, and the isotropy subgroup forms a proper closed subgroup of $T^2=S^1\times S^1$. That is, the isotropy subgroup yields a simple closed curve in the torus exactly when the slope of its winding is rational.
The pair $(m_{l},n_{l})$ is referred to as the rod structure for the rod $\Gamma_{l}$, and $(0,0)$ serves as the rod structure for any horizon rod. Note that the rod structure is not unique in terms of the information that it encodes, although this type of uniqueness is valid when the rod structure is viewed as an element of $\mathbb{RP}^1$.

The asymptotically flat condition is encoded by the rod structures of $\Gamma_{1}$ and $\Gamma_{L+1}$ by requiring them to be $(\pm 1,0)$ and $(0,\pm 1)$ or vice versa. This, of course, arises from the rod structure of Minkowski space $\mathbb{R}^{4,1}$ which will now be described in order to motivate the definition of a `corner'. The Weyl-Papapetrou form of the Minkowski metric is derived from the polar coordinate expression with the help of Hopf coordinates $(\theta,\phi^1,\phi^2)$, $\phi^{i}\in[0,2\pi]$, $\theta\in[0,\pi/2]$ on the 3-sphere and a conformal mapping
\begin{align}\label{Minkowski}
\begin{split}
  g_{0} = & -dt^{2}+ dr^2 + r^2 d \omega_{S^3}^2 \\
    = & -dt^{2}+ dr^2 + r^2 \left[d \theta^2 + \sin^2 \theta (d \phi^1)^2 + \cos^2 \theta (d \phi^ 2)^2 \right] \\
    = & q_{0} +  dr^2 + r^2 d \theta^2 \\
    = & q_{0} + \frac{1}{4\sqrt{\rho^2 + z^2 }} (d \rho^2  + dz^2).
\end{split}
\end{align}
Here the conformal map in the complex plane is given by
\begin{equation}
\zeta \mapsto  \zeta^2 \quad :\quad {\mathbb R}_{\geq 0} \times  {\mathbb R}_{\geq 0} \rightarrow \mbox{$\rho z$-half plane},
\end{equation}
or rather
\begin{equation}
\rho =  r^2 \sin 2 \theta, \quad\quad z =  r^2 \cos 2 \theta .
\end{equation}
If $x^i$ denote cartesian coordinates then the Killing fields
\begin{equation}
\partial_{\phi^1} = -x^2 \partial_{x^1} + x^1 \partial_{x^2},\quad\quad\quad
\partial_{\phi^2} = -x^4 \partial_{x^3} + x^3 \partial_{x^4},
\end{equation}
vanish on the rods $\Gamma_{1}=[0,\infty)$ and $\Gamma_{2}=(-\infty,0]$, respectively. Therefore the rod structures for these two rods are $(1,0)$ and $(0,1)$. Moreover, because the origin $p_{1}$ in the $\rho z$-plane corresponds to the vertex of the right-half quadrant under the inverse conformal map this is called a corner. For a general set of rod structures, a corner point $p_l$ is one which separates two axis rods, and a pole point is one which separates a horizon rod from an axis rod.

Potential constants $\mathbf{c}_{l}=(c_{l}^1,c_{l}^2)\in\mathbb{R}^2$ are prescribed on each axis rod $\Gamma_{l}$, and are used as boundary conditions for the twist potentials
$\omega_{i}|_{\Gamma_{l}}=c_{l}^{i}$. The constants may be chosen arbitrarily modulo the condition that they do not vary between adjacent rods separated by a corner. This is necessary for the construction of a model map in the next section, as well as a well-defined notion of angular momentum. In particular, the potential constants can only change after passing over a horizon rod, and this difference yields the angular momenta for each horizon component. Let $\mathcal{S}$ denote the 3-dimensional horizon cross section component associated with a horizon rod $\Gamma_{k}=[z_{k},z_{k-1}]$, then \eqref{chi}, \eqref{komar}, and \eqref{komar1} may be used to compute the Komar angular momenta of this component by
\begin{equation}
\mathcal{J}_{i}=\frac{1}{8\pi}\int_{\mathcal{S}}\star d\eta^{(i)}
=\frac{\pi}{2}\int_{\Gamma_{k}}\iota_{\eta^{(1)}}\iota_{\eta^{(2)}}\star d\eta^{(i)}
=\frac{\pi}{4}\int_{\Gamma_{k}}d\omega_{i}=\frac{\pi}{4}\left[\omega_{i}(p_{k-1})
-\omega_{i}(p_{k})\right].
\end{equation}
A rod data set $\mathcal{D}$ consists of the collection of corners and poles $\{p_l\}$,
rod structures $\{(m_l,n_l)\}$, and potential constants $\{\mathbf{c}_l\}$.

Consider now the topology of spacetime in a neighborhood of a corner point $p_l$ which separates axis rods $\Gamma_{l}$ and $\Gamma_{l+1}$ with rod structure $(m_l,n_l)$ and $(m_{l+1},n_{l+1})$. As is shown in the Appendix, new $2\pi$-periodic coordinates $(\bar{\phi}^{1},\bar{\phi}^2)$ may be chosen so that the rod structures with respect to these coordinates are given by $(1,0)$ and $(q,p)$, $p\neq 0$. That is, the Killing fields $\partial_{\bar{\phi}^1}$ and $q\partial_{\bar{\phi}^1}+p\partial_{\bar{\phi}^2}$ vanish on $\Gamma_{l}$ and $\Gamma_{l+1}$, respectively. Next take any semicircle in the $\rho z$-half plane (orbit space) centered at $p_l$ that connects a point on the interior of $\Gamma_l$ to a point on the interior of $\Gamma_{l+1}$. Note that each point on the interior of this semicircle represents a 2-torus in a constant time slice. By analyzing which 1-cycles collapse at the end points it follows that the semicircle represents a lens space $L(p,q)$. Recall that $L(1,q)\cong S^3$, so that when $p=\pm 1$ a neighborhood of the corner in a time slice is foliated by spheres, or rather a neighborhood of the corner in the spacetime is diffeomorphic to $\mathbb{R}^5$. It turns out that $p=\pm 1$ if and only if
\begin{equation} \label{admissibility1}
\det\begin{pmatrix} m_l & n_l \\ m_{l+1} & n_{l+1} \end{pmatrix} = \pm 1,
\end{equation}
and therefore the spacetime has trivial topology in a neighborhood of the corner if and only if the admissibility condition \eqref{admissibility1} holds, otherwise it has an orbifold singularity. The admissibility condition can be interpreted as stating that the
intersection number of the two 1-cycles that degenerate on either side of the corner is equal to $\pm 1$.

In addition to \eqref{admissibility1}, the main results of this paper rely on what will be referred to as the compatibility condition. This supplementary requirement is only valid when two consecutive corners are present. As described above, let $p_l$ be a corner separating axis rods $\Gamma_{l}$ and $\Gamma_{l+1}$, and suppose that there is another corner $p_{l-1}$ at the top end of $\Gamma_{l}$ connecting it to axis rod $\Gamma_{l-1}$. Assuming that the admissibility condition \eqref{admissibility1} holds at the two points $p_{l-1}$ and $p_{l}$, it may be arranged that these two determinants are $+1$ by multiplying each component of the rod structures by $-1$ if necessary. Observe that this operation on the rod structures does not change their properties, since the linear combinations of Killing fields that vanish at the rods is preserved.
The compatibility condition then states that the first component of the rod structures for $\Gamma_{l-1}$ and $\Gamma_{l+1}$ have opposite sign if both are nonzero
\begin{equation}\label{compatibilitycondition1}
m_{l-1}m_{l+1}\leq 0.
\end{equation}
This technical condition is used only in the construction of the model map in the next section. Unlike the admissibility condition, it is not known whether Theorem \ref{main} remains true without it. As mentioned in the introduction, if the admissibility condition is not assumed so that orbifold singularities are allowed then \eqref{compatibilitycondition1} should be enhanced to the generalized compatibility condition
\begin{equation}\label{gcompatibilitycondition}
m_{l-1}m_{l+1} \det\begin{pmatrix} m_{l-1} & n_{l-1} \\ m_{l} & n_{l} \end{pmatrix}
\det\begin{pmatrix} m_l & n_l \\ m_{l+1} & n_{l+1} \end{pmatrix} \leq 0.
\end{equation}
Note that the only way this quantity can vanish is if either $m_{l-1}=0$ or $m_{l+1}=0$, since for a corner the determinant is always nonzero.

Each connected component cross section of the event horizon has one of the following topologies \cite{HollandsYazadjiev1}: the sphere $S^3$, the ring $S^1\times S^2$, or a lens space $L(p,q)$. These manifolds have a singular foliation whose leaves are 2-dimensional tori, and whose singular leaves are circles resulting from the degeneration of a 1-cycle in the torus.  This can be observed geometrically from the canonical metric on each manifold as follows. The round metric on $S^3$ in Hopf coordinates is given by
\begin{equation}
d \theta^2 + \sin^2 \theta (d \phi^1)^2 + \cos^2 \theta (d \phi^2)^2,
\end{equation}
where $\theta \in [0, \pi/2]$, $\phi^i \in [0, 2 \pi]$.  For $0 < \theta < \pi/2$ the level set $\{\theta = \mbox{const.} \}$ is a flat 2-torus, and when $\theta = 0, \pi/2$ the level sets degenerates to $S^1$.  These singular leaves are characterized by the fact that the Killing fields $\partial_{\phi^1}$ and $\partial_{\phi^2}$ vanish at $\theta = 0, \pi/2$ respectively. Thus if $\theta$ is viewed as parameterizing a horizon rod, then the rod structure at the two poles (end points) is
$\{(1,0),(0,1)\}$. For the ring $S^1 \times S^2$ the canonical product metric is
\begin{equation}
[d \theta^2 + \sin^2 \theta (d \phi^1)^2] + (d \phi^2)^2,
\end{equation}
where $\theta \in [0, \pi]$, $\phi^i \in [0, 2 \pi]$.  The torus fibers are once again the level sets of $\theta$, and the singular leaves occur when $\theta = 0, \pi$ and coincide with the vanishing of the Killing field $\partial_{\phi^1}$, while the other Killing field $\partial_{\phi^2}$ never degenerates. The associated rod structure at the poles is then $\{(1,0),(1,0)\}$.

\begin{figure}
\includegraphics[width=5cm]{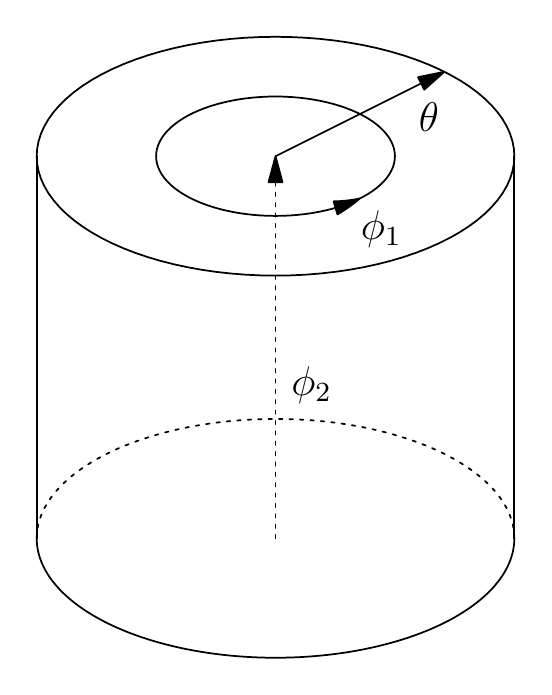}
\caption{Identification Space}  \label{toiletpaper}
\end{figure}

Consider now the lens space $L(p, q)=S^{3}/\mathbb{Z}_{p}$ which inherits its canonical metric
\begin{equation}
d \theta^2 + \sin^2 \theta (d \tilde{\phi}^1)^2 + \cos^2 \theta (d \tilde{\phi}^2)^2
\end{equation}
from the 3-sphere, where
\begin{equation}
\tilde{\phi}^1 = \phi^1 - \frac{q}{p} \phi^2, \quad \quad \tilde{\phi}^2 = \frac1p \phi^2,
\end{equation}
with $\theta \in [0, \pi/2]$, $\phi^i \in [0, 2 \pi]$.
Since $\phi^2$ has period $2\pi$, the following identifications are made
\begin{equation}
\tilde{\phi}^1 \sim \tilde{\phi}^1 + \frac{2\pi q}{p} , \quad\quad \tilde{\phi}^2 \sim  \tilde{\phi}^2 +  \frac{2\pi}{p} .
\end{equation}
The singular leaves at $\theta = 0, \pi/2$ are characterized by the vanishing of the Killing fields
\begin{equation}
\partial_{\tilde{\phi}^1}=\partial_{\phi^1},\quad\quad
\partial_{\tilde{\phi}^2}=q\partial_{\phi^1}+p\partial_{\phi^2},
\end{equation}
respectively, so that the associated rod structure at the poles is $\{(1,0),(q,p)\}$.
Recall the model of the lens space as a quotient space of the unit sphere $S^3=\{(z_1, z_2)\in\mathbb{C}^2 \,\, | \,\, |z_1|^2 + |z_2|^2 = 1 \}$ via the equivalence relation
\begin{equation}
(z_1, z_2 ) = (r_1 e^{ \tilde{\phi}^1 i}, r_2 e^{ \tilde{\phi}^2 i}) \sim (r_1 e^{\left(  \tilde{\phi}^1 + 2 \pi q/p\right)i}, r_2 e^{\left( \tilde{\phi}^2 + 2 \pi/p \right)i}).
\end{equation}
Here the pair of variables $(r_1, r_2)$ correspond to $(\sin \theta, \cos \theta)$ in the coordinates with which the lens space metric is written. A visualization of the lens space may be obtained by appropriately identifying the top, bottom, and sides of a solid cylinder as in Figure \ref{toiletpaper}. Namely, first collapse the external cylinder $\{\theta =\pi/2\}$ by identifying each vertical segment to a point, then identify the top and bottom discs via an orthogonal projection after performing a $2\pi q/p$ rotation of the top disc. The singular torus fibers occur where the action of the coordinate fields $\partial_{\tilde{\phi}^1}$ and $\partial_{\tilde{\phi}^2}$ degenerate, that is at $\theta=0,\pi/2$.

Using a similar analysis the topology of arbitrary rod structures may be understood. In Figure \ref{rod} four different rod structures for the orbit space are given, labeled by the topology of their horizons. Consider the first rod structure on the left in this diagram. The two semi-infinite rods are foliated by circle fibers none of which collapse, and hence they are 2-planes with an open disc removed. The finite rod has rod structure $(0,0)$ meaning that none of the rotational Killing fields vanish there. It is foliated by 2-tori such that each of the two 1-cycles generators in the torus degenerate on opposite poles. According to the description above, this yields an 3-sphere. Similarly, any simple curve in the $\rho z$-plane connecting the two semi-infinite rods also produces an $S^3$. In the second and third rod structures of Figure \ref{rod} it is clear that, by comparing with the singular foliations described above, these horizon rods represent a ring $S^1\times S^2$ and a lens $L(p,1)$, respectively. In these two examples there is also a different type of rod not present in the first example, namely a finite rod bounded by a pole on top and a corner on the bottom. This type of rod is foliated by circles with a singular leaf at the corner, and thus it gives a topological disc. The last example in Figure \ref{rod} has two horizon components in which the inner one is a lens $L(p,1)$ and the outer one is a ring $S^1\times S^2$, and hence the name `Black Lens Saturn'.

\begin{figure}
\includegraphics[width=12cm]{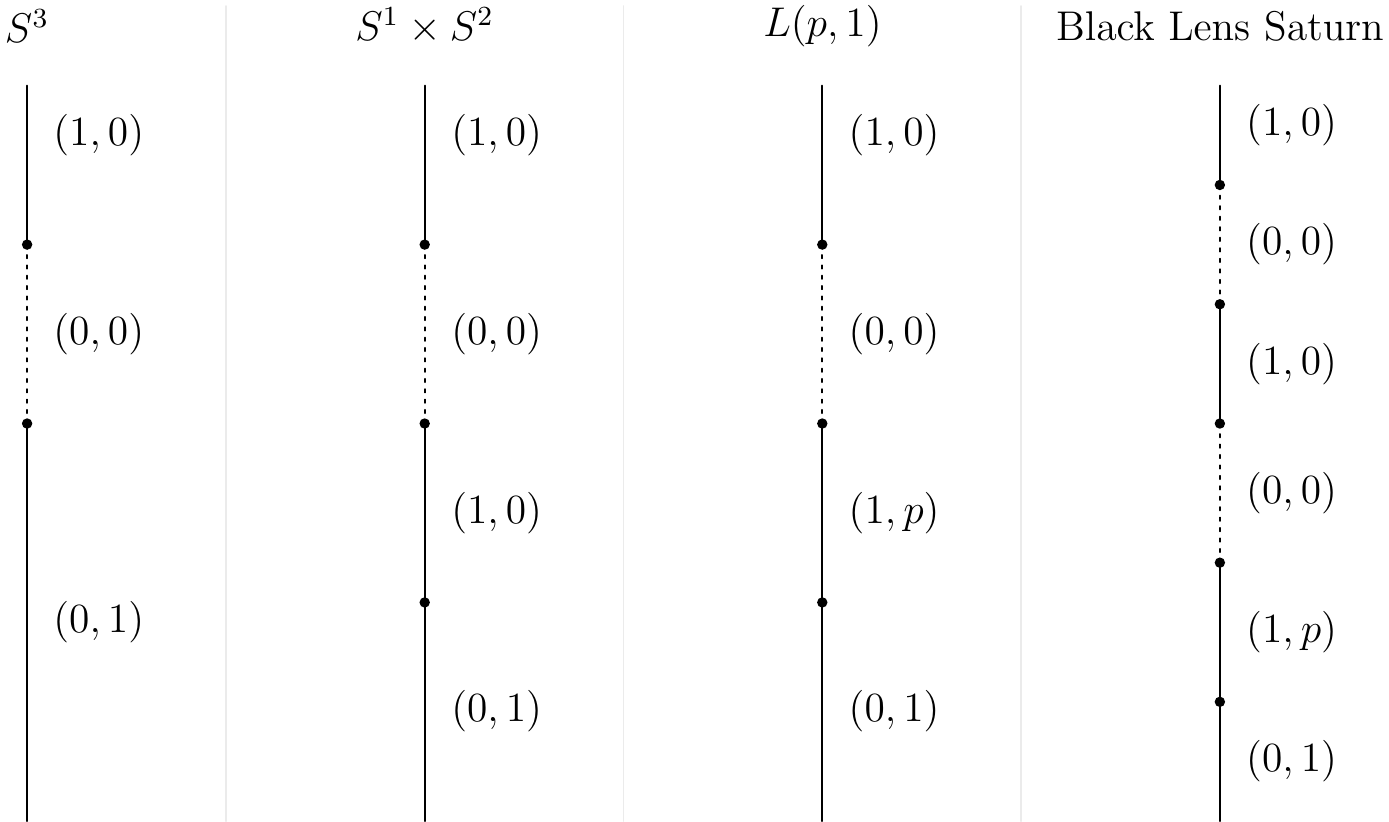}
\caption{Rod Strucures}  \label{rod}
\end{figure}

Observe that the rod structures of Figure \ref{rod} satisfy the admissibility condition \eqref{admissibility1} with $+1$ determinants, and the compatibility condition is vacuous. A natural question arises whether it is possible to produce a rod structure with a single horizon component having the general lens topology $L(p,q)$ without restricting to $q=1$, while at the same time satisfying the admissibility condition \eqref{admissibility1} and compatibility condition \eqref{compatibilitycondition1}. The following proposition answers this question affirmatively.

\begin{proposition}\label{lensrod}
Let $p$ and $q$ be integers satisfying $\mathrm{gcd}(p,q)=1$ and $p>q\geq 1$.
Then there exists a rod structure appropriate for an asymptotically flat spacetime of the form
\begin{equation}
\{(1, 0), (0, 0), (q, p),(q_1, p_1), \dots, (q_n, p_n), (0,\pm 1)\},
\end{equation}
which has a single lens space horizon $L(p, q)$, satisfies the admissibility condition \eqref{admissibility1} with positive determinants, and satisfies the compatibility condition \eqref{compatibilitycondition1}.
\end{proposition}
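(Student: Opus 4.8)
The plan is to treat the three assertions separately — that the unique horizon is a single $L(p,q)$, that the rod data is asymptotically flat, and that \eqref{admissibility1} holds with positive determinants together with \eqref{compatibilitycondition1} — with all of the real work going into a lattice construction of the interior axis rods. The first two assertions are immediate. In any rod structure of the displayed shape the only horizon rod is the one labelled $(0,0)$, and it is flanked by axis rods carrying $(1,0)$ and $(q,p)$; by the analysis of singular torus foliations carried out above, such a horizon has cross section $L(p,q)$, which is a genuine lens space since $\gcd(p,q)=1$ and $p>q\ge1$. Asymptotic flatness holds because the two semi-infinite rods are $(1,0)$ and $(0,\pm1)$, the Minkowski rod data. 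It therefore remains to produce a chain of primitive integer vectors $(q_0,p_0),(q_1,p_1),\dots,(q_{n+1},p_{n+1})$ with $(q_0,p_0)=(q,p)$, $(q_{n+1},p_{n+1})=(0,\pm1)$, and
\[
q_l p_{l+1}-p_l q_{l+1}=+1\qquad(0\le l\le n),
\]
which gives \eqref{admissibility1} with positive determinant at every corner and forces $\gcd(q_l,p_l)=1$ automatically, subject to the compatibility inequalities $q_{l-1}q_{l+1}\le0$ for $1\le l\le n$ of \eqref{compatibilitycondition1}.

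To build the interior vectors I would run a Euclidean descent on the first components with a sign choice built in at each stage. The key observation is that if $(a,b)$ is a primitive integer vector with $|a|\ge2$, then the integer solutions $(c,d)$ of $ad-bc=1$ form the coset $(c_0,d_0)+\mathbb{Z}(a,b)$, whose first coordinates sweep out a single residue class modulo $|a|$; this class is nonzero (otherwise $|a|$ would divide $1$), so it contains exactly one representative in $(0,|a|)$ and exactly one in $(-|a|,0)$. Hence from the current vector one may always pass to a next one whose first component has either prescribed sign and strictly smaller absolute value. Starting from $(q,p)$, I would iterate this, choosing the signs of the successive first components so that they follow the pattern $(+,+,-,-,+,+,-,-,\dots)$ — equivalently, so that the maximal runs of equal signs all have length at most two — and stop as soon as a first component equals $\pm1$, appending $(0,\pm1)$ with the matching second entry (which automatically has determinant $+1$ with the preceding vector). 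Since the absolute value of the first component strictly decreases, this terminates after finitely many steps; when $q=1$ the chain degenerates at once to $(1,p),(0,1)$.

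It then remains to verify the two conditions for the chain so produced. The determinant identities, hence \eqref{admissibility1} at every corner with positive determinant, hold by construction, and they also make consecutive corner vectors linearly independent, so the corners are genuine. For \eqref{compatibilitycondition1}, the entries $q_{l-1}$ and $q_{l+1}$ sit two apart in the sign pattern, so — the runs of equal sign having length at most two — they carry opposite signs unless $q_{l+1}$ is the terminal $0$; in either case $q_{l-1}q_{l+1}\le0$. The step I expect to be the crux is precisely the tension between admissibility and compatibility: steering the first components through the prescribed sign pattern while keeping every corner determinant equal to $+1$. This is dissolved by the coset observation above, which guarantees that, as long as the current first component has absolute value at least $2$, a first component of either sign is available without sacrificing $\det=+1$. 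The only remaining points are the bookkeeping of the degenerate cases — that the chain may stop immediately (when $q=1$) and that a vanishing first component occurs only at the terminal vertex $(0,\pm1)$, so every compatibility triple either has both corner determinants equal to $+1$, and is covered above, or involves this terminal zero.
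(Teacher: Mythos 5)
Your proposal is correct and follows essentially the same route as the paper: a Euclidean descent on the first components of the rod structures, using the fact that the Bezout solutions for a fixed primitive vector $(q_l,p_l)$ form a coset of $\mathbb{Z}(q_l,p_l)$ to choose the sign of each successive first component so that entries two apart have opposite sign, terminating at $\pm1$ and appending $(0,\pm1)$. The paper packages the coset observation into a modified Bezout lemma (Lemma \ref{Bezout}) and additionally tracks the decrease of the second components, but the mechanism — shifting a Bezout solution by an integer multiple of the current vector to enforce \eqref{compatibilitycondition1} while preserving the $+1$ determinant of \eqref{admissibility1} — is identical.
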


As an example observe that the single lens horizon $L(9,7)$ is realized by the rod structures
\begin{equation}
\{(1,0), (0,0), (7,9), (-4,-5), (-3, -4), (1, 1), (0, 1)\},
\end{equation}
which clearly satisfy the admissibility condition with positive determinants as well as the compatibility condition. In order to prove Proposition \ref{lensrod} we need a slightly modified version of Bezout's Lemma.

\begin{lemma}\label{Bezout}
Let $a \neq 1$ and $b \neq 1$ be relatively prime positive integers, then there exist integers $x$ and $y$ of the same sign such that
\begin{equation}
ax - by = 1,
\end{equation}
with $\mathrm{gcd}(x, y)=1$ and $1\leq |x| < b$, $1\leq |y| < a$. Furthermore, if $a<b$ then $|x| \geq |y|$.
\end{lemma}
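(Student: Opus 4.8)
The plan is to start from the standard Bézout identity and then use the freedom in choosing the coefficients (one may add integer multiples of $b$ to $x$ and the corresponding multiple of $a$ to $y$) to force $x$ and $y$ to lie in a prescribed range and to have the same sign. First I would invoke ordinary Bézout to get integers $x_0, y_0$ with $ax_0 - by_0 = 1$. Since $\gcd(a,b)=1$ and $a,b \geq 2$, neither $x_0$ nor $y_0$ can be zero, and in fact $\gcd(x_0,b)=1$ and $\gcd(y_0,a)=1$ automatically (any common factor of $x_0$ and $b$ would divide $1$), and likewise $\gcd(x_0,y_0)=1$. The general solution is $x = x_0 + bt$, $y = y_0 + at$ for $t \in \mathbb{Z}$, and I would pick the unique $t$ so that $x$ lands in the interval $[1,b-1]$ (note $x$ cannot be $0$ or a multiple of $b$, so after reducing modulo $b$ we may take the representative in $\{1,\dots,b-1\}$; if the reduction gives a value that is negative we add $b$). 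This fixes $1 \leq x \leq b-1$. Then from $by = ax - 1$ and $1 \leq x \leq b-1$ we get $0 < ax - 1 < ab$, hence $0 < y < a$, so $1 \leq y \leq a-1$; in particular $x$ and $y$ have the same (positive) sign. Multiplying the whole identity by $-1$ gives the variant with both negative, which is why the statement allows either sign.

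Next I would verify $\gcd(x,y)=1$: a common prime divisor of $x$ and $y$ divides $ax - by = 1$, contradiction. (The divisibility relations $\gcd(x,b)=1$, $\gcd(y,a)=1$ also persist since $x \equiv x_0 \pmod b$ and $y \equiv y_0 \pmod a$, though these are not strictly needed for the statement as written.) Finally, for the last clause, suppose $a < b$. From $ax - by = 1$ we have $ax = by + 1 > by$, so $x > \frac{b}{a} y > y$ because $\frac{b}{a} > 1$ and $y \geq 1$; since both are positive integers this gives $x \geq y+1 > y$, hence $|x| > |y|$, which is stronger than the asserted $|x| \geq |y|$. (If one prefers to allow the degenerate reading $a = b$, that is excluded here by coprimality and $a,b \neq 1$.)

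I do not expect a serious obstacle: the only point requiring a little care is the bookkeeping that places $x$ in $[1,b-1]$ while simultaneously keeping the sign of $y$ under control — one must check that the same translate $t$ that normalizes $x$ does not push $y$ out of $(0,a)$, which is handled cleanly by the inequality $0 < ax - 1 < ab$ above rather than by separately normalizing $y$. The role of the hypotheses $a \neq 1$, $b \neq 1$ is precisely to guarantee that the intervals $[1,b-1]$ and $[1,a-1]$ are nonempty, i.e. that a valid nonzero representative exists; if $a$ or $b$ equalled $1$ the statement would be vacuous or false. This lemma is then fed into the proof of Proposition \ref{lensrod} to produce, at each step, a new rod structure $(q_j,p_j)$ whose entries are controlled in size and whose determinant against the previous rod is $+1$, allowing an inductive reduction of the pair $(p,q)$ toward the terminal rod $(0,\pm 1)$.
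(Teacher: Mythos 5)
Your proof is correct, and it rests on the same elementary idea as the paper's: take a B\'ezout solution of $ax-by=1$ and massage it into the required box. The execution differs slightly. The paper starts from a solution of $a\overline{x}+b\overline{y}=1$ already satisfying $|\overline{x}|\leq b$, $|\overline{y}|\leq a$, argues that strict inequality and opposite signs must hold, and then, depending on whether the resulting combination equals $+1$ or $-1$, takes either the all-positive or the all-negative pair; this is why the lemma is stated with ``same sign'' and only $|x|\geq|y|$ (e.g.\ $a=2$, $b=3$ forces $x=y=-1$ in the negative branch, where equality occurs). You instead use the translation freedom $(x,y)\mapsto(x_0+bt,\,y_0+at)$ to normalize $x$ into $\{1,\dots,b-1\}$ and then read off $0<y<a$ from $by=ax-1$; this always lands in the all-positive case and yields the strict inequality $|x|>|y|$ when $a<b$, which is slightly stronger than what is claimed and cleaner than the paper's reliance on the bounded form of B\'ezout. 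One small slip: your parenthetical that ``multiplying the whole identity by $-1$ gives the variant with both negative'' is not right, since replacing $(x,y)$ by $(-x,-y)$ turns $ax-by=1$ into $ax-by=-1$; the all-negative solutions of $ax-by=1$ are a genuinely different translate, not the negation of yours. This is harmless here because the lemma only asks for some same-sign pair and your positive pair already supplies it, but the remark as written does not explain the paper's sign convention.
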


\begin{proof}
By Bezout's Lemma there exist integers $\overline{x}$, $\overline{y}$ such that $a \overline{x} + b \overline{y} = 1$ with $|\overline{x}|\leq b$ and $|\overline{y}|\leq a$. Moreover, one of these may be an equality only if $a\mid b$ or $b\mid a$. Since $\mathrm{gcd}(a,b)=1$ it must hold that $|\overline{x}|< b$ and $|\overline{y}|< a$.
Furthermore, since $a, b>1$ we must have one of $\overline{x}$, $\overline{y}$ negative and the other positive.  Thus there are $\tilde{x}>0$, $\tilde{y}>0$ so that $a \tilde{x} - b \tilde{y} = \pm 1$, with $\tilde{x} < b$ and $\tilde{y} < a$.  If $\mathrm{gcd}(\tilde{x},\tilde{y})=c>1$ then $\tilde{x}=c\hat{x}$, $\tilde{y}=c\hat{y}$ and $c(a \hat{x} - b \hat{y}) = \pm 1$. This, however, is impossible since $c>1$, and hence $\mathrm{gcd}(\tilde{x},\tilde{y})=1$. If $a \tilde{x} - b \tilde{y} = 1$ then choose $(x, y)= (\tilde{x}, \tilde{y})$, and if  $a \tilde{x} - b \tilde{y} = -1$
then choose  $(x, y)=(-\tilde{x}, -\tilde{y})$. Lastly, neither $x$ nor $y$ may vanish as $a,b>1$.

Consider now the case when $a < b$. It then follows from the equation $ax-by=1$ that either $x>y$ (when $x,y>0$) or $x\leq y$ (when $x,y<0$). Hence $|x| \geq |y|$ when $a < b$.
\end{proof}

\begin{proof}[Proof of Proposition \ref{lensrod}]
If $q=1$ then append the rod structure $(0,1)$ after $(q,p)$ to solve the problem.
Assume now that $p$ and $q$ are relatively prime with $p>q>1$. Apply Bezout's Lemma with
$(a, b) = (q, p)$ to find a pair $(q_1, p_1)$ of relatively prime integers satisfying
\begin{equation}
qp_1-pq_1 =1
\end{equation}
as well as
\begin{equation}
 1 \leq |q_{1}| < q, \quad\quad\quad 1 \leq |p_{1}| < p.
\end{equation}
If  $|q_{1}|=1$, then by appending the rod structure $(0, \pm 1)$ after $(q_1, p_1) = (\pm1, p_{1})$ the desired result follows.

Consider now the case when $|q_1|> 1$. Again apply Bezout's Lemma to find $(\overline{q}_2, \overline{p}_2)$ relatively prime and satisfying
\begin{equation}
|q_1| \overline{p}_2- |p_1| \overline{q}_2 =1
\end{equation}
as well as
\begin{equation}
 1 \leq |\overline{q}_{2}| < |q_1|, \quad\quad\quad 1 \leq |\overline{p}_{2}| < |p_1|.
\end{equation}
Next define $(\tilde{q}_2, \tilde{p}_2)=\pm (\overline{q}_2, \overline{p}_2)$ where the sign is chosen so that
\begin{equation}
q_1 \tilde{p}_2- p_1 \tilde{q}_2 =1.
\end{equation}

The compatibility condition requires $q_0 q_2\leq 0$, and since $q_0=q>0$ this can be achieved by setting $(q_2,p_2)=(\tilde{q}_2,\tilde{p}_2)$ if $\tilde{q}_2<0$, and $(q_2,p_2)=(\tilde{q}_2-|q_1|,\tilde{p}_2-|p_1|)$ if $\tilde{q}_2>0$. Clearly this also satisfies the admissibility condition
\begin{equation}\label{DET}
q_1 p_2- p_1 q_2 =1
\end{equation}
as well as
\begin{equation}
 1 \leq |q_{2}| < |q_1|, \quad\quad\quad 1 \leq |p_{2}| < |p_1|,
\end{equation}
and \eqref{DET} implies that $q_2$ and $p_2$ are relatively prime.
Note that if it were the case that $q_0<0$ then $(|q_1|,|p_1|)$ should be added in the last step, rather than subtracted, in order to satisfy the compatibility condition.
This iterative process may be continued until $|q_n|=1$. Then at that point, append the rod structure $(0,\pm 1)$ after $(q_n,p_n)=(\pm 1,p_n)$ in order to achieve the stated outcome.
\end{proof}

We end this section by noting an important property of the horizon rods, which corresponds to a well-known result in 4-dimensional spacetime \cite{HawkingEllis}*{Proposition 9.3.1}.
Recall that a horizon rod is defined as an interval on the $z$-axis where the matrix $(f_{ij})$ is invertible, so that the torus fibers are nondegenerate there. These fibers together with the horizon rod form a codimension 2 surface in the spacetime, which will be referred to as a horizon rod surface.

\begin{lemma}
A horizon rod surface is a future apparent horizon, and within the $t=0$ slice it is a minimal surface.
\end{lemma}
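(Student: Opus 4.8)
The plan is to exhibit the horizon rod surface as a limit of surfaces of constant $\rho$ and show that the relevant null expansion tends to zero. First I would set up coordinates: fix a horizon rod $\Gamma_k=[z_k,z_{k-1}]$ on which $(f_{ij})$ is invertible and on which the rigidly rotating Killing field $V=\partial_t+\Omega_1\partial_{\phi^1}+\Omega_2\partial_{\phi^2}$ is null (i.e. $\rho^2=0$ there after the change to corotating coordinates, or equivalently the norm of $V$ vanishes like $\rho^2$). After passing to coordinates adapted to $V$ — replacing $\phi^i$ by $\phi^i-\Omega_i t$ — the Weyl-Papapetrou form \eqref{metric} shows that $g(V,V)=-f^{-1}\rho^2$, so the hypersurface $\{\rho=0\}$ restricted to $\Gamma_k$ is a null hypersurface with $V$ tangent and null on it. The horizon rod surface $\mathcal H_k$ is the codimension-2 surface obtained by fixing $t$ and $\rho=0$, fibered by the nondegenerate tori $(\phi^1,\phi^2)\mapsto(f_{ij})$ over $z\in\Gamma_k$.

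Next I would compute the two null normals to $\mathcal H_k$ inside the spacetime and the corresponding expansions. In the corotating frame the induced metric on $\mathcal H_k$ is $f^{-1}e^{2\sigma}\,dz^2+f_{ij}\,d\phi^i\,d\phi^j$, which is manifestly nondegenerate on the horizon rod. The outgoing null normal $\ell$ is built from $V$ together with the spacelike normal direction in the $t=0$ slice; the key computation is that the cross-section area element $\sqrt{\det(f_{ij})}\,\sqrt{f^{-1}e^{2\sigma}}\,d\phi^1 d\phi^2 dz$ — equivalently $\rho^0$-behavior of the area form — has vanishing Lie derivative along $V$ (since $V$ is Killing and all metric coefficients are $t$-independent), so the expansion $\theta_{(V)}=0$ along the degenerate direction. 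One then checks that the expansion along the remaining outgoing null generator also vanishes on $\rho=0$: this follows because $\rho$ is harmonic and its gradient controls the derivative of the area form transverse to $\mathcal H_k$, and the vanishing of $\rho$ on the rod forces $\partial_\rho$ of the area element, suitably normalized, to vanish there. Thus both null expansions vanish, making $\mathcal H_k$ a (marginally trapped) future apparent horizon. The argument mirrors \cite{HawkingEllis}*{Proposition 9.3.1}: the horizon rod surface is the bifurcation surface of a Killing horizon generated by $V$, and such bifurcation surfaces are totally geodesic, hence in particular minimal within any slice orthogonal to $V$, and marginally outer trapped.

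Finally, for the statement that $\mathcal H_k$ is a minimal surface in the $t=0$ slice, I would argue directly that it is totally geodesic: the reflection isometry $t\mapsto -t$ (together with $\phi^i\mapsto-\phi^i$ in the static-limit or, more carefully, the discrete isometry fixing the corotating frame) fixes $\mathcal H_k$ pointwise and reverses the normal direction within the slice, so $\mathcal H_k$ is a component of the fixed-point set of an isometry and therefore totally geodesic, in particular minimal. The main obstacle I anticipate is the regularity/degeneration analysis at the \emph{poles} $z_k,z_{k-1}$, where $(f_{ij})$ ceases to be invertible and the cross-section degenerates to a circle: one must verify that the minimal/trapped surface condition extends smoothly across these endpoints, which requires the precise asymptotic behavior of $f_{ij}$, $\sigma$, and $\rho$ near a pole (of the form established in Section \ref{sec4} for rod structures) and a check that no conical or curvature singularity obstructs the surface from closing up. This endpoint analysis — rather than the bulk computation of the expansions — is the delicate step.
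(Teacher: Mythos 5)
Your first claim is handled essentially as in the paper: the paper takes $\mathcal{K}=\partial_t+\Omega_1\partial_{\phi^1}+\Omega_2\partial_{\phi^2}$, observes it is a null normal to the horizon rod surface $S$, and notes that $II_{ab}=g(\nabla_{\partial_a}\mathcal{K},\partial_b)$ is antisymmetric because $\mathcal{K}$ is Killing, so that $\theta_+=\gamma^{ab}II_{ab}=0$; your ``Lie derivative of the area element vanishes'' argument is the same computation in different words. But note two things. First, $\theta_+=0$ already makes $S$ a future apparent horizon by definition; you do not need the second null expansion to vanish. Second, your justification for that extra claim is not a proof: the area element of $S$ is $e^{\sigma}\,dz\,d\phi^1 d\phi^2$, and neither the harmonicity of $\rho$ nor its vanishing on the rod forces the transverse derivative of $e^{\sigma}$ (suitably normalized against a genuine second null normal) to vanish there. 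The vanishing of the second expansion is in fact equivalent, given $\theta_+=0$, to the minimality statement you still have to prove, so this step of your argument is circular as written.

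The genuine gap is in the minimality argument. The reflection $t\mapsto -t$ alone is an isometry of \eqref{metric} only when $v^i=0$, i.e.\ in the static case; for a rotating solution one must compose with $\phi^i\mapsto-\phi^i$ to get an isometry. The fixed-point set of that composite map intersects each torus fiber only in the four points with $\phi^i\in\{0,\pi\}$, so the horizon rod surface is \emph{not} pointwise fixed, is not a component of the fixed-point set, and the totally-geodesic conclusion does not follow. (Relatedly, identifying $S$ with the bifurcation surface of a bifurcate Killing horizon would require an extension argument you have not supplied; the paper avoids this entirely.) The paper's route is a direct computation: it shows $\operatorname{Tr}_S k=0$ for the second fundamental form $k$ of the $t=0$ slice, again by antisymmetry arguments using that $\partial_t,\partial_{\phi^i}$ are Killing, then writes $n+\nu/|\nu|=\psi\mathcal{K}$ for the unit normal $n$ of $S$ in the slice and concludes $0=\psi\theta_+=H_S+\operatorname{Tr}_S k=H_S$. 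You would need to replace your reflection argument with something of this kind. Your concern about regularity at the poles is reasonable but secondary; the lemma is a statement about the open horizon rod surface where $(f_{ij})$ is invertible, and the paper does not address the endpoints here.
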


\begin{proof}
At the beginning of this section we found that associated with a horizon rod there is a Killing field
\begin{equation}
\mathcal{K}=\partial_{t}+\Omega_1 \partial_{\phi^1}+\Omega_2 \partial_{\phi^2},\quad\quad\quad
\Omega_i\in\mathbb{R},
\end{equation}
which is null on the horizon rod surface $S$. Since the tangent space to $S$ is spanned by the vector fields $\partial_{z}$ and $\partial_{\phi^i}$, it easily follows from the structure of the spacetime metric \eqref{spacetimemetric} and the values for $\Omega_{i}$ that $\mathcal{K}$ is normal to $S$. The second fundamental form of $S$ in the $\mathcal{K}$-direction is then given by
\begin{equation}
II_{ab}= g(\nabla_{\partial_{a}}\mathcal{K},\partial_{b}),
\end{equation}
where $\partial_{a}$ denotes a tangent vector to $S$. Since $\mathcal{K}$ is Killing
\begin{equation}
g(\nabla_{\partial_{a}}\mathcal{K},\partial_{b})
=-g(\nabla_{\partial_{b}}\mathcal{K},\partial_{a}),
\end{equation}
and hence $II_{ab}$ is antisymmetric. Let
\begin{equation}
\gamma=f^{-1}e^{2\sigma}dz^2
+f_{ij}d\phi^{i}d\phi^{j}
\end{equation}
be the induced metric on the horizon rod surface, then the future null expansion is
\begin{equation}
\theta_{+}=\gamma^{ab}II_{ab}=0,
\end{equation}
since $\gamma^{ab}$ is symmetric.
By definition, $S$ is then a future apparent horizon.

In order to show that $S$ is minimal within the $t=0$ slice, let
\begin{equation}
\nu=(\nabla^a t)\partial_{a}=g^{tt}\partial_{t}+g^{t\phi^i}\partial_{\phi^i}
\end{equation}
be the unnormalized normal to the slice. Then the second fundamental form of the slice is
given by
\begin{equation}
|\nu|k_{cd}=g(\nabla_{\partial_{c}}\nu,\partial_d).
\end{equation}
Observe that
\begin{equation}
|\nu|k(\partial_{\phi^i},\partial_{\phi^j})
=g^{tt}g(\nabla_{\partial_{\phi^i}}\partial_{t},\partial_{\phi^j})
+g^{t\phi^l}g(\nabla_{\partial_{\phi^i}}\partial_{\phi^l},\partial_{\phi^j})
\end{equation}
is antisymmetric, and
\begin{equation}
|\nu|k(\partial_{z},\partial_{z})=g^{tt}g(\nabla_{\partial_{z}}\partial_{t},\partial_{z})
+g^{t\phi^l}g(\nabla_{\partial_{z}}\partial_{\phi^l},\partial_{z})=0,
\end{equation}
since $\partial_{t}$, $\partial_{\phi^i}$ are Killing. It follows that
\begin{equation}
\mathrm{Tr}_{S}k=\gamma^{ab}k_{ab}=f e^{-2\sigma}k(\partial_{z},\partial_{z})
+f^{ij}k(\partial_{\phi^i},\partial_{\phi^j})=0.
\end{equation}
Let $n$ denote the outward unit normal to $S$ within the $t=0$ slice, then
$n+\nu/|\nu|=\psi\mathcal{K}$ for some function $\psi$ on $S$. We then have
\begin{equation}
0=\psi\theta_{+}=H_{S}+\mathrm{Tr}_{S}k=H_{S}
\end{equation}
where $H_{S}$ denotes mean curvature, and therefore $S$ is a minimal surface within the slice.
\end{proof}

\section{The Model Map}
\label{sec5} \setcounter{equation}{0}
\setcounter{section}{5}

In this section a so called model map $\Phi_0\colon\R^3\setminus\Gamma\to \tilde{\mathbf{X}}\cong SL(3,\mathbb{R})/SO(3)$ is constructed, which encodes the prescribed asymptotic behavior near the axis and at infinity for the desired harmonic map, and also has finite tension. It may be viewed as an approximate solution to the singular harmonic map problem near the axes and at infinity.

The construction bears some similarity to the one in \cite{weinstein96}, but is more complex due to the abundance of rod structures, and the fact that even the non-rotating case is already nonlinear. We detail the construction in the case of a single component but the same approach works for all rod structures satisfying the compatibility condition. Where needed, we will point out differences required to make the approach work in the more general case.

The canonical Riemannian metric on $\tilde{\mathbf{X}}$ was constructed in Section \ref{sec3}, and it was noted that this space is parameterized by a $2\times 2$ symmetric positive definite matrix $F=(f_{ij})$ and a $2$-vector $\omega=(\omega_1,\omega_2)^t$. If $f=\det F$ then the metric in these coordinates \cite{IdaIshibashiShiromizu} is given by
\begin{align}
\begin{split}
	\tilde{\mathbf{g}} =& \frac14 \frac{df^2}{f^2} +\frac14 f^{ij}f^{kl}df_{ik}df_{jl}  + \frac12 \frac{f^{ij} d\omega_i d\omega_j}{f} \\
	=& \frac14 [\mathrm{Tr}(F^{-1}dF)]^2  +\frac14 \mathrm{Tr}(F^{-1}dF\, F^{-1} dF)  + \frac12 \frac{d\omega^t \, F^{-1}\, d\omega}{f}.
\end{split}
\end{align}
A computation shows that the components of the tension \eqref{tensiondef} of a map $\Phi_0=(F,\omega)$ are
\begin{align}\label{eulerlagrange}
\begin{split}
\tau^{f_{lj}}=&\Delta f_{lj}-f^{km}\nabla^{\mu}f_{lm}\nabla_{\mu}f_{kj}
+f^{-1}\nabla^{\mu}\omega_{l}\nabla_{\mu}\omega_{j},\\
\tau^{\omega_{j}}=&\Delta\omega_{j}-f^{kl}\nabla^{\mu}f_{jl}\nabla_{\mu}\omega_{k}
-f^{lm}\nabla^{\mu}f_{lm}\nabla_{\mu}\omega_{j},
\end{split}
\end{align}
where $\Delta$ is the Laplacian and $\nabla$ the connection associated with the flat metric \eqref{flatmetric} on $\mathbb{R}^3$. This yields the harmonic map equations $\tau=0$ in these coordinates. Let
\begin{equation}
H=F^{-1}\nabla F,\quad\quad G=f^{-1}F^{-1}\left(\nabla\omega\right)^2,\quad\quad
K=f^{-1}F^{-1}\nabla\omega,
\end{equation}
that is
\begin{equation}
  H_{\mu}{}^i{}_j= f^{ik}\nabla_\mu f_{kj},\quad\quad
  G^{i}_j=f^{-1} f^{ik} \nabla_\mu\omega_k\,\nabla^\mu\omega_j,\quad\quad
  K_\mu{}^i=f^{-1}f^{ij}\nabla_\mu\omega_j,
\end{equation}
and observe that
\begin{equation}
\left(\operatorname{div}H+G\right)^{i}_j=f^{il}\tau^{f_{lj}},\quad\quad
\left(\operatorname{div}K\right)^{i}=f^{-1}f^{ij}\tau^{\omega_{j}}.
\end{equation}
We then have
\begin{equation}
  |\tau|^2=\frac14 \left[ \mathrm{Tr}(\operatorname{div}H + G)\right]^2 + \frac14 \mathrm{Tr} \left[(\operatorname{div}H + G)(\operatorname{div}H + G) \right]
  + \frac12 f (\operatorname{div}K)^t F (\operatorname{div}K).
\end{equation}

In order to state the main result of this section we will say that a map $\Phi_0=(F,\omega)$ \textit{respects} a rod data set $\mathcal{D}$, if $(m_l,n_l)$ is the rod structure and $\mathbf{c}_l$ the potential constant within $\mathcal{D}$ for an axis rod $\Gamma_l$ then
\begin{equation}
(m_l,n_l)\in\mathrm{ker \text{ }} F|_{\Gamma_l},\quad\quad\quad \omega|_{\Gamma_l}=\mathbf{c}_l.
\end{equation}

\begin{theorem} \label{model}
Given a rod data set $\mathcal{D}$ satisfying the generalized compatibility condition
\eqref{gcompatibilitycondition}, there exists a model map $\Phi_0\colon\R^3\setminus\Gamma\to \tilde{X}$ with uniformly bounded tension having decay $|\tau|=O(r^{-7/2})$ which respects $\mathcal{D}$.
\end{theorem}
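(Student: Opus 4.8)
The plan is to build $\Phi_0=(F,\omega)$ by explicitly writing down, near each rod, a local model that realizes the prescribed kernel of $F$ and the prescribed value of $\omega$, and then gluing these local models together with cutoff functions so that the global map has finite, rapidly decaying tension. I would work in the prolate-spheroidal–type coordinates adapted to each rod endpoint $p_l$, writing $r_l,\theta_l$ for the Euclidean distance to $p_l$ and the corresponding angle, so that an axis rod $\Gamma_l$ corresponds to $\{\theta_l=0\}$ or $\{\theta_l=\pi\}$ and the asymptotic region to $r\to\infty$. On a semi-infinite axis rod with rod structure $(1,0)$ (resp. $(0,1)$) the requirement is $F\begin{pmatrix}1\\0\end{pmatrix}=0$ (resp. $\begin{pmatrix}0\\1\end{pmatrix}$) in the limit, while still keeping $\det\Phi=1$; the natural local model is obtained from the Myers–Perry metric written in Weyl–Papapetrou form, which already has the desired rod structure $\{(1,0),(0,1)\}$ and serves as the "seed" solution at infinity exactly as superposed Schwarzschild does in 4D. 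For a bounded axis rod $\Gamma_l$ with rod structure $(m_l,n_l)$ one first applies an $SL(2,\mathbb Z)$ change of the torus coordinates $(\phi^1,\phi^2)$ taking $(m_l,n_l)$ to $(1,0)$ — here the admissibility/compatibility hypothesis is what lets the changes of frame at consecutive corners be chosen compatibly — and then uses the transformed Myers–Perry (or, near a corner, flat $\mathbb R^{4,1}$) model in those coordinates; the potential-constant boundary condition $\omega|_{\Gamma_l}=\mathbf c_l$ is imposed by translating $\omega$ by the locally constant vector $\mathbf c_l$, which is consistent across corners precisely because $\mathbf c_l$ does not jump there.

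Next I would assemble the global $\Phi_0$. Cover $\R^3\setminus\Gamma$ by a neighborhood $U_\infty$ of infinity, neighborhoods $U_l$ of each corner/pole $p_l$, and neighborhoods $V_l$ of the interior of each axis rod $\Gamma_l$, together with a region bounded away from $\Gamma$ and from infinity; choose a subordinate partition of unity. On each piece take the corresponding local model matrix; on the overlaps interpolate. The subtle point is that one cannot simply convex-combine the matrix entries of $F$ and $\omega$, because the determinant constraint $\det\Phi=1$ and positive-definiteness must be preserved and the kernel directions of two adjacent local models differ. The clean way is to interpolate in the symmetric space itself: on an overlap between two models $\Phi_0^{(a)},\Phi_0^{(b)}$ (neither of which degenerates there, since overlaps are in the interior away from the axis, or are handled by the compatible frame change near a corner), write $\Phi_0^{(b)}=\Phi_0^{(a)}{}^{1/2}\exp(S)\Phi_0^{(a)}{}^{1/2}$ for a symmetric traceless $S$ (geodesic normal coordinates on $\tilde{\mathbf X}$ centered at $\Phi_0^{(a)}$), and set $\Phi_0=\Phi_0^{(a)}{}^{1/2}\exp(\chi S)\Phi_0^{(a)}{}^{1/2}$ with $\chi$ the cutoff; this automatically stays in $\tilde{\mathbf X}$. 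Near the corners one must be slightly more careful and interpolate the frame-changed models so that the prescribed kernels on the two abutting rods are exactly realized on the respective axes — this is where the compatibility condition \eqref{gcompatibilitycondition} enters, guaranteeing that the two kernel vectors, after the normalization of signs described in Section \ref{sec4}, can be connected through $\tilde{\mathbf X}$ without the interpolating ellipsoid being forced to degenerate.

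Finally I would estimate the tension. On each region where $\Phi_0$ equals an exact local model the tension $\tau$ is identically zero, by construction, because Myers–Perry and flat space are genuine harmonic maps; hence $\tau$ is supported only in the finitely many gluing collars. In each collar $\Phi_0$ is smooth and nondegenerate, so $\tau$ is bounded there, and in the collar near infinity one checks decay: the two models being glued both approach the same asymptotic profile with the standard Weyl–Papapetrou falloff, so $S=O(r^{-1})$ there together with its derivatives, and plugging into the explicit formula \eqref{eulerlagrange} for $\tau^{f_{lj}},\tau^{\omega_j}$ — each term is a product of a second derivative of a collar cutoff (supported in a fixed dyadic annulus $r\sim R$, scaling like $R^{-2}$) times model quantities of size $O(1)$, or first-derivative $\times$ first-derivative terms of size $O(R^{-2})$, and then the flat volume and the $1/f$ weights — gives the stated $|\tau|=O(r^{-7/2})$. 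The arithmetic of this exponent is the one routine-but-delicate computation; I would carry it out by writing $\tau$ schematically as $\nabla^2\chi\cdot(\text{model})+\nabla\chi\cdot\nabla(\text{model})$ and tracking the falloff rate of each factor in the asymptotic chart, where the extra half-power beyond the naive $r^{-2}$ comes from the difference of the two asymptotic models being one order better than either model's deviation from flatness.

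\medskip

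\textbf{Main obstacle.} The genuinely hard part is the construction near a pair of consecutive corners: one must produce a single smooth nondegenerate map on $\R^3\setminus\Gamma$ whose matrix $F$ has prescribed but \emph{different} kernel lines $(m_{l-1},n_{l-1})$, $(0,0)$, $(m_{l+1},n_{l+1})$ on three consecutive rods, while keeping $\det\Phi=1$ and $F>0$ throughout. Showing that such an interpolation exists with bounded tension is exactly what forces the compatibility condition \eqref{gcompatibilitycondition} — without it the two kernel directions on either side of the middle rod cannot be joined through the cone of positive-definite unimodular matrices along a path whose endpoints degenerate in the prescribed rank-one way without the path itself being pinched — and verifying this geometric fact, and organizing the cutoffs so the tension stays integrable and decays, is the technical heart of the section.
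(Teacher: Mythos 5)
Your overall strategy (explicit local models near each rod, cutoff gluing, then a tension estimate) is the same as the paper's, but two of the steps you defer are precisely where the content of the theorem lies, and as written they contain genuine gaps.

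First, the consecutive-corner interpolation. You correctly identify this as the main obstacle, but you only assert that the compatibility condition ``guarantees'' the two kernel directions can be joined through $\tilde{\mathbf X}$; you do not construct the path or verify that its tension is bounded. The paper's resolution is concrete: near each corner it writes $F=hF_0h^t$ with $F_0$ diagonal with \emph{harmonic} exponents and $h$ a constant matrix with rational (not integral, and not unimodular) entries as in \eqref{hnhs} --- this is legitimate only because $\omega$ is constant there, so $F\mapsto hFh^t$ preserves harmonicity even though it is not an isometry of $\tilde{\mathbf X}$. Your restriction to $SL(2,\mathbb Z)$ frame changes cannot normalize two consecutive corners simultaneously. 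The transition between the two corner models then reduces to deforming $\mathbf k=h_S^{-1}h_N$ in \eqref{987} to an upper-triangular matrix with unit diagonal, and the generalized compatibility condition \eqref{gcompatibilitycondition} enters quantitatively: it forces the lower-right entry $-\tfrac{r(mq-np)}{m(ps-qr)}$ to be positive, so the scalar path $\lambda(z)$ stays strictly positive and the conjugating matrices never degenerate. Your alternative ansatz $\Phi_0^{(a)1/2}\exp(\chi S)\Phi_0^{(a)1/2}$ does stay in $\tilde{\mathbf X}$, but bounded tension is not automatic where the endpoint maps degenerate: $F^{-1}\nabla F$ blows up like $\rho^{-1}$ on the axis, and the cancellations that keep $\div(F^{-1}\nabla F)$ bounded in \eqref{divFdF} (e.g.\ the boundedness of $e^{v-u}\lambda''$ and of $\partial_z u_a$ on the relevant rods) are specific to the paper's choice of harmonic exponents and must be exhibited, not assumed.

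Second, the twist potentials and the decay rate. Exact harmonic local models cannot realize an arbitrary rod data set: the potential constants $\mathbf c_l$ are prescribed freely on each axis rod, and they jump across horizon rods, so some non-harmonic interpolation of $\omega$ is unavoidable; your claim that $\tau\equiv 0$ wherever $\Phi_0$ equals a local model therefore cannot hold globally. In the paper the asymptotic model is flat space (not Myers--Perry), $\omega=\omega(\theta)$ interpolates between the constants in the angular variable at infinity, and the exponent $7/2$ comes from the explicit computation $f(\div K)^tF(\div K)=O(r^{-7})$ and $G=O(r^{-5})$ for that interpolation --- not from cutoff derivatives, which by your own accounting would give $O(r^{-3})$ at best. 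The decay matters downstream: the barrier $w=c(1+r^2)^{-1/4}$ in Section \ref{sec7} needs $|\tau(\varphi_0)|\lesssim r^{-5/2}$, so you must actually produce a rate, and your sketch does not.
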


\begin{proof}
As mentioned above, we give a detailed proof for the case of the rod configuration corresponding to a single lens horizon $L(p,1)$, see Figure~\ref{domain}. However, we will indicate below the changes required for the general case.

\begin{figure}
\includegraphics[width=10cm]{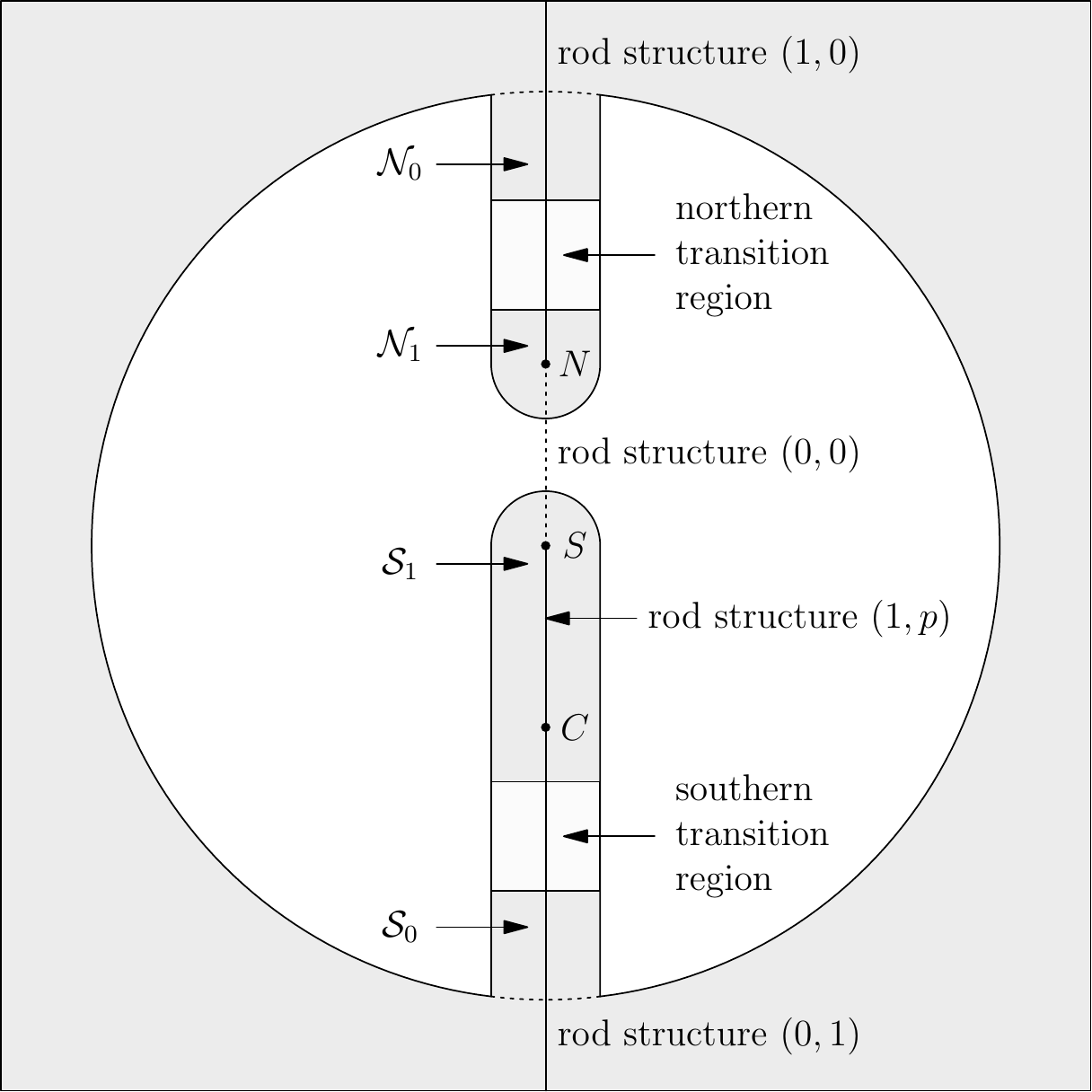}
\caption{Model Map Construction}  \label{domain}
\end{figure}

The only requirement of the map $\Phi_0$ within the white area in Figure~\ref{domain} will be that it is a smooth extension of the map which will be defined explicitly in the gray region. This can easily be achieved since the white area remains a fixed distance away from the singular set $\Gamma$, and this clearly implies that the tension is bounded within the white area.

For convenience, we define a pair of harmonic functions needed in the construction. For $a\in\R$ let $r_{a}$ be the Euclidean distance from the point $z=a$ on the $z$-axis, and let $\theta_a$ be the polar angle about this center. Then set
\begin{equation}
	u_a =\log(r_a-(z-a))=\log\bigl(2r_a\sin^2(\theta_a/2)\bigr), \quad\quad
v_a = \log(r_a+(z-a))=\log\bigl(2r_a\cos^2(\theta_a/2)\bigr).
\end{equation}
It is easy to check that these functions are harmonic. Furthermore $u_a$ behaves like $2\log\rho$ near the $z>a$ part of the $z$-axis and is locally bounded below on the $z<a$ part of the $z$-axis. Also, clearly $u_a(\rho,z-a) = v_a(\rho,-(z-a))$ and hence $v_a$ behaves like $2\log\rho$ on the $z<a$ part of the $z$-axis and is locally bounded below on the $z>a$ part of the $z$-axis.

We begin with the definition of $\Phi_0$ outside a large ball. The map there is based on the Minkowski metric \eqref{Minkowski} and is given by
\begin{equation}
	F=\begin{pmatrix} e^{u_0-\log2} & 0 \\ 0 & e^{v_0-\log2} \end{pmatrix}, \qquad \omega=\omega(\theta),
\end{equation}
where $\theta=\theta_0$. The function $\omega(\theta)$ is smooth and chosen so that $\omega$ is the appropriate constant on $[0,\epsilon]\cup[\pi-\epsilon,\pi]$, with $0<\epsilon<\pi/2$ fixed so that $\omega$ is constant on the regions $\mathcal N_0$ and $\mathcal S_0$. Observe that this map is harmonic wherever $\omega$ is constant, since $G=0$ and $\div F^{-1}\nabla F=0$. It will now be shown that the tension $|\tau|$ decays like $O(r^{-7/2})$, which as will be seen later is sufficient for the main existence and uniqueness arguments. Since the tension vanishes for $\theta\in[0,\epsilon]\cup[\pi-\epsilon,\pi]$, we need only estimate $|\tau|$ on the interval $[\epsilon,\pi-\epsilon]$. An explicit calculation gives
\begin{align}
\begin{split}
	f (\operatorname{div}K)^t F (\operatorname{div}K) =&
    \frac{4\csc^2\theta \sin^2(\theta/2)}{r^7}
    \left[\csc^4(\theta/2) \bigl(\omega_1''-(\csc\theta+2\cot\theta) \omega_1'\bigr)^2 \right. \\
    &\left. + 4 \csc^2\theta \bigl(\omega_2''+(\csc\theta-2\cot\theta) \omega_2'\bigr)^2\right]\\
    =& O(r^{-7}),
\end{split}
\end{align}
and
\begin{equation}
	G=\frac{\csc^2(\theta/2)\sec^2(\theta/2)}{r^5}
	\begin{pmatrix}
		 \omega _1'^2\csc^2(\theta/2) &
		 \omega _1'\omega _2\csc^2(\theta/2) \\[1ex]
		 \omega _1'\omega _2 \sec^2(\theta/2) &
		 \omega _2'^2 \sec^2(\theta/2) &
	\end{pmatrix}
	=O(r^{-5}).
\end{equation}
Since $\div H=0$, it follows that $|\tau|=O(r^{-7/2})$.

It remains to define the map inside the two tubular neighborhoods capped with hemispheres. Consider first the northern tubular neighborhood. Let $z=b$ indicate the location of the point $N$. Then in this region define
\begin{equation}
	F=\begin{pmatrix} e^{u} & 0 \\ 0 & e^{v} \end{pmatrix}, \qquad \omega=\mathbf{c}_1,
\end{equation}
where
\begin{equation}
u=\lambda (u_0-\log2) + (1-\lambda) u_b,\quad\quad v=\lambda (v_0-\log2),
\end{equation}
and $\lambda=\lambda(z)$ is a smooth cut-off function with $\lambda=1$ in $\mathcal N_0$ and $\lambda=0$ in $\mathcal N_1$. This leads to the correct rod structure, and the definitions outside the large ball and in $\mathcal N_0$ agree. Moreover
\begin{equation}
	\div H = \begin{pmatrix}  \Delta [\lambda(u_0-u_b)] & 0 \\ 0 & \Delta [\lambda v_0]\end{pmatrix},
\end{equation}
which is bounded. Indeed
\begin{equation}
	\Delta[\lambda(u_0-u_b)] = (u_0-u_b)\Delta\lambda + 2(\partial_{z}\lambda) \partial_z(u_0-u_b),
\end{equation}
and $\partial_z u_a = 1/r_a$ (on the $z$-axis) for $a=0, b$
is clearly bounded in the transition region. Similarly $\Delta[\lambda v_0]$ is bounded since $\partial_z v_0 = -1/r_0=-1/r$ (on the $z$-axis) is bounded. It follows that $|\tau|$ is bounded in the northern region, as $G=0$ and $K=0$ due to the constancy of $\omega$.

Consider now the southern tubular neighborhood. The map in $\mathcal S_0$ is defined exactly as in $\mathcal N_0$, that is with the same $F$ but with $\omega=\mathbf{c}_2$. In fact $\omega$ is set to be the constant $\mathbf{c}_2$ in the entire southern tubular neighborhood. Next, let the south pole $S$ and corner point $C$ be located at $z=c$ and $z=0$,
respectively. Then in $\mathcal S_1$ the remainder of the map is defined by
\begin{equation}
	F= hF_0h^t = h \begin{pmatrix} e^{u} & 0 \\ 0 & e^{v} \end{pmatrix}h^t,
\end{equation}
where
\begin{equation} \label{h}
	h = \begin{pmatrix}  1& -p \\ 0 & 1 \end{pmatrix}
\end{equation}
and $v=v_0-\log2$, $u=u_0-u_c$. As before $\div(F_0^{-1}\nabla F_0)=0$ and hence \begin{equation}
\div(F^{-1}\nabla F) = h^{-t} \div(F_0^{-1}\nabla F_0) h^t = 0,
\end{equation}
where for notational convenience $h^{-t}:=(h^t)^{-1}$.
It follows that $\Phi_0$ is a harmonic map in $\mathcal{S}_1$. In order to verify
that the rod structure is correct, observe that
\begin{equation} \label{structure}
	F\vector{1}{0} = \vector{e^u+p^2e^v}{-pe^v}, \quad\quad F\vector01 = \vector{-pe^v}{e^v}, \quad\quad F\vector1p = \vector{e^u}{0}.
\end{equation}
From this it is clear that the only direction which degenerates on the disk rod (between $S$ and $C$) is $(1,p)$, and the only direction that degenerates on the south rod (below $C$) is $(0,1)$. Furthermore, since $F_0$ is nonsingular on the horizon rod the same is true of $F$.

Lastly, the map will be defined on the southern transition region. Recall that $\omega$ is constant. Moreover if $F$ defined in $\mathcal{S}_1$ can be transitioned
to a diagonal $F$ satisfying $\div(F^{-1}\nabla F)=0$, then we can complete the transition in the same manner as in the northern transition region. Thus it remains to demonstrate the transition to a diagonal $F$. Set
\begin{equation}
	F =  h(z) F_0 h(z)^t, \qquad\quad  h(z) = \begin{pmatrix}  1& -p\lambda(z) \\ 0 & 1 \end{pmatrix},
\end{equation}
where $F_0$ is as above, and $\lambda(z)$ is a smooth cut-off function which is equal to $1$ near $\mathcal S_1$ and equal to $0$ near $\mathcal S_0$. To verify that $\div(F^{-1}\nabla F)$ is bounded in the transition region compute
\begin{equation}
	F^{-1} \nabla F
	= (F_0h^t)^{-1} (h^{-1}\nabla h) F_0h^t + h^{-t}  (F_0^{-1}\nabla F_0) h^t + h^{-t}\nabla h,
\end{equation}
and
\begin{align}
\begin{split}
\label{divFdF}
	\div(F^{-1} \nabla F) = &[\nabla(F_0h^t)^{-1}]\cdot (h^{-1}\nabla h) F_0h^t + (F_0h^t)^{-1} \div (h^{-1}\nabla h) F_0h^t\\
	&+ (F_0h^t)^{-1} (h^{-1}\nabla h)\cdot \nabla (F_0h^t)
+ (\nabla h^{-t}) \cdot (F_0^{-1}\nabla F_0) h^t\\
	&+h^{-t}  \div(F_0^{-1}\nabla F_0) h^t + h^{-t}  (F_0^{-1}\nabla F_0)\cdot \nabla h^t + \div(h^{-t}\nabla h).
\end{split}
\end{align}
Each term may now be estimated individually. First note that
the fifth term vanishes and the seventh term is clearly bounded.
Furthermore
\begin{equation}
	F_0^{-1} \nabla F_0 = \begin{pmatrix} \nabla u & 0 \\ 0 & \nabla v \end{pmatrix},
\end{equation}
and since $h$ depends only on $z$ we may replace $\nabla u$ and $\nabla v$ in \eqref{divFdF} by $\partial_z u$ and $\partial_z v$, respectively. As explained above these $z$-derivatives are bounded, and since $h^t$, $h^{-t}$, $\partial_{z}h^t$ and $\partial_{z}h^{-t}$ are bounded it follows that the fourth and sixth terms are bounded. Next observe that the second term becomes
\begin{equation}
	 (F_0h^t)^{-1} \div (h^{-1}\nabla h) F_0h^t= pe^{v-u}\lambda''
	 \begin{pmatrix} p\lambda & -1 \\ p^2\lambda^2 & -p\lambda \end{pmatrix},
\end{equation}
which is bounded. Furthermore the sum of the first and third terms is
\begin{align}
\begin{split}
	&[\nabla(F_0h^t)^{-1}]\cdot (h^{-1}\nabla h) F_0h^t
	+ (F_0h^t)^{-1} (h^{-1}\nabla h)\cdot \nabla (F_0h^t)\\
 =&
	pe^{v-u}\lambda'
	\begin{pmatrix} p\bigl[\lambda (\partial_z v-\partial_z u)+\lambda'\bigr] & \partial_z u-\partial_z v \\
	p^2\lambda\bigl[\lambda(\partial_z v-\partial_z u)+2\lambda'\bigr] & -p\bigl[\lambda(\partial_z v-\partial_z u)+\lambda'\bigr] \end{pmatrix},
\end{split}
\end{align}
which again is bounded. It follows that $|\tau|$ is bounded in the southern region, and this completes the proof for the rod data set associated with a single component lens horizon $L(p,1)$.

\begin{remark} \label{integer}
We note that in the argument above showing that $\div(F^{-1}\nabla F)$ is bounded no use was made of the fact that $p$ is an integer. This is will be important in what follows.
\end{remark}

Consider now the case of a general rod data set, in which consecutive corners may be present. In this situation the map will be defined inductively one corner at a time, with a transition region between any two consecutive corners, as well as a transition region on each of the two semi-infinite rods.  The only feature which remains to be treated is the case of two consecutive corners. Suppose then that consecutive corners occur at points $C_N$ and $C_S$ along the $z$-axis, with $z=\mathrm{a}$ and $z=\mathrm{b}$ at $C_N$ and $C_S$ respectively. Let there be rod structures $(m,n)$ above $C_N$, $(p,q)$ between $C_N$ and $C_S$, and $(r,s)$ below $C_S$. It will be assumed that $m\neq 0$, $p\neq 0$, $r\neq 0$, and that the generalized compatibility condition is satisfied
\begin{equation}\label{1256}
mr(ps-rq)(mq-np)\leq 0.
\end{equation}
Note that this quantity is nonzero (and hence negative) since $ps-rq\neq 0$ and $mq-np\neq 0$ due to the fact that $C_N$ and $C_S$ are genuine corners.

Let $v=u_{\mathrm{b}}-u_{\mathrm{a}}$ and $u=2\log\rho-v$ and set
\begin{equation}
	F_0=\begin{pmatrix} e^u & 0 \\ 0 & e^v \end{pmatrix},
\end{equation}
so that $F_0$ gives rod structure $(1,0)$ above $C_N$ and below $C_S$, and $(0,1)$ between $C_N$ and $C_S$. Next define $F_N=h_NF_0h_N^t$ near $C_N$ and $F_S=h_SF_0h_S^t$ near $C_S$, where
\begin{equation} \label{hnhs}
	h_N=\begin{pmatrix} -q/p & -n/m \\ 1 & 1 \end{pmatrix}, \qquad
	h_S=\begin{pmatrix} -q/p & -s/r \\ 1 & 1 \end{pmatrix}.
\end{equation}
It is straightforward to check that the maps $F_N$ and $F_S$ yield the desired rod structures on each of the three rods in neighborhoods of $C_N$ and $C_S$ respectively, and that $(F_N,\omega)$ and $(F_S,\omega)$ are harmonic whenever $\omega$ is constant. This latter property arises from the fact that although $F\mapsto hFh^t$, $\omega\mapsto h\omega$ is an isometry of $\tilde{\mathbf{X}}$ if and only if $\det h=\pm1$, this determinant condition is not required here for the harmonic map equations to be satisfied since $\omega$ is constant. It remains to define $F$ in a transition region between $C_N$ and $C_S$. In order to do this first let $\bar{F}_N=\mathbf{k} F_0 \mathbf{k}^t$ and $\bar{F}_S=F_0$, where
\begin{equation}\label{987}
	\mathbf{k} = h_S^{-1}h_N=\begin{pmatrix} 1 & \frac{p(ms-nr)}{m(ps-qr)} \\[1ex]
           0 & -\frac{r(mq-np)}{m(ps-qr)} \end{pmatrix}.
\end{equation}
If there is a smooth transition $\mathbf{k}=\mathbf{k}(z)$ from $h_S^{-1}h_N$ to
\begin{equation}\label{jfh}
	 \begin{pmatrix} 1 & \frac{p(ms-nr)}{m(ps-qr)} \\[1ex] 0 & 1 \end{pmatrix},
\end{equation}
then by Remark~\ref{integer}
it is clear that we can further transition $\mathbf{k}$ to the identity as in the arguments above the remark, since the only difference between \eqref{jfh} and $h$ in \eqref{h} is the fact that the off-diagonal element is an integer in the latter matrix.
It follows that $\bar{F}$ would then be defined in the whole region encompassing both corners, having the property that it is equal to $\bar{F}_N$ near $C_N$ and equal to $\bar{F}_S$ near $C_S$. Finally, taking $F=h_S \bar{F} h_S^t$ produces a map with finite tension which coincides with $F_N$ near $C_N$ and $F_S$ near $C_S$.

It remains to define the transition from \eqref{987} to \eqref{jfh}. Set
\begin{equation}
	\mathbf{k}(z) = \begin{pmatrix} 1 & \varsigma \\ 0 & \lambda(z) \end{pmatrix},\quad\quad\quad \varsigma=\frac{p(ms-nr)}{m(ps-qr)},
\end{equation}
where $\lambda(z)$ is a smooth cut-off function satisfying $\lambda(z)=-\frac{r(mq-np)}{m(ps-qr)}$ near $C_N$ and $\lambda(z)=1$ for $z<(\mathrm{a}+\mathrm{b})/2$.
According to the generalized compatibility condition \eqref{1256},
$\lambda(z)$ may be chosen strictly positive. The arguments following \eqref{divFdF} may now be repeated to show that the tension remains bounded. In particular, the terms four through seven of \eqref{divFdF} are bounded in the current setting. By denoting $F_{\mathbf{k}}=F_0 \mathbf{k}^t$ the second term becomes
\begin{equation}
	F_{\mathbf{k}}^{-1} \div (\mathbf{k}^{-1}\nabla \mathbf{k})F_{\mathbf{k}} =  \varsigma e^{v-u}\lambda'
	\begin{pmatrix} -\dfrac{\varsigma}{\lambda} & -1 \\[1.5ex]
	\dfrac{\varsigma^2+e^{u-v}}{\lambda^2} & \dfrac{\varsigma^2e^{u-v}}{\varsigma\lambda}
	\end{pmatrix},
\end{equation}
and the sum of the first and third terms is
\begin{equation}
	\nabla F_{\mathbf{k}}^{-1} \cdot(\mathbf{k}^{-1}\nabla \mathbf{k}) F_{\mathbf{k}} + F_{\mathbf{k}}^{-1} (\mathbf{k}^{-1}\nabla \mathbf{k})\cdot \nabla F_{\mathbf{k}} = \varsigma e^{v-u}\lambda'
	\begin{pmatrix} \dfrac{\varsigma(u_z-v_z)}{\lambda} & u_z - v_z - \dfrac{\lambda'}{\lambda} \\[1.5ex]
	\dfrac{\varsigma^2\lambda(u_z-v_z)+(\varsigma^2+e^{u-v}\lambda')}{\lambda^3} & \dfrac{\varsigma(v_z-u_z)}{\lambda}
	\end{pmatrix},
\end{equation}
both of which are bounded. Similar arguments may be used to treat the cases when one of $m$, $p$, $r$ is zero.
\end{proof}

\section{Energy Estimates}
\label{sec6} \setcounter{equation}{0}
\setcounter{section}{6}

In the rank 1 case treated in \cite{weinstein96}, a priori estimates for the singular harmonic map problem relied heavily on the uniformly strict negative curvature of the target spaces. In the current setting the target symmetric space $\mathbf{X}=SL(3,\mathbb{R})/SO(3)$ is of rank 2, that is the dimension of a maximal flat subspace is 2. It follows that $X$ is of nonpositive curvature and the methods of \cite{weinstein96} break down. In order to overcome this difficulty, we will employ a generalization of horospherical coordinates from hyperbolic space so that the flat directions as well as the coordinate planes of strict negative curvature are explicitly identified, and are thus more easily exploited. Coordinate systems of the symmetric space $\mathbf{X}=SL(3,\mathbb{R})/SO(3)$ have been investigated previously, as in \cite{MazzeoVasy}, yet what we need requires a different set of properties.

Consider the Iwasawa decomposition \cite{BallmanGromovSchroeder} of $G=SL(3,\mathbb{R})$ given by $G=KAN$ where the three subgroups are $K=SO(3)$,
\begin{equation}
A = \{ \mathrm{diag} (\lambda_1, \lambda_2, \lambda_3) \,\, | \,\, \lambda_i > 0,\text{ for }i=1,2,3,\text{ } \lambda_1 \lambda_2 \lambda_3 =1 \},
\end{equation}
and
\begin{equation}
N=\{\text{upper triangular matrices with 1's on the diagonal}\}.
\end{equation}
For each $g\in G$ there exist unique elements $k\in K$, $a\in A$, and $n\in N$ such that
$g=kan$. Moreover by taking inverses we have $G=NAK$, and hence $\mathbf{X}=G/K$ may be identified with the subgroup $NA$. Let $x_0=[Id]\in \mathbf{X}$ then the orbit $A\cdot x_0$ represents a maximal flat so that it is a totally geodesic submanifold with vanishing curvature. The last property follows from the curvature formula in Section \ref{sec3}, and the fact that the Lie algebra
\begin{equation}
\mathfrak{a} = \{ \mathrm{diag} (\lambda_1, \lambda_2, \lambda_3) \,\, | \,\, \sum \lambda_i = 0\}
\end{equation}
associated with $A$ is abelian ie. $[\alpha_1,\alpha_2]=0$ for all $\alpha_1, \alpha_2\in \mathfrak{a}$.
On the other hand, the orbit $N\cdot x_0$ is a horocycle determined by the Weyl chamber
\begin{equation}
\mathfrak{a}^+ = \{ \mathrm{diag} (\lambda_1, \lambda_2, \lambda_3) \,\, | \,\,
\lambda_1>\lambda_2>\lambda_3,\text{ }\sum \lambda_i = 0\}\subset\mathfrak{a}.
\end{equation}
It is a closed submanifold with the property that every flat which is asymptotic to the Weyl chamber at infinity
\begin{equation}
\mathrm{w}^{+}:=(A^{+}\cdot x_0)(\infty)=\{\gamma(\infty)\mid \gamma(s)=\mathrm{exp}(s\alpha^{+})\cdot x_0,\text{ }\alpha^{+}\in\mathfrak{a}^{+}\},
\end{equation}
intersects the horocycle orthogonally in exactly one point; recall that a flat $\mathcal{F}$ is asymptotic to a Weyl chamber $\mathrm{w}$ at infinity if $\mathrm{w}\subset\mathcal{F}(\infty)$.
In particular, the horocycle $N \cdot x_0$ and flat $\mathcal{F}_{x_0}:=A \cdot x_0$
intersect orthogonally at $x_0$, as can be seen from the orthogonality between the respective Lie algebras $\mathfrak{n}$ (all upper triangular matrices with zeros on the diagonal) and $\mathfrak{a}$ with respect to the Riemannian metric at $x_0$ given in Section \ref{sec3}.

A foliation by flats may be constructed \cite{BallmanGromovSchroeder} from the action of $N$. More precisely
\begin{equation}
\mathbf{X} = \bigcup_{n \in N} n \cdot \mathcal{F}_{x_0},
\end{equation}
where $n\cdot \mathcal{F}_{x_0} \cap n' \cdot \mathcal{F}_{x_0} = \emptyset$ for $n \neq n'$ and each $n \cdot\mathcal{F}_{x_0}$ is asymptotic to the Weyl chamber $\mathrm{w}^+$. Since each point $x \in \mathbf{X}$ can be uniquely written as
$n a \cdot x_0$, and $a \cdot \mathcal{F}_{x_0} = \mathcal{F}_{x_0}$ as sets,
the assignment $x \mapsto \mathcal{F}_x=na\cdot\mathcal{F}_{x_0}$ defines a smooth foliation of $\mathbf{X}$ whose leaves are the set of totally geodesic submanifolds $\{ n \cdot F_{x_0}\}_{n \in N}$, each of which is isometric to ${\mathbb R}^2$. By homogeneity of $\mathbf{X} = G/K$,  the 3-dimensional horocycle $N \cdot x$ and the 2-dimensional flat $\mathcal{F}_x$ intersect orthogonally at (and only at) $x$.  In this sense, the pair $(a, n)$ gives a horocyclic orthogonal coordinate system for $\mathbf{X}$.

Let $\gamma_{x_0}(s)$ be an arc-length parameterized geodesic satisfying $\gamma_{x_0}(0)  = x_0$, and $\gamma_{x_0}(\infty) \in \mathrm{w}^+$. Equivalently $\gamma_{x_0}'(0) \in T_{x_0} X$ is an element of a Weyl chamber $\mathfrak{a}^+$, so that $\gamma_{x_0}$ is regular in the sense that it is contained in a unique 2-dimensional flat, namely $\mathcal{F}_{x_0}$. Since the action by $na$ on $\mathbf{X}$ is isometric and preserves the combinatorial structure of the Weyl chambers projected to $\mathbf{X}(\infty)$, it follows that $\gamma_{x}(s):=na\cdot \gamma_{x_0}(s)$ is a regular geodesic contained in the flat $n\cdot\mathcal{F}_{x_0}$, and is asymptotic to $\mathrm{w}^+$. In fact, the distance $d_{\mathbf{X}}(n\cdot \gamma_{x_0}(s),\gamma_{x_0}(s))$ decays exponentially and $d_{\mathbf{X}}(na\cdot \gamma_{x_0}(s),\gamma_{x_0}(s))\rightarrow d_{\mathbf{X}}(a\cdot x_0, x_0)$.

On the flat $\mathcal{F}_{x_0}$ there is a natural Euclidean coordinate system $r=(r_1, r_2)$, where the origin is identified with $x_0$, the $r_1$-axis coincides with the regular geodesic $\gamma_{x_0}(s)$, and the $r_2$-axis is the orthogonal line to $\gamma_{x_0}(s)$.  The $r_1$ axis is chosen to have the opposite orientation from that of $\gamma_{x_0}$, so that $r_1\rightarrow\infty$ corresponds to $s\rightarrow -\infty$, and similarly for $r_2$.
The $(r_1, r_2)$ coordinate system may then be pushed forward to the flat $n \cdot \mathcal{F}_{x_0}$ where the origin is identified with $n \cdot x_0$, the $r_1$-axis is the geodesic $\gamma_{n \cdot x_0}(s)$, and the $r_2$-axis is again the orthogonal line to $\gamma_{n x_0}(s)$ in the flat.  Hence the the horocyclic coordinates $(a, n)$ may be represented by $(r, n)$. Moreover, for each $n'\in N$ there is an isometry which preserves the $r$-coordinates and for each $r'$ there is a diffeomorphism which preserves the $n$-coordinates
\begin{equation}
\Xi_{n'} : (r_1, r_2, n) \mapsto (r_1, r_2, n' n),\quad\quad
\Xi_{r'}: (r_1, r_2, n) \mapsto (r_1+r_{1}', r_2+r_{2}', n).
\end{equation}
The $r$-translations map horocycles to horocylces, and thus if
$\theta=(\theta^1,\theta^2,\theta^3)$ is a system of global coordinates on $N\cdot x_0\cong\mathbb{R}^3$ then they may be pushed forward to all horocycles by the action of $\Xi_{r'}$.
It follows that $(r,\theta)$ form a system of global coordinates on $\mathbf{X}$ with the property that the coordinate fields $\partial_{r_i}$ and $\partial_{\theta^j}$ are orthogonal. By combining the observations above, the $G$-invariant Riemannian metric on $\mathbf{X}$ can be expressed in these coordinates by
\begin{equation}
\mathbf{g}= dr^2+Q(d\theta,d\theta) = dr_{1}^2 +dr_{2}^2 + Q_{ij}d \theta^{i} d\theta^{j},
\end{equation}
where the coefficients $Q_{ij}=Q_{ij}(r,\theta)$ are smooth functions.

As a demonstration of this framework in the simpler setting of rank 1, consider the hyperbolic plane $\mathbb{H}^2$. The half plane coordinates $(U,V)$, $U>0$ may be transformed to orthogonal horocyclic coordinates $(r,\theta)$ by $r=\log U$ and $\theta=V$ to find
\begin{equation}
\mathbf{g}_{-1}=\frac{dU^2 + dV^2}{U^2} = dr^2 + e^{-2r} d\theta^2.
\end{equation}
Here the flat $\mathcal{F}_{x_0}$ in the upper half plane model with $x_0 =(0,1)$ is the positive $U$-axis $\{V=0\}$, and the horocycle $N\cdot x_0$ is the horizontal line $\{U=1\}$.


For any unit tangent vector $Z \in T_x \mathbf{X}$ perpendicular to $\mathcal{F}_x$, the sectional curvature
\begin{equation}
\mathcal{K}(Z,\gamma_{x}'(0))=\langle R(Z, \gamma_{x}'(0))\gamma_{x}'(0), Z \rangle
\end{equation}
is negative, since $\mathcal{F}_x$ is a flat of maximal dimension. Moreover, such curvatures are uniformly negative (bounded away from zero) by compactness of the set of unit normal vectors to $\mathcal{F}_x$ and the homogeneity of $\mathbf{X}$.  The uniform (in $x$ as well as choice of 2-plane) upper and the lower bounds of these curvatures will be denoted by
\begin{equation}\label{curvaturebounds}
-c^2 \leq  \mathcal{K} \leq -b^{2} < 0.
\end{equation}

\begin{lemma}\label{lemmaimhof}
Let $J$ be a Jacobi field perpendicular to the
flat $\mathcal{F}_x$ along an arc-length parameterized geodesic $\gamma(s)\in\mathcal{F}_x$. Assume further that the Jacobi field is stable in that it is bounded as $s\rightarrow -\infty$, then
\begin{equation}
e^{bs}|J(0)| \leq|J(s)|\leq e^{cs}|J(0)|.
\end{equation}
\end{lemma}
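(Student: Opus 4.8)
The plan is to restrict everything to the normal bundle of the flat $\mathcal{F}_x$ along $\gamma$, where the curvature is uniformly negative, and then to compare $|J|$ with the constant-curvature cases $-b^2$ and $-c^2$ by means of the stable solution of the Riccati equation along $\gamma$.

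First I would carry out the reduction to $N\mathcal{F}_x$. Let $R_{\gamma'}:=R(\,\cdot\,,\gamma'(s))\gamma'(s)$ denote the symmetric Jacobi operator along $\gamma$. Since $\mathcal{F}_x$ is a totally geodesic flat, the Gauss equation identifies the restriction of the ambient curvature to $T\mathcal{F}_x$ with the intrinsic curvature of a maximal flat, which vanishes; hence $R_{\gamma'}$ annihilates $T_{\gamma(s)}\mathcal{F}_x$, and being symmetric it preserves the orthogonal complement $N_{\gamma(s)}\mathcal{F}_x$, a subbundle which is parallel along $\gamma$ because $\mathcal{F}_x$ is totally geodesic. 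By hypothesis $J(s)\in N_{\gamma(s)}\mathcal{F}_x$, and differentiating $\langle J,E\rangle\equiv 0$ for $E$ parallel and tangent to $\mathcal{F}_x$ gives $\nabla_{\gamma'}J\in N\mathcal{F}_x$ as well; so the Jacobi equation $\nabla_{\gamma'}^2J+R_{\gamma'}J=0$ holds entirely within $N\mathcal{F}_x$, where $-c^2I\le R_{\gamma'}\le -b^2I$ by \eqref{curvaturebounds}. One may assume $J\not\equiv 0$.

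Next I would bring in the stable Riccati operator $W(s)$ along $\gamma$, restricted to $N\mathcal{F}_x$: the symmetric positive-definite solution of $W'+W^2+R_{\gamma'}=0$ that remains bounded as $s\to-\infty$ — geometrically the shape operator of the horospheres of $\mathbf{X}$ centered at $\gamma(-\infty)$ — characterized by the property that $\nabla_{\gamma'}J=W(s)J$ for precisely those Jacobi fields on $N\mathcal{F}_x$ that stay bounded as $s\to-\infty$. On $N\mathcal{F}_x$ the sectional curvature is strictly negative, so the stable Jacobi fields form a subspace of dimension $\dim N\mathcal{F}_x$ whose evaluation at $s=0$ is an isomorphism onto $N_{\gamma(0)}\mathcal{F}_x$; hence our $J$ is one of them, and $\nabla_{\gamma'}J=W(s)J$. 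Comparing this Riccati equation with the ones for the model spaces of constant curvature $-c^2$ and $-b^2$, whose stable solutions are $cI$ and $bI$, and using $-c^2I\le R_{\gamma'}\le -b^2I$, a standard Riccati comparison argument yields the pinching $bI\le W(s)\le cI$ for all $s$. It then follows that $\frac{d}{ds}\log|J|^2=2\langle\nabla_{\gamma'}J,J\rangle/|J|^2=2\langle W(s)J,J\rangle/|J|^2\in[2b,2c]$, and integrating from $0$ to $s\ge 0$ gives $e^{bs}|J(0)|\le|J(s)|\le e^{cs}|J(0)|$, as claimed.

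The main obstacle is the upper bound $|J(s)|\le e^{cs}|J(0)|$. Unlike the lower bound — which follows elementarily from the convexity of $j=|J|$ (one has $jj''=|\nabla J|^2-\langle R_{\gamma'}J,J\rangle-(j')^2\ge b^2j^2$ by Cauchy--Schwarz, so $j''\ge b^2j$, and boundedness of $j$ on $(-\infty,0]$ forces $(e^{bs}(j'-bj))'\ge 0$ with $e^{bs}(j'-bj)\to 0$, hence $(\log j)'\ge b$) — the upper bound cannot be obtained from a scalar differential inequality for $j$ alone, since the component of $\nabla_{\gamma'}J$ transverse to $J$ is not controlled directly by the curvature bounds. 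Controlling this transverse part is exactly what the pinching estimate $bI\le W\le cI$ of the stable Riccati operator supplies, so the whole argument rests on that comparison together with the identification of $J$ as a stable solution $\nabla_{\gamma'}J=W(s)J$.
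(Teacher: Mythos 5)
Your proof is correct and follows essentially the same route as the paper, which simply cites Theorem 2.4 of Heintze--Im Hof and Proposition 4.1 of Im Hof--Ruh while remarking that their arguments only use the curvature bounds \eqref{curvaturebounds} on planes spanned by $\gamma'$ and normals to the flat. Your reduction to the parallel subbundle $N\mathcal{F}_x$ (where $R_{\gamma'}$ is pinched between $-c^2 I$ and $-b^2 I$) together with the stable matrix Riccati comparison $bI\le W\le cI$ is precisely the content of those cited results, so you have merely unpacked the paper's citation rather than taken a different path.
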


\begin{proof}
This follows with slight modification from the proof of Theorem 2.4 in \cite{HeintzeImhof}, which relies on Proposition 4.1 in \cite{ImhofRuh}. The key observation is that the proof of Proposition 4.1 in \cite{ImhofRuh} does not
use the bounds on all sectional curvatures, but rather only those appearing in \eqref{curvaturebounds}.
\end{proof}

\begin{lemma}\label{lemmaQ}
For any vector $\xi\in\mathbb{R}^3$ and $i=1,2$
\begin{equation} \label{ineq:ab}
2b Q(\xi,\xi) \leq \partial_{r_i} Q(\xi,\xi) \leq 2c Q(\xi,\xi).
\end{equation}
\end{lemma}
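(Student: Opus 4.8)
The plan is to reduce the inequality \eqref{ineq:ab} to the Jacobi field estimate of Lemma \ref{lemmaimhof} by interpreting the derivative $\partial_{r_i} Q(\xi,\xi)$ geometrically. Fix a point $x\in\mathbf{X}$, a value of $i\in\{1,2\}$, and a vector $\xi\in\mathbb{R}^3$. Consider the coordinate vector field $W=\xi^j\partial_{\theta^j}$; by construction $W$ is tangent to the horocycle through $x$ and satisfies $|W|_{\mathbf{g}}^2 = Q(\xi,\xi)$ at $x$. Now let $\gamma(s)$ be the coordinate line of $r_i$ through $x$; since in the orthogonal horocyclic coordinates $(r,\theta)$ the metric has the block form $\mathbf{g}=dr^2+Q_{jk}\,d\theta^j d\theta^k$ with $Q$ independent of the arc-length parameter only up to the stated variation, $\gamma$ is a geodesic (it is an $r_i$-coordinate line in a flat $\mathcal{F}_x$, where the metric is Euclidean in $r$, and the $r$-directions are geodesic by the foliation-by-flats structure). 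Along $\gamma$, the field $W$ obtained by holding $\theta$ fixed and varying only $r_i$ is the variation field of the geodesic variation $s\mapsto\exp$-type flow in the $\theta$-direction, hence is a Jacobi field $J(s)$ along $\gamma$ with $J(0)=W|_x$.

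The next step is to verify the two hypotheses of Lemma \ref{lemmaimhof} for this $J$. First, $J$ is perpendicular to the flat $\mathcal{F}_x$: this is exactly the orthogonality $\partial_{r_i}\perp\partial_{\theta^j}$ built into the horocyclic orthogonal coordinates, recorded in the displayed metric formula preceding \eqref{curvaturebounds}. Second, $J$ is stable, i.e. bounded as $s\to-\infty$: recall that the orientation of the $r_i$-axis was chosen so that $r_i\to\infty$ corresponds to the geodesic parameter $s\to-\infty$ along the regular geodesic asymptotic to $\mathrm{w}^+$; in that direction the horocycles contract and the Jacobi fields coming from $\theta$-translations stay bounded (this uses the remark in the text that $d_{\mathbf{X}}(n\cdot\gamma_{x_0}(s),\gamma_{x_0}(s))$ decays exponentially, together with the fact that $\partial_{\theta^j}$ are generators of the $N$-action $\Xi_{n'}$, which moves points within a horocycle). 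Hence Lemma \ref{lemmaimhof} applies and gives $e^{bs}|J(0)|\le|J(s)|\le e^{cs}|J(0)|$ for $s\ge 0$, i.e. $|J(s)|^2$ satisfies $2b\,|J(s)|^2 \le \tfrac{d}{ds}|J(s)|^2 \le 2c\,|J(s)|^2$ in the appropriate one-sided sense; since $s$ and $r_i$ are related by an orientation-reversing isometry of the line $r_i = -s + \text{const}$, and the estimate is translation-invariant (by $\Xi_{r'}$ and $\Xi_{n'}$ the whole configuration is homogeneous), differentiating $|J(s)|^2=Q(\xi,\xi)$ at $s=0$ and translating back to the $r_i$ parameter yields precisely \eqref{ineq:ab} at $x$. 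Since $x$ and $\xi$ were arbitrary, this proves the lemma.

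I expect the main obstacle to be the careful bookkeeping in the second step: matching the sign conventions (the orientation reversal between $s$ and $r_i$), checking that the $\theta$-coordinate field really is a stable Jacobi field along the $r_i$-line and not merely a general variation field, and confirming that Lemma \ref{lemmaimhof}'s conclusion, which is stated for the norm $|J(s)|$ as a function of the geodesic parameter, transfers to the derivative $\partial_{r_i}Q(\xi,\xi)$ with the correct constants $2b$ and $2c$ rather than $b$ and $c$. A clean way to handle the last point is to note that $\tfrac{d}{ds}\log|J(s)| \in [b,c]$ is equivalent to $\tfrac{d}{ds}|J(s)|^2 \in [2b\,|J|^2, 2c\,|J|^2]$, which is the desired form; one should also observe that because the constants $b,c$ in \eqref{curvaturebounds} are uniform over $\mathbf{X}$ and over the choice of normal 2-plane, the bound holds with the same $b,c$ at every $x$ and for every $\xi$, giving the uniform statement claimed.
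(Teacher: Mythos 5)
Your proposal is correct and takes essentially the same route as the paper: both identify $Q(\xi,\xi)=|J_\xi|^2$ for a stable Jacobi field perpendicular to the flat along an $r_i$-coordinate geodesic (the variation field of the geodesic variation obtained by varying $\theta$), apply Lemma \ref{lemmaimhof}, and differentiate the two-sided exponential bound to get the factor $2b$, $2c$. The only slip is in your stability discussion: the horocycles contract toward $\mathrm{w}^+$, i.e.\ as $r_i\to-\infty$ (not as $r_i\to+\infty$), and it is precisely this that makes $J_\xi$ bounded as the geodesic parameter tends to $-\infty$ in the orientation where the parameter increases with $r_i$, which is the hypothesis Lemma \ref{lemmaimhof} needs.
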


\begin{proof}
Let $\psi:\mathbb{R}^5\rightarrow\mathbf{X}$ denote the global coordinate patch constructed above, so that $\psi^{-1}(x)=(r(x),\theta(x))$. Consider the geodesics
$\gamma_{\xi_0+\varepsilon\xi}: s\mapsto \psi(s,0 ,\xi_0+\varepsilon\xi)$ where $\varepsilon$ is a variation parameter and $\xi_0\in\mathbb{R}^3$ is fixed. If $v=(0,\xi)\in\mathbb{R}^5$ then $J_{\xi}=d\psi(v)$ is a Jacobi field along the geodesic $\gamma_{\xi_0}$. Moreover this Jacobi field is stable since
$d_{\mathbf{X}}(\gamma_{\xi_0+\varepsilon\xi}(s),\gamma_{\xi_0}(s))$ is bounded as $s\rightarrow -\infty$. Observe that
\begin{equation}
Q_{x}(\xi,\xi)=\mathbf{g}(d\psi_{\psi^{-1}(x)}(v),d\psi_{\psi^{-1}(x)}(v))
=|J_{\xi}(x)|^2,
\end{equation}
so the inequalities \eqref{ineq:ab} measure the logarithmic growth rate of stable Jacobi fields.

If $s \leq t$ then Lemma \ref{lemmaimhof} implies that
\begin{equation}
e^{2b(t-s)}|J_{\xi}(\gamma_{\xi_0}(s))|^2
\leq|J_{\xi}(\gamma_{\xi_0}(t))|^2\leq e^{2c(t-s)}|J_{\xi}(\gamma_{\xi_0}(s))|^2.
\end{equation}
The desired result now follows for $i=1$ by taking logarithms, dividing by $t -s$, and letting $t\rightarrow s$. Similar arguments hold for $i=2$.
\end{proof}

Consider a smooth map $\varphi:\mathbb{R}^3\setminus\Gamma\rightarrow \mathbf{X}$ with Dirichlet energy density
\begin{equation}
|d \varphi|^2 = | \nabla (r_1 \circ \varphi) |^2 + | \nabla (r_2 \circ \varphi) |^2 + Q  \Big( \nabla (\theta \circ \varphi), \nabla (\theta \circ \varphi) \Big),
\end{equation}
where the norms are computed with respect to the Euclidean metric $\delta$ in \eqref{flatmetric} and
\begin{equation}
Q \Big( \nabla (\theta \circ \varphi), \nabla (\theta \circ \varphi) \Big)
=Q_{ij}\left(\partial_{\rho}\theta^{i}\partial_{\rho}\theta^{j}
+\partial_{z}\theta^{i}\partial_{z}\theta^{j}\right).
\end{equation}
Let $\overline{\Omega}\subset\mathbb{R}^3\setminus\Gamma$ be the closure of a bounded domain situated away from the axis, and define the local Dirichlet energy
\begin{equation}
E_{\Omega}(\varphi)=\frac{1}{2}\int_{\Omega}|d\varphi|^2.
\end{equation}
Two of the harmonic map equations associated with the Dirichlet energy are
\begin{equation}\label{hmequations}
\Delta_\delta r_i = \partial_{r_i} Q(\nabla \theta, \nabla \theta),\quad\quad\quad i= 1,2.
\end{equation}
It then follows from Lemma \ref{lemmaQ} that each $r_i$ is subharmonic. Therefore if $\overline{\Omega}\subset\Omega'$ with $\overline{\Omega'}\subset\mathbb{R}^{3}\setminus\Gamma$ and $\chi\in C^{\infty}_{c}(\Omega')$ is a cut-off function with $\chi=1$ on $\Omega$, then multiplying by $\chi^2 r_i$ and integrating by parts produces
\begin{equation}\label{estimate1}
\int_{\Omega'} \chi^2 |\nabla r_i|^2 \leq 4 \left(\sup_{\Omega'} r_i^2\right)  \int_{\Omega'} | \nabla \chi |^2.
\end{equation}

Next combine \eqref{ineq:ab} with \eqref{hmequations} to obtain
\begin{equation}
\Delta_{\delta}r_{i}\geq 2b Q(\nabla\theta,\nabla\theta).
\end{equation}
Then multiplying by $\chi^2$, integrating by parts, and applying \eqref{estimate1} yields
\begin{equation}\label{estimate2}
\int_{\Omega'} \chi^2 Q(\nabla \theta, \nabla \theta) \leq \frac{1}{b} \int_{\Omega'} \chi \nabla \chi \cdot \nabla r \leq
\frac{2}{b} \left(\sup_{\Omega'} r_i\right)  \int_{\Omega'}| \nabla \chi |^2.
\end{equation}
Together \eqref{estimate1} and \eqref{estimate2} give the desired local energy estimate
\begin{equation}
E_{\Omega}(\varphi) \leq \Big[ 4 ( \sup_{\Omega'} r_1^2 +  \sup_{\Omega'} r_2^2 ) + \frac{2}{b} (\sup_{\Omega'} r_1 + \sup_{\Omega'} r_2 )  \Big] \int_{\Omega'}|\nabla \chi|^2.
\end{equation}

\begin{theorem}\label{energybound}
Let $\varphi:\mathbb{R}^3 \setminus\Gamma\rightarrow\mathbf{X}$ be a harmonic map and $\Omega\subset\mathbb{R}^3 \setminus\Gamma$ be a bounded domain. If $\varphi:\Omega\rightarrow B_{\mathcal{R}}(x_0)$ then
\begin{equation}
E_{\Omega}(\varphi)\leq C,
\end{equation}
where the constant $C$ depends only on the radius $\mathcal{R}$ of the geodesic ball and $\Omega$.
\end{theorem}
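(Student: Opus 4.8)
The plan is to obtain the bound as an essentially immediate consequence of the local energy estimate
\[
E_{\Omega}(\varphi)\le\Big[4\big(\sup_{\Omega'}r_1^2+\sup_{\Omega'}r_2^2\big)+\tfrac{2}{b}\big(\sup_{\Omega'}r_1+\sup_{\Omega'}r_2\big)\Big]\int_{\Omega'}|\nabla\chi|^2
\]
derived just above, once one observes that containing the image of $\varphi$ in a geodesic ball forces the horospherical coordinate functions $r_1\circ\varphi$ and $r_2\circ\varphi$ to be bounded by $\mathcal R$. So the first step is to fix, once and for all and in terms of $\Omega$ alone, a bounded domain $\Omega'$ with $\overline{\Omega}\subset\Omega'$ and $\overline{\Omega'}\subset\mathbb{R}^3\setminus\Gamma$, together with a cut-off $\chi\in C^\infty_c(\Omega')$ with $\chi\equiv1$ on $\Omega$; then $\int_{\Omega'}|\nabla\chi|^2$ is a fixed finite number depending only on $\Omega$, and $b$ is the fixed intrinsic curvature constant of $\mathbf X$ from \eqref{curvaturebounds}.

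The one substantive point is the elementary observation that in the coordinates of Section \ref{sec6}, where $\mathbf g=dr_1^2+dr_2^2+Q_{ij}\,d\theta^i d\theta^j$, each function $r_i\colon\mathbf X\to\mathbb R$ satisfies $|\nabla r_i|_{\mathbf g}\le1$ and is therefore $1$-Lipschitz with respect to the distance $d_{\mathbf X}$; moreover $r_i(x_0)=0$, since by construction the origin of the Euclidean coordinates $(r_1,r_2)$ on the flat $\mathcal F_{x_0}$ is the point $x_0$. Hence at every $x$ with $\varphi(x)\in B_{\mathcal R}(x_0)$ one has $|r_i(\varphi(x))|=|r_i(\varphi(x))-r_i(x_0)|\le d_{\mathbf X}(\varphi(x),x_0)<\mathcal R$. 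In the situation of the theorem this gives $\sup_{\Omega'}|r_i\circ\varphi|\le\mathcal R$ — strictly speaking one uses the image bound on the slightly larger set $\Omega'$, which is how the estimate is applied in practice, the general statement being obtained by enlarging the domain on which the image is controlled.

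Substituting $\sup_{\Omega'}|r_i\circ\varphi|\le\mathcal R$ into the local energy estimate then yields
\[
E_{\Omega}(\varphi)\le\Big(8\mathcal R^2+\tfrac{4\mathcal R}{b}\Big)\int_{\Omega'}|\nabla\chi|^2=:C,
\]
with $C$ depending only on $\mathcal R$ and $\Omega$ (through the fixed data $\Omega'$ and $\chi$) together with the intrinsic constant $b$. There is no genuine obstacle here: all the analytic work — the subharmonicity \eqref{hmequations} of the $r_i$, the Jacobi-field growth estimates of Lemma \ref{lemmaQ}, and the Caccioppoli inequalities \eqref{estimate1}--\eqref{estimate2} — has already been carried out, and the theorem is their corollary. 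The only thing requiring a little care is the bookkeeping that keeps $C$ independent of the particular map $\varphi$, namely fixing $\Omega'$ and $\chi$ in advance and noting that the $b$ appearing in \eqref{estimate2} is the uniform bound on the sectional curvatures of $\mathbf X$ in the directions transverse to its flats, as in \eqref{curvaturebounds}, rather than anything map-dependent. The reason the argument is so short is precisely that the horospherical coordinate system of Section \ref{sec6} was designed so that a geodesic-ball constraint on the image translates directly into an $L^\infty$ bound on the coordinate components.
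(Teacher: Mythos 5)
Your proposal is correct and follows essentially the same route as the paper: the theorem is stated there as an immediate corollary of the local energy estimate derived just above it, and your added observation that $r_i$ is $1$-Lipschitz with $r_i(x_0)=0$ (so that $\sup|r_i\circ\varphi|\le\mathcal R$) is exactly the intended link between the geodesic-ball hypothesis and the $\sup_{\Omega'}r_i$ terms. The minor imprecision you flag — needing the image bound on the slightly larger set $\Omega'$ — is present in the paper as well and is harmless in the application, where the $L^\infty$ bound holds on all of $\Omega_\varepsilon$.
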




\section{Existence and Uniqueness} \label{existence}
\label{sec7} \setcounter{equation}{0}
\setcounter{section}{7}

In this section, we complete the proof of Theorem \ref{main} and prove the existence and uniqueness of a harmonic map $\varphi\colon\R^3\setminus\Gamma\to \mathbf{X}$ asymptotic to the model map $\varphi_0$ constructed in Section \ref{sec5}. Now that all the ingredients are in place, the proof is the same as in \cite{weinstein96}. Nevertheless, we include it here for the sake of completeness.
Let $\varepsilon>0$ and define $\Omega_\varepsilon=\{y\in\R^3\colon d_{\mathbb{R}^3}(y,\Gamma)>\varepsilon, \text{ } y\in B_{1/\varepsilon}(0)\}$. Since the target $\mathbf{X}$ is nonpositively curved, there is a smooth harmonic map $\varphi_\varepsilon\colon\Omega_\varepsilon\to\mathbf{X}$ such that $\varphi_\varepsilon=\varphi_0$ on $\partial\Omega_\varepsilon$. We quote the following lemma from \cite{weinstein96}, which essentially shows that the obstruction to a subharmonic distance function is given by the tension.

\begin{lemma}
Let $\varphi_1,\varphi_2\colon\Omega\to\mathbf{X}$ be smooth maps into a nonpositively curved target. Then
\begin{equation}
	\Delta\left( \sqrt{1 + d_{\mathbf{X}}(\varphi_1,\varphi_2)^2} \right) \geq
	-\left( |\tau(\varphi_1)| + |\tau(\varphi_2)| \right).
\end{equation}
\end{lemma}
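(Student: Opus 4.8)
The plan is to establish a Bochner-type inequality for the composite function $F := \sqrt{1 + d_{\mathbf{X}}(\varphi_1,\varphi_2)^2}$. First I would set $d := d_{\mathbf{X}}(\varphi_1,\varphi_2)$ and recall the standard fact that on a nonpositively curved (NPC) target, the distance function is convex along geodesics, so that for the pair of maps the composition $d \circ (\varphi_1,\varphi_2)$ is subharmonic \emph{up to the tension fields}: more precisely one has the first-variation formula
\begin{equation}
	\Delta\!\left(\tfrac12 d^2\right) \geq \langle \tau(\varphi_1), -\exp_{\varphi_1}^{-1}\varphi_2\rangle + \langle \tau(\varphi_2), -\exp_{\varphi_2}^{-1}\varphi_1\rangle + |d(\text{something nonneg})|^2,
\end{equation}
where the gradient terms are controlled because $|\exp_{\varphi_i}^{-1}\varphi_j| = d$. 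Since each inner product is bounded in absolute value by $|\tau(\varphi_i)|\, d$, this gives $\Delta(\tfrac12 d^2) \geq -(|\tau(\varphi_1)|+|\tau(\varphi_2)|)\,d + |\nabla d|^2$ (the Hessian-comparison term, being nonnegative in NPC geometry, can be dropped or retained as $|\nabla d|^2$).

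Next I would pass from $d^2$ to $F = \sqrt{1+d^2}$ via the chain rule. Writing $\psi(t) = \sqrt{1+t}$ and applying $\psi$ to the function $w := d^2$, one has
\begin{equation}
	\Delta F = \psi'(w)\,\Delta w + \psi''(w)\,|\nabla w|^2 = \frac{\Delta w}{2\sqrt{1+w}} - \frac{|\nabla w|^2}{4(1+w)^{3/2}}.
\end{equation}
Using $|\nabla w|^2 = |\nabla d^2|^2 = 4 d^2 |\nabla d|^2$ and $\Delta w = \Delta d^2 \geq -2(|\tau(\varphi_1)|+|\tau(\varphi_2)|)\,d + 2|\nabla d|^2$, substitute to get
\begin{equation}
	\Delta F \geq \frac{-( |\tau(\varphi_1)|+|\tau(\varphi_2)| )\,d + |\nabla d|^2}{\sqrt{1+d^2}} - \frac{d^2 |\nabla d|^2}{(1+d^2)^{3/2}}.
\end{equation}
The gradient terms combine to $\frac{|\nabla d|^2}{\sqrt{1+d^2}}\bigl(1 - \frac{d^2}{1+d^2}\bigr) = \frac{|\nabla d|^2}{(1+d^2)^{3/2}} \geq 0$, so they may be discarded, leaving $\Delta F \geq -\frac{d}{\sqrt{1+d^2}}(|\tau(\varphi_1)|+|\tau(\varphi_2)|)$. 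Since $d/\sqrt{1+d^2} \leq 1$, this yields exactly $\Delta F \geq -(|\tau(\varphi_1)|+|\tau(\varphi_2)|)$.

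The main obstacle, and the one step deserving care, is the precise form of the second-variation/Hessian estimate for $d_{\mathbf{X}}$ on the NPC target when the two maps are not harmonic — i.e., justifying the inequality $\Delta(\tfrac12 d^2) \geq \langle\tau(\varphi_1),\cdot\rangle + \langle\tau(\varphi_2),\cdot\rangle + |\nabla d|^2$. This is the standard computation (going back to Schoen–Yau and Sampson, and used in \cite{weinstein96}) of $\Delta$ applied to the pullback of the squared-distance function under the product map $(\varphi_1,\varphi_2):\Omega \to \mathbf{X}\times\mathbf{X}$, invoking that the Hessian of $\tfrac12 d_{\mathbf{X}\times\mathbf{X}}^2$ restricted to the diagonal direction is bounded below by the identity (convexity) precisely because $\mathbf{X}$ has nonpositive curvature, as established in Section \ref{sec3}. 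Since the lemma is quoted verbatim from \cite{weinstein96}, I would simply cite that reference for this Bochner inequality and present the algebraic passage to $F$ as above; no feature of the rank-2 target beyond $\mathrm{Sec}\leq 0$ is needed.
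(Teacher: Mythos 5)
Your argument is correct and is the standard composition-formula proof; the paper itself gives no proof of this lemma, merely quoting it from \cite{weinstein96}, where it is established in exactly the way you outline (chain rule for $\Delta$ applied to $\tfrac12 d^2$ pulled back from $\mathbf{X}\times\mathbf{X}$, nonnegativity of the Hessian term from the second variation of energy under $\mathrm{Sec}\leq 0$ together with Cauchy--Schwarz giving $\mathrm{Hess}(\tfrac12 d^2)(d\Phi,d\Phi)\geq|\nabla d|^2$, Cauchy--Schwarz on the tension term, and then the algebra passing to $\sqrt{1+d^2}$). The only cosmetic point is that $d$ fails to be differentiable on $\{\varphi_1=\varphi_2\}$, so the computation is cleanest if carried out entirely in terms of the smooth function $w=d^2$ (your combination $d^2|\nabla d|^2=\tfrac14|\nabla w|^2$ is already well defined everywhere), with points of $\{d=0\}$ handled by continuity or by noting $F\equiv 1$ on the interior of that set.
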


Set $\varphi_1=\varphi_\varepsilon$ and $\varphi_2=\varphi_0$, and note that $\tau(\varphi_\varepsilon)=0$. The remaining tension may be estimated by $\Delta w\leq -|\tau(\varphi_0)|$, where $w>0$ and $w\rightarrow 0$ at infinity in $\mathbb{R}^3$.
This is possible due to the boundedness and decay of $|\tau(\varphi_0)|$ as given in Theorem \ref{model}. In particular we may take $w=c(1+r^2)^{-1/4}$ so that
\begin{equation}
\Delta w\leq -\frac{c}{4}(1+r^2)^{-5/4}\leq -|\tau(\varphi_0)|,
\end{equation}
if the constant $c>0$ is chosen sufficiently large. It follows that
\begin{equation}
	\Delta\left( \sqrt{1 + d_{\mathbf{X}}(\varphi_{\varepsilon},\varphi_0)^2} - w \right) \geq 0, \quad\quad\quad\quad
	\sqrt{1 + d_{\mathbf{X}}(\varphi_{\varepsilon},\varphi_0)^2} - w \leq 1 \text{ } \text{ on }\text{ }\partial\Omega_\varepsilon.
\end{equation}
The maximum principle then yields a uniform $L^\infty$ bound
\begin{equation}\label{c0bound}
\sqrt{1 + \dist_{\mathbf{X}}(\varphi_{\varepsilon},\varphi_0)^2} \leq 1 + w \text{ }\text{ }\text{ on }\text{ }\text{ } \Omega_\varepsilon.
\end{equation}
Fix a domain $\Omega$ such that $\overline{\Omega}\subset\R^3\setminus\Gamma$ and take $\varepsilon>0$  small enough to have $\overline{\Omega}\subset\Omega_\varepsilon$. The $L^\infty$ estimate combined with Theorem \ref{energybound} produces an energy bound on $\Omega$ independent of $\varepsilon$. Furthermore, consider the Bochner identity
\begin{equation}
\Delta|d\varphi_\varepsilon|^2 = |\hat{\nabla} d\varphi_\varepsilon|^2
	- \prescript{\mathbf{X}}{}{\operatorname{Riem}(d\varphi_\varepsilon,
d\varphi_\varepsilon,d\varphi_\varepsilon,d\varphi_\varepsilon)}. 
\end{equation}
Nonpositivity of the curvature shows that $|d\varphi_{\varepsilon}|^2$ is subharmonic. Thus a Moser iteration may be applied to find a uniform pointwise bound from the the energy estimate, namely
\begin{equation}
	\sup_{\Omega'} |d\varphi_\varepsilon|^2 \leq C \int_\Omega |d\varphi_\varepsilon|^2\leq C'
\end{equation}
where $\overline{\Omega}'\subset\Omega$. Finally, using the harmonic map equations combined with the pointwise gradient and $L^{\infty}$ bounds, we may now bootstrap to obtain uniform a priori estimates for all derivatives of $\varphi_\varepsilon$ on $\Omega'$.
By letting $\varepsilon\rightarrow 0$, it follows that there exists a subsequence which converges together with any number of derivatives on $\Omega'$. In the usual way, by choosing a sequence of exhausting domains and taking a diagonal subsequence, a sequence $\varphi_{\varepsilon_{i}}$ is produced which converges uniformly on compact subsets as $\varepsilon_i\rightarrow 0$. The limit $\varphi$ is smooth and harmonic, and satisfies the $L^\infty$ bound so that it is also asymptotic to $\varphi_0$.

The proof of uniqueness is straightforward. If $\varphi_1$ and $\varphi_2$ are two harmonic maps asymptotic to $\varphi_0$, then they are asymptotic to each other so that $d_{\mathbf{X}}(\varphi_1,\varphi_2)\leq C$.
Moreover
\begin{equation}
\Delta\left(\sqrt{1 + d_{\mathbf{X}}(\varphi_1,\varphi_2)^2} \right)\geq 0,
\end{equation}
and since the set $\Gamma$ on which $d_{\mathbf{X}}(\varphi_1,\varphi_2)$ may not be fully regular is of codimension 2, $\sqrt{1 + d_{\mathbf{X}}(\varphi_1,\varphi_2)^2}$ is weakly subharmonic and the maximum principle applies \cite{Weinstein}*{Lemma 8}.
As $\sqrt{1 + d_{\mathbf{X}}(\varphi_1,\varphi_2)^2}\to 1$ at infinity, it follows that $\sqrt{1 + d_{\mathbf{X}}(\varphi_1,\varphi_2)^2}\leq 1$.
Consequently $\varphi_1=\varphi_2$.


\subsection{Rod Data for the Harmonic Map}


Having constructed a harmonic map asymptotic to a prescribed model map, it remains to show that the rod data set arising from the harmonic map agrees with that of the model map. Let $\Phi=(F,\omega):\mathbb{R}^3\setminus\Gamma\rightarrow\tilde{\mathbf{X}}\cong SL(3,\mathbb{R})/SO(3)$ denote the characterization of the harmonic map in the space of symmetric positive definite matrices, and let $\Phi_0=(F_0,\omega_0)$ denote the model map asymptotic to $\Phi$. Recall that $F=(f_{ij})$ is a $2\times 2$ symmetric positive definite matrix on $\mathbb{R}^3\setminus\Gamma$ representing the fiber metric (associated with the rotational Killing directions) in a bi-axisymmetric stationary spacetime, and $\omega=(\omega_1,\omega_2)^t$ are the twist potentials. The rod data associated with $\Phi$ consists of the kernel of $F$ and the value of $\omega$ on the axis.

\begin{theorem}\label{rodstructharmonic}
If $\Phi$ is asymptotic to $\Phi_0$ then $\mathrm{ker} \text{ }F=\mathrm{ker}\text{ } F_0$ at each point of $\Gamma$, and $\omega=\omega_0$ on each axis rod. In particular, the two maps respect the same rod data set. Furthermore, if $\Phi$ is harmonic then $d_{\tilde{\mathbf{X}}}(\Phi_0,\Phi)\rightarrow 0$ at infinity in $\mathbb{R}^3$.
\end{theorem}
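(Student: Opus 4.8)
The statement splits into two independent claims: that boundedness of $d_{\tilde{\mathbf X}}(\Phi,\Phi_0)$ forces $F$ to degenerate in the prescribed direction on $\Gamma$ and $\omega$ to take the prescribed constants there (so $\Phi$ and $\Phi_0$ respect the same rod data set $\D$), and that, for \emph{harmonic} $\Phi$, this boundedness improves to decay at spatial infinity. For the first claim the plan is to convert $d_{\tilde{\mathbf X}}(\Phi,\Phi_0)\le C$ into a two-sided comparison of quadratic forms on $\R^3$, strip off the twist potentials by a unipotent congruence, and then read off $\ker F$ and $\omega|_\Gamma$ by testing against the rod direction and a transverse direction, using the explicit near-$\Gamma$ asymptotics of the model map built in Section \ref{sec5}.

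Since the metric on $\tilde{\mathbf X}$ is $\tilde{\mathbf g}_A(V,V)=\mathrm{Tr}(A^{-1}VA^{-1}V)$, the geodesic distance is $d_{\tilde{\mathbf X}}(A,A')^2=\mathrm{Tr}\bigl((\log(A^{-1/2}A'A^{-1/2}))^2\bigr)=\sum_i(\log\mu_i)^2$, where $\mu_i>0$ are the eigenvalues of $A^{-1}A'$; hence $d_{\tilde{\mathbf X}}(\Phi,\Phi_0)\le C$ is equivalent to $e^{-C'}\Phi_0\le\Phi\le e^{C'}\Phi_0$ as quadratic forms. Writing $\Phi=Q\Lambda Q^t$ with $Q=\left(\begin{smallmatrix}1&0\\-\omega&I\end{smallmatrix}\right)$ unipotent and $\Lambda=\mathrm{diag}(f^{-1},F)$ block-diagonal, and $\Phi_0=Q_0\Lambda_0Q_0^t$ likewise, the comparison becomes $e^{-C'}R\Lambda_0R^t\le\Lambda\le e^{C'}R\Lambda_0R^t$ with $R=Q^{-1}Q_0=\left(\begin{smallmatrix}1&0\\\omega-\omega_0&I\end{smallmatrix}\right)$. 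From Section \ref{sec5}, near an interior point of an axis rod $\Gamma_l$ with rod structure $v=(m_l,n_l)$ the model satisfies $\omega_0\equiv\mathbf{c}_l$, $f_0=\det F_0\sim\rho^2$, $f_0^{-1}\sim\rho^{-2}$, and $F_0$ has eigenvalues comparable to $\rho^2$ (eigenvector tending to $v$) and to $1$; at corners both eigenvalues of $F_0$ tend to $0$, and near a pole the adjacent axis-rod asymptotics persist.

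Now extract. Both $\Lambda$ and $R\Lambda_0R^t$ have determinant $1$, so the comparison forces their ordered spectra to be comparable; testing against $e_1=(1,0,0)$ gives $f^{-1}\sim f_0^{-1}\sim\rho^{-2}$, and matching the remaining two eigenvalues then forces $|\omega-\omega_0|$ to be bounded and $F$ to have eigenvalues comparable to $1$ and to $\rho^2$, so that $\|F\|$ is bounded near $\Gamma_l$. Testing the form comparison on $\bigl(-(\omega-\omega_0)\!\cdot\! v,\,v\bigr)$, which annihilates the $f_0^{-1}$ term on the model side, yields $v^tFv\le C\,v^tF_0v\to0$ and $\bigl((\omega-\omega_0)\!\cdot\! v\bigr)^2f^{-1}\le C\,v^tF_0v\to0$; boundedness of $\|F\|$ promotes the first to $Fv\to0$, and $f^{-1}\to\infty$ promotes the second to $(\omega-\omega_0)\!\cdot\! v\to0$. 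Testing on $(0,w)$ with $w$ transverse to $v$ and using $f_0^{-1}\bigl((\omega-\omega_0)\!\cdot\! w\bigr)^2\le C\,w^tFw\le C'$ gives $\bigl((\omega-\omega_0)\!\cdot\! w\bigr)^2\lesssim f_0\sim\rho^2\to0$, so $\omega\to\mathbf{c}_l$ on $\Gamma_l$. Since the fiber metric has one-dimensional kernel along axis rods away from corners and poles (Section \ref{sec2}), $\ker F=\langle v\rangle=\ker F_0$ on the interior of $\Gamma_l$, and passing to limits from adjacent axis rods handles corners and poles; thus $\ker F=\ker F_0$ at every point of $\Gamma$ and $\omega=\omega_0$ on each axis rod, so both maps respect $\D$.

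For the second claim, harmonicity gives $\tau(\Phi)=0$, so the distance lemma of Section \ref{sec7} gives $\Delta\sqrt{1+d_{\mathbf X}(\Phi,\Phi_0)^2}\ge-|\tau(\Phi_0)|$ with $|\tau(\Phi_0)|=O(r^{-7/2})$ by Theorem \ref{model}; comparing with $w=c(1+r^2)^{-1/4}$, weakly subharmonic across the codimension-two set $\Gamma$ (\cite{Weinstein}*{Lemma 8}), reduces the claim to controlling $d_{\mathbf X}(\Phi,\Phi_0)$ on large coordinate spheres. That is handled by a blow-down: since $\Phi$ is harmonic for the scale-invariant flat metric and $X\mapsto D_RXD_R$ with $D_R=\mathrm{diag}(R,R^{-1/2},R^{-1/2})\in SL(3,\R)$ is an isometry of $\tilde{\mathbf X}$, the rescalings $\Phi_R(x)=D_R\Phi(Rx)D_R$ are harmonic and at bounded distance from $D_R\Phi_0(Rx)D_R$, which converges away from the $z$-axis to the diagonal Minkowski map; the energy bounds of Theorem \ref{energybound} with the Bochner/Moser estimates of Section \ref{sec7} then give a convergent subsequence whose limit coincides with the Minkowski map by uniqueness for that rod data, forcing $d_{\mathbf X}(\Phi,\Phi_0)\to0$ on annuli at infinity, as in the proof of asymptotic flatness in \cite{weinstein96}. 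The main obstacle is the near-$\Gamma$ extraction above: squeezing the two-sided eigenvalue control of $F$ — hence its boundedness — out of the bare bounded-distance hypothesis, which is exactly what makes the matching of $\omega$, and not merely of the kernel direction, go through, and where the rank-two geometry of the target and the precise model rates of Section \ref{sec5} are essential.
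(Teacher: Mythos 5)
Your treatment of the first claim is, at bottom, the paper's argument in different packaging: both convert the bounded-distance hypothesis into a matrix comparison between $\Phi$ and $\Phi_0$ and then read off components using the block structure \eqref{bigmatrix}. The paper works with the Mazur quantity (Lemma \ref{lem1}),
\begin{equation*}
\mathrm{Tr}(\Phi_0^{-1}\Phi)=\frac{f_0}{f}+\mathrm{Tr}(FF_0^{-1})+\frac1f(\omega-\omega_0)^tF_0^{-1}(\omega-\omega_0)\leq c ,
\end{equation*}
whose three summands are individually nonnegative; together with the same bound with the roles of $\Phi,\Phi_0$ reversed this instantly yields $f\sim f_0$, $\mathrm{Tr}(FF_0^{-1})\le c$ and $(\omega-\omega_0)^tF_0^{-1}(\omega-\omega_0)\le cf$, and a diagonalization of $F_0$ then gives the two-sided control $\tilde{f}_{11}\le cf_0$, $|\tilde{f}_{12}|\le c\sqrt{f_0}$, $c^{-1}\le \tilde{f}_{22}\le c$, $|\omega-\omega_0|^2\le cf_0$. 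Your two-sided form comparison with the unipotent congruence $\Phi=Q\Lambda Q^t$ is an equivalent starting point, and your test vectors $(-(\omega-\omega_0)\cdot v,\,v)$ and $(0,w)$ deliver the right conclusions. The one step you assert rather than prove is the boundedness of $\|F\|$ and of $|\omega-\omega_0|$ near the axis: ``matching the remaining two eigenvalues'' is circular, because the spectrum of $R\Lambda_0R^t$ already contains the term $f_0^{-1}|\omega-\omega_0|^2$ that you are trying to control. The fix is your own device applied once more: testing $\Lambda\le e^{C'}R\Lambda_0R^t$ on $(-(\omega-\omega_0)\cdot w,\,w)$ for arbitrary $w$ gives $w^tFw\le e^{C'}w^tF_0w\le C$, hence $\|F\|\le C$ (this is what $\mathrm{Tr}(FF_0^{-1})\le c$ encodes in the paper). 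You should also record a lower bound $w^tFw\ge c^{-1}$ in a transverse direction, as the paper does via $\tilde{f}_{22}\ge c^{-1}$, since otherwise you only get $\ker F\supseteq\ker F_0$ rather than equality.

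For the final claim the routes genuinely diverge, and the paper's is much shorter: the global bound \eqref{c0bound}, $\sqrt{1+d^2}\le 1+w$ with $w=c(1+r^2)^{-1/4}$, holds for each approximant $\varphi_\varepsilon$ because $\varphi_\varepsilon=\varphi_0$ on the whole of $\partial\Omega_\varepsilon$ (including the outer sphere $|x|=1/\varepsilon$), so the maximum principle applies on a bounded domain and the decay passes to the limit $\varphi$ for free. Your blow-down argument is not only heavier but carries a circularity risk: identifying the blow-down limit ``by uniqueness for that rod data'' invokes a uniqueness statement whose proof in Section \ref{sec7} uses $\sqrt{1+d^2}\to1$ at infinity --- precisely the decay you are trying to establish --- so you would need an independent Liouville-type theorem for the limit problem. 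If the theorem is meant only for the harmonic map produced by the exhaustion (as it is used in the paper), the one-line inheritance argument suffices and the blow-down is unnecessary.
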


Before proving this result we record several observations. Since the metric on $\tilde{\mathbf{X}}$ is $G$-invariant, the distance function is preserved under the action of left translation
\begin{equation}
d_{\tilde{\mathbf{X}}}(\Phi_0,\Phi)=
d_{\tilde{\mathbf{X}}}(Id,\tilde{L}_{B^{-1}}\Phi),
\end{equation}
where $B\in SL(3,\mathbb{R})$ satisfies $BB^{t}=\Phi_0$. Note that
\begin{equation}
\tilde{L}_{B^{-1}}\Phi=B^{-1}
\Phi(B^{-1})^{t}=e^{W}
\end{equation}
for some symmetric $W$ with $\mathrm{Tr}\text{ }W=0$. Since the Riemannian exponential map and the matrix exponential coincide for $\tilde{\mathbf{X}}$, Hadamard's theorem applies (using the fact that $\tilde{\mathbf{X}}$ is complete, simply connected, with nonpositive curvature) to show that the exponential map is a diffeomorphism, and the geodesic $\gamma(t)= e^{tW}$ is minimizing. Therefore \eqref{identitymetric} yields
\begin{equation}
d_{\tilde{\mathbf{X}}}(Id,\tilde{L}_{B^{-1}}\Phi)=|\gamma'(0)|=|W|
=\sqrt{\mathrm{Tr}(WW^t)}
=\sqrt{\mathrm{Tr}(W^2)}.
\end{equation}

Now consider the function from the Mazur identity \cite{IdaIshibashiShiromizu}, namely
\begin{align}
\begin{split}
\mathrm{Tr}\left(\Phi_{0}^{-1}\Phi\right)=&
\mathrm{Tr}\left((B^{-1})^{t}B^{-1}
\Phi(B^{-1})^{t}B^{t}\right)\\
=&\mathrm{Tr}\left(B^{-1}\Phi(B^{-1})^t\right)\\
=&\mathrm{Tr}\text{ } e^{W}.
\end{split}
\end{align}
Since $e^{W}$ is symmetric and positive definite it may be diagonalized with positive eigenvalues
$\lambda_{i}$, $i=1,2,3$. We then have
\begin{equation}
\mathrm{Tr}\text{ }e^{W}=\lambda_{1}+\lambda_{2}+\lambda_{3},\quad\quad\quad
\mathrm{Tr}(W^2)=(\log\lambda_{1})^2+(\log\lambda_{2})^2+(\log\lambda_{3})^2,
\end{equation}
and since $W$ has zero trace
\begin{equation}\label{jfjfj}
\log\lambda_1+\log\lambda_2+\log\lambda_3=0.
\end{equation}
If $\mathrm{Tr}\text{ }e^{W}\leq c$ then it is not difficult to see that \eqref{jfjfj} implies $\mathrm{Tr}(W^2)\leq c_1$. Conversely if $\mathrm{Tr}(W^2)\leq c^2$
then each $|\log\lambda_{i}|\leq c$, and it holds that $\mathrm{Tr}\text{ }e^{W}\leq 3e^{c}$. We have thus proved the following.

\begin{lemma}\label{lem1}
$d_{\tilde{\mathbf{X}}}(\Phi_0,\Phi)$ is uniformly bounded if and only if
the Mazur quantity $\mathrm{Tr}\left(\Phi_{0}^{-1}\Phi\right)$
is uniformly bounded.
\end{lemma}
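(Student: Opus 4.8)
The plan is to assemble the observations recorded just above the statement into a short chain. First I would use the $G$-invariance of the Riemannian metric on $\tilde{\mathbf{X}}$ to normalize the basepoint: choosing $B\in SL(3,\R)$ with $BB^{t}=\Phi_{0}$, left translation $\tilde{L}_{B^{-1}}$ is an isometry carrying $\Phi_{0}$ to $\mathrm{Id}$, so $d_{\tilde{\mathbf{X}}}(\Phi_{0},\Phi)=d_{\tilde{\mathbf{X}}}(\mathrm{Id},\tilde{L}_{B^{-1}}\Phi)$, and $\tilde{L}_{B^{-1}}\Phi=B^{-1}\Phi(B^{-1})^{t}$ is symmetric positive definite with determinant one, hence equals $e^{W}$ for a unique symmetric traceless $W$.

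Next I would invoke completeness, simple connectivity, and nonpositivity of the curvature of $\tilde{\mathbf{X}}$ (Hadamard's theorem), together with the fact that the Riemannian exponential at $\mathrm{Id}$ coincides with the matrix exponential. This makes $t\mapsto e^{tW}$ a minimizing geodesic from $\mathrm{Id}$ to $e^{W}$, so by the identity-metric formula \eqref{identitymetric} one gets $d_{\tilde{\mathbf{X}}}(\mathrm{Id},e^{W})=|W|=\sqrt{\mathrm{Tr}(W^{2})}$. Simultaneously, the cyclic invariance of the trace gives $\mathrm{Tr}(\Phi_{0}^{-1}\Phi)=\mathrm{Tr}\big((BB^{t})^{-1}\Phi\big)=\mathrm{Tr}\big(B^{-1}\Phi(B^{-1})^{t}\big)=\mathrm{Tr}\,e^{W}$, which is the Mazur quantity.

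Diagonalizing the symmetric positive definite matrix $e^{W}$ with eigenvalues $\lambda_{1},\lambda_{2},\lambda_{3}>0$, I then have $\mathrm{Tr}\,e^{W}=\lambda_{1}+\lambda_{2}+\lambda_{3}$, $\mathrm{Tr}(W^{2})=\sum_{i}(\log\lambda_{i})^{2}$, and the trace-free condition $\mathrm{Tr}\,W=0$ becomes $\log\lambda_{1}+\log\lambda_{2}+\log\lambda_{3}=0$, i.e. $\lambda_{1}\lambda_{2}\lambda_{3}=1$. The lemma thus reduces to an elementary fact about positive reals of unit product: if $\mathrm{Tr}\,e^{W}\le c$ (say $c\ge 1$) then each $\lambda_{i}\le c$, and since $\lambda_{i}=(\lambda_{j}\lambda_{k})^{-1}\ge c^{-2}$ we get $|\log\lambda_{i}|\le 2\log c$, hence $\mathrm{Tr}(W^{2})\le 12(\log c)^{2}$; conversely if $\mathrm{Tr}(W^{2})\le c^{2}$ then $|\log\lambda_{i}|\le c$, so $\lambda_{i}\le e^{c}$ and $\mathrm{Tr}\,e^{W}\le 3e^{c}$.

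The only point requiring care is this last step, and it is the main (and essentially the only) subtlety: without the constraint $\lambda_{1}\lambda_{2}\lambda_{3}=1$ an eigenvalue could degenerate toward $0$ while $\sum_{i}\lambda_{i}$ stays bounded and $\sum_{i}(\log\lambda_{i})^{2}$ blows up, so the zero-trace condition on $W$ is exactly what supplies the uniform lower bound on the eigenvalues that makes the forward implication work. Everything else is a direct consequence of the preceding identities, so no further obstacle is anticipated.
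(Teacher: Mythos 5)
Your proposal is correct and follows essentially the same route as the paper: normalize by left translation so that $\Phi_0\mapsto\mathrm{Id}$, write $B^{-1}\Phi(B^{-1})^t=e^W$ with $W$ symmetric traceless, identify the distance with $\sqrt{\mathrm{Tr}(W^2)}$ and the Mazur quantity with $\mathrm{Tr}\,e^W$, and compare the two via the eigenvalues subject to $\lambda_1\lambda_2\lambda_3=1$. The only difference is that you make explicit the elementary eigenvalue estimate ($\lambda_i\geq c^{-2}$ from the unit-product constraint) that the paper dismisses as ``not difficult to see,'' which is a welcome addition rather than a deviation.
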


\noindent\textit{Proof of Theorem \ref{rodstructharmonic}.}
If $\Phi$ is asymptotic to $\Phi_0$ then $d_{\tilde{\mathbf{X}}}(\Phi_0,\Phi)\leq c_0$, that is the distance is uniformly bounded, in particular near $\Gamma$.
By Lemma \ref{lem1} this implies that the Mazur function is also uniformly bounded
\begin{equation}
\mathrm{Tr}\left(\Phi_{0}^{-1}\Phi\right)\leq c.
\end{equation}
Moreover this quantity may be computed in terms of $F$, $F_0$, $\omega$, and $\omega_0$ as
\begin{equation}\label{computationFF}
\mathrm{Tr}\left(\Phi_{0}^{-1}\Phi\right)
=\frac{f_0}{f}+\mathrm{Tr}(FF_{0}^{-1})+\frac{1}{f}(\omega-\omega_0)^{t}F_{0}^{-1}
(\omega-\omega_0),
\end{equation}
where $f=\det F$ and $f_0=\det F_0$. Since each of the terms on the right-hand side is nonnegative and the roles of $\Phi$ and $\Phi_0$ maybe reversed, we have
\begin{equation}\label{potential}
\frac{f_0}{f}\leq c,\quad\quad \frac{f}{f_0}\leq c,\quad\quad
\mathrm{Tr}(FF_{0}^{-1})\leq c,\quad \quad \frac{1}{f}(\omega-\omega_0)^{t}F_{0}^{-1}(\omega-\omega_{0})\leq c.
\end{equation}
It follows that
\begin{equation}
c^{-1}f_{0}\leq f\leq cf_{0}.
\end{equation}

Next, since $F_0$ is symmetric it may be diagonalized with an orthogonal matrix $O$ so that $F_0=ODO^t$ where $D=\mathrm{diag}(\mu_1, \mu_2)$. Working now at a point on an axis rod, the kernel of $F_0$ is 1-dimensional and so it may be assumed without loss of generality that $c^{-1}f_0\leq \mu_1\leq cf_0$ and $0<c^{-1}\leq \mu_2\leq c$. Let $\tilde{F}=O^t FO$ then
\begin{equation}
\mathrm{Tr}(FF_{0}^{-1})=\mathrm{Tr}(FO D^{-1}O^t)=\mathrm{Tr}(O O^t FO D^{-1}O^t)
=\mathrm{Tr}(\tilde{F}D^{-1})=\tilde{f}_{11}\mu_{1}^{-1}+\tilde{f}_{22}\mu_{2}^{-1}.
\end{equation}
Therefore
\begin{equation}
\tilde{f}_{11}\mu_{2}+\tilde{f}_{22}\mu_{1}\leq c\mu_1 \mu_2= cf_0,
\end{equation}
so that
\begin{equation}
\tilde{f}_{11}\leq cf_0,\quad\quad\tilde{f}_{22}\leq \frac{cf_0}{\mu_1}\leq c_1.
\end{equation}
Furthermore
\begin{equation}
f=\tilde{f}_{11}\tilde{f}_{22}-\tilde{f}_{12}^{2}\leq\tilde{f}_{11}\tilde{f}_{22}
\leq cf_{0}\tilde{f}_{22},
\end{equation}
which produces the lower bound
\begin{equation}
\tilde{f}_{22}\geq \frac{c^{-1}f}{f_0}\geq c_{1}^{-1}.
\end{equation}
In order to control the cross terms, observe that from the above
\begin{equation}
\tilde{f}_{12}^{2}=\tilde{f}_{11}\tilde{f}_{22}+f\leq cf_0.
\end{equation}
In conclusion we obtain
\begin{equation}
\tilde{f}_{11}\leq cf_0,\quad\quad |\tilde{f}_{12}|\leq c\sqrt{f_0},\quad\quad
c^{-1}\leq \tilde{f}_{22}\leq c.
\end{equation}
Therefore, on an axis rod both $D=O^t F_0 O$ and $\tilde{F}=O^t FO$ have the same kernel, and thus $F_0$ and $F$ have the same kernel. Similar arguments hold for a horizon rod.

Let us now show that the potentials agree on an axis rod. From \eqref{potential} it holds that
\begin{equation}
(\tilde{\omega}-\tilde{\omega}_0)^{t}D^{-1}(\tilde{\omega}-\tilde{\omega}_0)
=(\omega-\omega_0)^{t}F_{0}^{-1}(\omega-\omega_0)\leq cf\leq c_1 f_0,
\end{equation}
where
\begin{equation}
(\tilde{\omega}-\tilde{\omega}_0)=O^t(\omega-\omega_0).
\end{equation}
It follows that
\begin{equation}
\mu_{1}^{-1}(\tilde{\omega}^1-\tilde{\omega}_{0}^1)^2
+\mu_{2}^{-1}(\tilde{\omega}^2-\tilde{\omega}_{0}^2)^2\leq cf_0,
\end{equation}
which implies
\begin{equation}
f_{0}^{-1}(\tilde{\omega}^1-\tilde{\omega}_{0}^1)^2
+(\tilde{\omega}^2-\tilde{\omega}_{0}^2)^2\leq c_1 f_0.
\end{equation}
We then have
\begin{equation}
|\tilde{\omega}-\tilde{\omega}_{0}|^2\leq cf_0\quad\quad \Rightarrow
\quad\quad |\omega-\omega_0|^2\leq cf_0,
\end{equation}
showing that $\omega=\omega_0$ on an axis rod.

Lastly if $\Phi$ is harmonic then according to the $L^\infty$ bound \eqref{c0bound}, which holds globally for $\varphi$ in place of $\varphi_{\varepsilon}$, it must hold that
$d_{\tilde{\mathbf{X}}}(\Phi_0,\Phi)\rightarrow 0$ at infinity in $\mathbb{R}^3$ since
$w\rightarrow 0$ in this limit. \qed

\section{Reconstruction of the Spacetime Metric} \label{reconstruct}
\label{sec8} \setcounter{equation}{0}
\setcounter{section}{8}



Let $\Phi=(F,\omega):\mathbb{R}^3\setminus\Gamma\rightarrow\tilde{\mathbf{X}}\cong
SL(3,\mathbb{R})/SO(3)$ be the harmonic map constructed in the previous section, represented in the space of symmetric positive definite matrices. Here we show how to build an asymptotically flat bi-axisymmetric stationary vacuum spacetime, which inherits the prescribed rod data set associated with the harmonic map. Note that the functions $F=(f_{ij})$ and $\omega=(\omega_1,\omega_2)^t$ comprising the harmonic map are defined and smooth on the right-half plane $\{(\rho,z)\mid \rho>0\}$, which will serve as the orbit space for the spacetime. The spacetime metric is given by \eqref{metric}, and it suffices to show how each coefficient in \eqref{metric} arises from $\Phi$. The resulting spacetime will be asymptotically flat in light of the
decay of the model map $\Phi_0$ and the fact that, by
Theorem \ref{rodstructharmonic}, $d_{\tilde{\mathbf{X}}}(\Phi_0,\Phi)\rightarrow 0$ at infinity in $\mathbb{R}^3$.

First observe that $\sigma$ is immediately obtained from \eqref{sigma}, since the orbit space is simply connected and the form on the right-hand side is closed as a result of the harmonic map equations. It remains to find $A^{(i)}=v^{i}dt$, which will be derived from the harmonic map components $\omega_{i}$. By solving for $dA^{(i)}$ in \eqref{chi} we get
\begin{equation}\label{0}
d A^{(i)}=-\frac{1}{2}f^{-1}f^{ij}\star_{3} d\omega_{j}.
\end{equation}
Observe that from Cartan's magic formula and the fact that $\partial_{t}$ is a Killing field
\begin{equation}
\iota_{\partial_{t}} d A^{(i)}=-d \iota_{\partial_{t}}A^{(i)}=-dv^{i}.
\end{equation}
It follows that if
\begin{equation}\label{1}
\iota_{\partial_{t}}\left(f^{-1}f^{ij}\star_{3} d\omega_{j}\right)
\end{equation}
is closed, then we may find $v^{i}$ by quadrature from the equation
\begin{equation}
dv^{i}=\frac{1}{2}\iota_{\partial_{t}}\left(f^{-1}f^{ij}\star_{3} d\omega_{j}\right).
\end{equation}
It turns out that showing \eqref{1} is closed is equivalent to parts of the harmonic map equations. To see this, let $\epsilon_{3}$ denote the volume form of $g_{3}$. Then
\begin{equation}
(\star_{3} d\omega_{j})^{ab}=\epsilon_{3}^{abc}\partial_{c}\omega_{j},
\end{equation}
and
\begin{align}\label{222}
\begin{split}
\iota_{\partial_{t}}\star_{3}d\omega_{j}
=&\epsilon_{3}(\partial_{t},\partial_{\rho},\partial_{c})
\partial^{c}\omega_{j} d\rho
+\epsilon_{3}(\partial_{t},\partial_{z},\partial_{c})\partial^{c}\omega_{j} dz\\
=&\rho e^{2\sigma}\partial^{z}\omega_{j} d\rho
-\rho e^{2\sigma}\partial^{\rho}\omega_{j} dz\\
=&\rho\partial_{z}\omega_{j} d\rho-\rho\partial_{\rho}\omega_{j} dz.
\end{split}
\end{align}
Therefore
\begin{align}
\begin{split}
d\left(f^{-1}f^{ij}\iota_{\partial_{t}}\star_{3} d\omega_{j}\right)=&
d\left(\rho f^{-1}f^{ij}\partial_{z}\omega_{j} d\rho
-\rho f^{-1}f^{ij}\partial_{\rho}\omega_{j} dz\right)\\
=&\left[\partial_{z}(\rho f^{-1}f^{ij}\partial_{z}\omega_{j})+
\partial_{\rho}(\rho f^{-1}f^{ij}\partial_{\rho}\omega_{j}) \right]dz\wedge d\rho\\
=&\operatorname{div}_{\mathbb{R}^3}\left(f^{-1} f^{ij}\nabla\omega_{j}\right)dz\wedge d\rho\\
=&0,
\end{split}
\end{align}
where the last equality arises from the second set of harmonic maps equations
in \eqref{eulerlagrange}.
Another way to obtain this calculation is to observe that
\begin{equation}\label{2}
f^{-1}f^{ij}\iota_{\partial_{t}}\star_{3} d\omega_{j}
=\ast\left(f^{-1} f^{ij} d\omega_{j}\right)
\end{equation}
and $\operatorname{div}_{\mathbb{R}^3}=\ast d \ast$,
where $\ast$ is the Hodge star operator with respect to $\delta$ on $\mathbb{R}^3$.
Lastly, it is clear from the equations involved that $\sigma$ and $v^i$ are bi-axisymmetric.

\subsection{Regularity} \label{conical}


The metric reconstructed above from a solution of the harmonic map problem is defined on $\R\times(\R^3\setminus\Gamma)\times U(1)$. In order to extend this metric across $\Gamma$, two steps must be completed as described below.

\subsubsection{Analytic regularity}

The metric coefficients in~\eqref{metric} must be shown to be smooth and even in $\rho$ up to $\Gamma$. This was achieved in the 4D case in~\cite{weinstein90}, and then extended to the non-axially symmetric case in~\cite{litian}. We believe that these methods are applicable to the 5D setting as well.

\subsubsection{Conical singularities}

In addition to the analytic regularity mentioned above, conical singularities on axis rods must be ruled out.  A conical singularity at a point on an axis rod $\Gamma_{l}$ is measured by the angle deficiency $\theta\in(-\infty,2\pi)$ given by
\begin{equation}
\frac{2\pi}{2\pi-\theta}=\lim_{\rho\rightarrow 0}\frac{2\pi\cdot\mathrm{Radius}}
{\mathrm{Circumference}}=\lim_{\rho\rightarrow 0}
\frac{\int_{0}^{\rho}\sqrt{f^{-1}e^{2\sigma}}}
{\sqrt{f_{ij}u^{i}u^{j}}}=\lim_{\rho\rightarrow 0}
\sqrt{\frac{\rho^{2}f^{-1}e^{2\sigma}}{f_{ij}u^{i}u^{j}}},
\end{equation}
where $u=(u^1,u^2)=(m_{l},n_{l})$ is the associated rod structure so that $u$ is in the kernel of $F$ at $\rho=0$. Absence of a conical singularity is characterized by a zero angle deficiency, that is when the right-hand side is 1; this is referred to as the balancing condition in Section \ref{sec1}. By a standard change of coordinates from polar to Cartesian, it is straightforward to check that once analytic regularity has been established this condition is necessary and sufficient for the metric to be extendable across the axis.

Let us denote by $b_l$ the value of $\log\left(\frac{2\pi}{2\pi-\theta}\right)$ on the axis rod $\Gamma_l$. Then, similarly to the 4D case, it can be shown from~\eqref{sigma} that $b_l$ is constant on $\Gamma_l$. Moreover asymptotic flatness implies that $b_l=0$ on the two semi-infinite axis rods, $l=1,L+1$. Thus it it remains to investigate the value of $b_l$ on the bounded axis rods. In the example from Figure~\ref{domain}, to show regularity would only require showing that $b_3=0$ so that the angle deficit vanishes on the disk rod, between points $S$ and $C$.

In 4D very few cases have been worked out, see~\cites{weinstein94,litian91}. In the current 5D setting, it is known that some configurations without any conical singularity do exist as mentioned in the introduction. We conjecture that many more such regular solutions can be found. These questions will be investigated in a future paper.

\appendix

\section{Topology of Corners }
\label{appendix} \setcounter{equation}{0}

\begin{proposition}
In a stationary bi-axisymmetric spacetime, consider a corner defined by a top rod of structure
$(m, n)$ and a bottom rod of structure $(r, s)$, with $\mathrm{gcd}(m, n) = \mathrm{gcd}(r, s) = 1$. If
\begin{equation}
\det \left(
    \begin{array}{cc}
      m & n \\
      r  & s
    \end{array}
  \right)
= \pm 1,
\end{equation}
then the spacetime is locally diffeomorphic to $\mathbb{R}^5$ near the corner.
\end{proposition}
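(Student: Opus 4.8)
The plan is to strip off the stationary $\R$-factor, then to normalize the two rod structures by a relabeling of the torus coordinates, reducing the corner to the one occurring at the origin of Minkowski space in \eqref{Minkowski}, a neighborhood of which is manifestly $\R^5$.

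First I would dispose of the $\R$-direction generated by $\partial_t$. A corner is flanked by two axis rods and no horizon rod, so $\partial_t$ is timelike and nowhere vanishing in a neighborhood $\mathcal U$ of the $\R$-orbit through the corner; since the domain of outer communication is globally hyperbolic and foliated by spacelike slices transverse to $\partial_t$, one obtains a local product $\mathcal U\cong\R_t\times W$, where $W$ is a four-dimensional neighborhood of the corner inside a spacelike slice. It therefore suffices to show $W\cong\R^4$.

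Now $W$ carries an effective $T^2=U(1)^2$ action (the Killing fields $\partial_{\phi^1},\partial_{\phi^2}$ extend across $\Gamma$, vanishing there) whose orbit space is a neighborhood of the corner point $p_l$ in the $\rho z$-half plane; after the polar/conformal change of variables of \eqref{Minkowski} this orbit space is a quarter-disk, its two edges being the traces of $\Gamma_l$ and $\Gamma_{l+1}$, on which the circle subgroups generated by $m\partial_{\phi^1}+n\partial_{\phi^2}$ and $r\partial_{\phi^1}+s\partial_{\phi^2}$ respectively collapse, while the whole $T^2$ fiber collapses at $p_l$. The determinant hypothesis enters exactly here: since $\det\begin{pmatrix} m & n \\ r & s\end{pmatrix}=\pm1$, the integer matrix $\begin{pmatrix} m & n \\ r & s\end{pmatrix}$ (or its inverse transpose) lies in $GL(2,\mathbb{Z})$, so there is a relabeling of the $2\pi$-periodic angular coordinates by new $2\pi$-periodic coordinates $(\bar\phi^1,\bar\phi^2)$ — an automorphism of $T^2$ — with respect to which the two rod structures become $(1,0)$ and $(0,1)$, precisely the rod data of the corner at the origin in the Minkowski metric \eqref{Minkowski}. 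By the analysis already carried out in Section \ref{sec4}, the link of the corner (the $T^2$-space fibered over a semicircle separating the two rods) is then the lens space $L(1,q)\cong S^3$, and $W$, being a cone neighborhood on its link, is the open cone on $S^3$, i.e. $\R^4$; equivalently, the Hopf coordinates displayed in \eqref{Minkowski} exhibit this neighborhood explicitly as a coordinate ball in $\R^4$. Combining with the first paragraph gives $\mathcal U\cong\R\times\R^4=\R^5$.

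The step I expect to require the most care is the classification input: that the germ at such a corner of a torus-fibered four-manifold is determined, up to (equivariant) diffeomorphism, by the pair of primitive sublattices that degenerate on the two faces, so that after the $GL(2,\mathbb{Z})$ normalization one may legitimately transport the explicit Minkowski model — in particular that the manifold is genuinely \emph{smooth} at $p_l$. It is exactly here that $\det=\pm1$ is essential: a determinant of absolute value $p>1$ would instead yield the orbifold model $\C^2/\mathbb{Z}_p$ at the corner, whereas the hypotheses $\gcd(m,n)=\gcd(r,s)=1$ (which are in any case forced by $\det=\pm1$) only ensure smoothness along the rods themselves. The remaining bookkeeping with the conformal map $\zeta\mapsto\zeta^2$ and the Hopf coordinates is routine.
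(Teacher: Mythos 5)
Your proposal is correct and follows essentially the same route as the paper's Appendix: a unimodular ($GL(2,\mathbb{Z})$) relabeling of the $2\pi$-periodic torus coordinates normalizes the two rod structures, after which the link of the corner is identified as $L(1,\nu)\cong S^3$ and the neighborhood as a cone on $S^3$, i.e.\ $\R^4$ (times the time factor). The only cosmetic differences are that you normalize directly to $(1,0),(0,1)$ via $(M^t)^{-1}$ while the paper constructs the matrix by Bezout's lemma and lands on $(1,\nu),(0,1)$, and that you are somewhat more explicit about the smoothness-at-the-vertex issue, which the paper also treats at the same informal level.
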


\begin{proof}
Let
\begin{equation}
V = m \partial_{\phi^1} + n \partial_{\phi^2}, \quad\quad\quad W = r \partial_{\phi^1} + s \partial_{\phi^2},
\end{equation}
be the Killing fields which vanish at the top rod and bottom rod, respectively.
The first goal is to show that there exists a change of variables $(\overline{\phi}^1, \overline{\phi}^2)$, which are also $2\pi$-periodic, such that the Killing fields take the form
\begin{equation}
V =  \partial_{\overline{\phi}^1} + \nu \partial_{\overline{\phi}^2},\quad\quad\quad
W =  \partial_{\overline{\phi}^2},
\end{equation}
for some integer $\nu$. The coordinate transformation may be realized by a $2 \times 2$ matrix having integer entries
\begin{equation}
A = \left(
    \begin{array}{cc}
      a & b \\
      c  & d
    \end{array}
  \right),
\end{equation}
with $\det A = -1$. Namely
\begin{align}
\begin{split}
\overline{\phi}^1 = & a \phi^1 + b \phi^2,\quad\quad\quad\quad \quad
\phi^1 =  -d  \overline{\phi}^1+ b \overline{\phi}^2,\\
\overline{\phi}^2 = & c \phi^1 + d \phi^2,\quad\quad\quad\quad\quad
\phi^2 = c \overline{\phi}^1  -a \overline{\phi}^2.
\end{split}
\end{align}

To see that the new variables are $2\pi$-periodic consider the translation
$\overline{\phi}^1 \mapsto \overline{\phi}^1 + 2 \pi$, which corresponds to
\begin{equation}
\phi^1 \mapsto \phi^1 - 2 \pi d \quad\text{ }\text{ and  }\text{ }\quad \phi^2 \mapsto \phi^2 + 2 \pi c.
\end{equation}
Since $c,d\in\mathbb{Z}$ and $\phi^i$ are $2\pi$-periodic, it follows that $\overline{\phi}^1$ has a period less than or equal to $2 \pi$. If the period is $2  \alpha \pi$ for $0 < \alpha < 1$,   then the translation $\overline{\phi}^1 \mapsto \overline{\phi}^1 + 2 \pi\alpha$ would map to the same points, and as a consequence the shifts
\begin{equation}
\phi^1 \mapsto \phi^1 - 2 \pi \alpha d \quad\text{ }\text{ and  }\text{ }\quad \phi^2 \mapsto \phi^2 + 2 \pi\alpha c
\end{equation}
would give the identity map. This implies that $\alpha  d$ and $ \alpha  c $ are integers, which is impossible since
\begin{equation}
1 > \alpha = \alpha  |\det A| = | a (\alpha d) - b (\alpha c)| \neq 0.
\end{equation}
Similar arguments show that $\overline{\phi}^2$ has period $2 \pi$.

To find the matrix $A$ observe that
\begin{equation}
V  = (ma+nb)  \partial_{\overline{\phi}^1} + (mc+nd) \partial_{\overline{\phi}^2},\quad\quad
 W  = (ra+sb)  \partial_{\overline{\phi}^1} + (rc+sd) \partial_{\overline{\phi}^2}.
\end{equation}
Thus we aim to solve
\begin{equation}
\left(
    \begin{array}{c}
      1  \\
      \nu
    \end{array}
  \right)
  =
   A \left(
    \begin{array}{c}
      m  \\
      n
    \end{array}
  \right)
  =
   \left(
    \begin{array}{c}
      ma+nb  \\
      mc+nd
    \end{array}
  \right)
\end{equation}
and
\begin{equation}
\left(
    \begin{array}{c}
      0  \\
      1
    \end{array}
  \right)
  =
   A \left(
    \begin{array}{c}
      r  \\
      s
    \end{array}
  \right)
  =
   \left(
    \begin{array}{c}
      ra+sb  \\
      rc+sd
    \end{array}
  \right).
\end{equation}
Consider the second pair of equations
\begin{align}
\begin{split}
ra+sb = & 0, \\
rc + sd = & 1.
\end{split}
\end{align}
Choose $a = -s$ and $b = r$ to solve the first equation. The integers $c$ and $d$ may be found using Bezout's Lemma (Lemma \ref{Bezout}), which gives a solution satisfying $|c| \leq |s|$ and $|d| \leq |r|$, resulting in
\begin{equation}
A = \left(
    \begin{array}{cc}
      -s & r \\
      c & d
    \end{array}
  \right)
\end{equation}
with $\det A = -sd - cr =-1$.

We now have integers $\mu$ and $\nu$ defined by
\begin{equation}
   A \left(
    \begin{array}{c}
      m  \\
      n
    \end{array}
  \right)
  =
   \left(
    \begin{array}{c}
      \mu  \\
      \nu
    \end{array}
  \right).
\end{equation}
It turns out that $\mu=1$. To see this note that
\begin{equation}
A  \left(
    \begin{array}{cc}
      m & r \\
      n & s
    \end{array}
  \right) =
   \left(
    \begin{array}{cc}
      \mu & 0 \\
      \nu & 1
    \end{array}
  \right),
\end{equation}
so $\det A = -1$ together with the hypothesis of this proposition produces
\begin{equation}
\mu = \det A \det \left(
    \begin{array}{cc}
      m & r \\
      n & s
    \end{array}
  \right)  = \mp 1.
\end{equation}
If $\mu = -1$, simply choose
\begin{equation}
A = \left(
    \begin{array}{cc}
      s & -r \\
      c & d
    \end{array}
  \right)
\end{equation}
to achieve $\mu = 1$ if necessary.

In the new coordinate system the corner is defined by the rod structures
$(1, \nu)$ and $(0, 1)$. Then as described in Section \ref{sec4}, any simple
curve in the 2-dimensional orbit space which encircles the corner and connects the top rod to the bottom rod represents a lens space $L(1,\nu)\cong S^3$.
Therefore by foliating a neighborhood of the corner in the orbit space by such curves, we find that a punctured neighborhood of the corner in a time slice has topology $\mathbb{R}\times S^3 \cong\mathbb{R}^4 \setminus\{0\}$. It follows that there is a spacetime neighborhood of the corner which is diffeomorphic to $\mathbb{R}^5$.
\end{proof}

\end{document}